\algnewcommand\Input{\item[\textbf{Input:}]}
\algnewcommand\Output{\item[\textbf{Output:}]}
\newtheorem{thm}{Theorem}[section]
\newtheorem{lem}[thm]{Lemma}
\newtheorem{cor}[thm]{Corollary}
\newtheorem{prop}[thm]{Proposition}
\newtheorem{rem}[thm]{Remark}
\theoremstyle{definition}
\newtheorem{definition}{Definition}
\theoremstyle{remark}
\newtheorem{example}{Example}
\newcommand{\gO}{\mathcal{O}}
\newcommand{\gOt}{\tilde{\mathcal{O}}}
\newcommand{\gOeps}{\gO_\epsilon}
\newcommand{\cf}{\mathbb{F}}
\newcommand{\cfq}{{\mathbb{F}_q}}
\newcommand{\cfqs}{{\mathbb{F}_{q^s}}}
\newcommand{\gz}{\mathbb{Z}}
\newcommand{\gr}{\mathcal{R}}
\newcommand{\grext}{\mathcal{R}_{\text{ext}}}
\newcommand{\mul}{\mathsf{M}}
\newcommand{\imul}{\mathsf{I}}
\newcommand{\norm}[1]{\|#1\|_\infty}
\newcommand{\rp}{\textsc{RandomPrime}\xspace}
\newcommand{\coef}{\textsc{LeadingCoefficients}\xspace}
\newcommand{\scoef}{\textsc{SparseLeadingCoefficients}\xspace}
\newcommand{\verif}{\textsc{SparseVerification}\xspace}
\newcommand{\eval}{\textsc{ModularEvaluation}\xspace}
\newcommand{\seval}{\textsc{SparseModularEvaluation}\xspace}
\newcommand{\verify}{\textsc{ModularVerification}\xspace}
\newcommand{\verifyZ}{\textsc{ModularVerificationOverZ}\xspace}
\newcommand{\kaminski}{\textsc{KaminskiVerification}\xspace}
\DeclareMathOperator{\supp}{supp}
\newcommand\F[1][]{F^{\smash{[#1]}}}
\newcommand\f[1][]{f^{\smash{[#1]}}}
\newcommand\quo{\ \text{quo}\ }
\DeclareMathOperator\lcm{lcm}
\begin{document}
\title{Polynomial modular product verification and its implications}

\author{Pascal Giorgi \hfill  Bruno Grenet\hfill  Armelle Perret du Cray\\
    LIRMM, Univ. Montpellier, CNRS\\
    Montpellier, France\\
    \{pascal.giorgi,bruno.grenet,armelle.perret-du-cray\}@lirmm.fr}

\maketitle

\begin{abstract}
  Polynomial multiplication is known to have quasi-linear complexity in both the dense and the sparse cases. 
  Yet no truly linear algorithm has been given in any case for the problem, and it is not clear whether it is even possible. 
  This leaves room for a better algorithm for the simpler problem of verifying a polynomial product. 
  While finding
  deterministic methods seems out of reach, there exist probabilistic algorithms for the problem that are optimal in number of
  algebraic operations.  

  We study the generalization of the problem to the verification of a polynomial product modulo a sparse divisor. We investigate 
  its bit complexity for both dense and sparse multiplicands. In particular, we are able to show the primacy of the verification 
  over modular multiplication when the divisor has a constant sparsity and a second highest-degree monomial that is not too large.
  We use these results to obtain new bounds on the bit complexity of the standard polynomial multiplication verification. In 
  particular, we provide optimal algorithms in the bit complexity model in the dense case by improving a result of Kaminski and 
  develop the first quasi-optimal algorithm for verifying sparse polynomial product.
\end{abstract}

\section{Introduction}

Polynomials are one of the most basic objects in computer algebra and the study of fast polynomial operations remains a very
challenging task. Polynomials can be represented using either the dense representation, that stores all the coefficients in a
vector, or the more compact sparse representation, that only stores nonzero monomials. Depending on which representation is chosen
the problems might have a very different flavor leading to two very separate lines of research.

Polynomial multiplication is the most noticeable problem that attracted a lot of attention since many decades, culminating
nowadays with quasi-optimal algorithms \cite{cantor1991, harvey:polmul-nlogn}. Although such algorithms are really efficient in
theory and in practice, there are not yet optimal and they often rely on complex approaches that can be error prone. Therefore,
looking for rather simple procedure to verify the correctness of polynomial products is of great interest. From a theoretical
perspective, the goal is then to provide asymptotically faster algorithms than those for multiplying polynomials, ultimately
seeking for an optimal algorithm. In practice the objective is barely to find simpler and faster procedures that reveal easier to
trust.

In this work, we intend to present the most recent advances in verifying polynomial products in both the dense and sparse case, to
extend such results to either optimal algorithms or to more reliable solutions in practice. Finally, we extend the problem to some
specific modular multiplication of polynomials which seems to not having been explored yet.

\paragraph{Dense polynomial multiplication}
We know from the early 60's that dense polynomial arithmetic is subquadratic, and that it can even be quasi-linear when the
so-called FFT applies \cite{cooley1965}. It has been more than two decades later that Cantor and Kaltofen~\cite{cantor1991} provide a
quasi-linear algorithm without any assumption on the polynomial algebra. 
They show that two dense
polynomials of degree less than $n$ over an algebra $\mathcal A$ can be multiplied with $\gO(n \log n \log \log n)$ operations in $\mathcal A$.  In
regards to the bit complexity model, the operations in the base ring $\mathcal A$ cannot count $\gO(1)$ anymore, and the previous
algorithms may not lead to the best complexity estimates for specific domains such as $\mathcal A=\cfq$ or $\mathcal A =\gz$.  There, the use of
Kronecker substitution together with fast integer multiplication turns out to be the best alternative \cite[Section
8.4]{Gathen:2013}. It has been showed by Harvey and van~der~Hoeven in \cite{vdH:cyclomult} that one can reach a bit complexity of
$\gO(n \log q \log(n \log q)4^{\log^*(n)})$ for polynomial multiplication over $\cfq[X]$ for any prime field $\cfq$.  We shall
mention that very recently, such complexity have been further improved to $\gO(n \log q \log(n \log q))$ bit operations
\cite{harvey:polmul-nlogn} under some mild hypothesis. For polynomials with integer coefficients bounded by an integer $C$, the
complexity falls down to multiplying two integers of bit length $\gO(n(\log n+\log C))$ which gives
$\gO(n(\log^2n+\log C\log n+\log C \log \log C)=\gOt(n\log C)$\footnote{Here, and throughout the article, $\gOt(f(n))$ denotes
  $\gO(f(n)\log^k(f(n)))$ for some constant $k>0$.}  when we assume that $n$-bits integer multiplication complexity is
$\imul(n)=\gO(n \log n)$ \cite{harvey2019}.  For clarity in the presentation, we will often use $\mul(n)$ as the number of
operations in $\gr$ required to multiply two dense polynomials of size $n$, while $\mul_q(n)$ will denote the bit complexity for
such multiplication over a prime field $\cfq$.

\paragraph{Sparse polynomial multiplication} In the sparse representation, a polynomial $F = \sum_{i=0}^n f_i X^i \in\gr[X]$
is expressed as a list of pairs $(e_i, f_{e_i})$ such that all the $f_{e_i}$ are nonzero.  We denote by $\#F$ the \emph{sparsity}
of the polynomial $F$ which corresponds to its number of nonzero coefficients.  Let $F$ be a polynomial of degree $n$, and $\log C$ be a
bound on bit length of its coefficients.  Then, the size of the sparse representation of $F$ is $\gO(\#F(\log n+\log C))$ bits.
Contrary to the dense case, note that \emph{fast} algorithms for sparse polynomials must have a \mbox{(poly-)}logarithmic dependency on
the degree, and that the size of the output does not exclusively depend on the size of the inputs.  Indeed, the product of two
polynomials $F$ and $G$ has at most $\#F\#G$ nonzero coefficients. But it may have as few as $2$ nonzero coefficients, as shown by
the following example.

\begin{example}
  Let $F=X^{14}+2X^7 +2$, $G=3X^{13}+5X^8+3$ and $H=X^{14}- 2X^7 +2$. Then
  $FG= 3X^{27}+5X^{22}+6X^{20}+10X^{15}+3X^{14}+6X^{13}+10X^8+6X^7+6$ has nine terms, while $FH=X^{28}+4$ has only two.
\end{example}

Another difference with the dense case is that studying the complexity in a pure algebraic model remains meaningless, unless you
assume a transdichotomous model on the degree, meaning that the integer computation on the exponents is always $\gO(1)$
\cite{ColeHariharan, Nakos2020}.  The classical approach for computing the product of two polynomials of sparsity $T$ is to
generate all the $T^2$ possible monomials, and to sort them and merge those of equal degree to collect the monomials of the result. Using
radix sort, this algorithm takes for instance $\gO(T^2 (\imul(\log C) +\log n))$ bit operations over $\gz$ and it exhibits a $T^2$ factor in the space
complexity, whatever the number of terms in the result. 
Many improvements have been proposed to reduce this space complexity, to
extend the approach to multivariate polynomials, and to provide fast implementations in practice~\cite{Johnson74,
  MonaganPearce2009, MonaganPearce2011}.  Yet, none of these results reduces the $T^2$ factor in the time complexity. In general,
no complexity improvement is expected as the output polynomial may have as many as $T^2$ nonzero coefficients.  However, this
number of nonzero coefficients can be overestimated, giving the opportunity for output-sensitive algorithms.  Such algorithms have
first been proposed for special cases. Notably, when the output size is known to be small due to sufficiently structured
inputs~\cite{Roche2011}, especially in the multivariate case~\cite{vdHLec2012, vdHLebSch2013}, or when the support of the output
is known in advance~\cite{vdHLec2013}.

Output-sensitive multiplication algorithms try to take into account the two reasons that can decrease the sparsity of the product. The first
one is exponent collisions, while the second one occurs when these collisions imply some coefficient cancellations. The exponent
collision is captured by the \emph{sumset} of the exponents of $F = \sum_{i=1}^T f_i X^{\alpha_i}$ and
$G = \sum_{j=1}^T g_j X^{\beta_j}$, that is $\{\alpha_i+\beta_j:1\le i,j\le T\}$. Arnold and Roche call this set the
\emph{structural support} of the product $FG$ and its size the \emph{structural sparsity}~\cite{roche2015}. If $H = FG$, then the
structural sparsity $S$ of the product $FG$ satisfies $2\le \#H \le S\le T^2$. Observe that although $\#H$ and $S$ can be close,
their difference can reach $\gO(T^2)$ as shown by the next example.

\begin{example}\label{ex:structuralsupport}
  Let $F = \sum_{i=0}^{T-1} X^i$, $G = \sum_{i=0}^{T-1} (X^{iT+1}-X^{iT})$ and $H = FG$. We have $\#F = T$, $\#G = 2T$ and the structural
  sparsity of $FG$ is $T^2+1$ while $H = X^{T^2}-1$ has sparsity $2$.
\end{example}

For polynomials with non-negative integer coefficients, no coefficient cancellation can occur and Cole and Hariharan describe a
multiplication algorithm requiring $\gOt(S\log^2 n)$ operations in the RAM model with $\gO(\log(Cn))$ word
size~\cite{ColeHariharan}, where $\log(C)$ bounds the bitsize of the coefficients.  Arnold and Roche improve this complexity to
$\gOt(S\log n + \#H \log C)$ bit operations for polynomials with both positive and negative integer coefficients
~\cite{roche2015}. A recent algorithm of Nakos avoids the dependency on the structural sparsity for the case of integer
polynomials~\cite{Nakos2020}, using the same word RAM model as Cole and Hariharan.  Unfortunately, the bit complexity of this
algorithm, $\gOt((T\log +\#H\log^2 n) \log (Cn)+\log^3 n)$, is not quasi-linear.  More recently, we propose in
\cite{giorgi2020:sparsemul} the first quasi-optimal algorithm for sparse polynomial multiplication yielding a bit complexity of
$\gOt(T'(\log n+\log C))$ where $T'=\max(T,\#H)$. More precisely, taking $k=T'(\log n + \log c)$ which is the bit length of the
input and output, we are able to reach a bit complexity of $\gO(k \log^2 k \log^2 T (\log T+\log\log k))$.

\paragraph{Verification of polynomial products}  Considering the non-optimality of polynomial multiplication in both
representations, it is quite natural to ask whether it is rather a simple task or not to verify an instance of the problem. More
formally, given three polynomials $F,G$ and $H$, can we assert that $H$ is equal to the product of $F$ and $G$ in less operations
than computing the product itself? Furthermore, we want such procedure to be as simple as possible and to not rely on polynomial
multiplication if possible. Unfortunately, doing this with a deterministic procedure is not yet known, but using probabilistic
algorithms lead to positive answers as shown by several papers \cite{Demillo:1978,
  Schwartz:1980,Zippel:1979,giorgi2018}. 
Here and henceforth, polynomials are assumed to have coefficients in an integral domain $\gr$, rather than in a more general
algebra.

For dense polynomials this verification amounts to choosing a random element $\alpha$ in a finite subset of $\gr$ and to assert that
$H(\alpha) - F(\alpha)G(\alpha)$ is zero. In that case, the complexity for the verification becomes $\gO(n)$ operations in $\gr$,
which is optimal. Of course, the probability of error is less than one as soon as $\gr$ has more than $n$ elements.  If this not
the case, for instance when $\gr=\cf_2$, it is not desirable to choose $\alpha$ in a sufficient large extension of $\gr$ to have
$\gO(n)$ elements. The latter would require an extension of degree $\gO(\log n)$ and it would raise the complexity to
$\gO(n \mul(\log n))$. This is actually larger than the complexity $\mul(n)$ of computing the product. In \cite{Kaminski89}, Gamin's solved the latter problem by replacing the
evaluation, that corresponds to computing within $\gr[X]/(X-\alpha)$, by doing a polynomial multiplication within $\gr[X]/(X^i-1)$
for a random integer $i<n$. More precisely, by choosing $i = \gO(n^{1-e})$ for some $0<e<1/2$, his verification algorithm runs in $\gO(n)$
operations in $\gr$, whatever its size, with a probability of error bounded away from one. While the result sounds optimal from a
theoretical perspective, it might be mitigated for practical applications as it verifies polynomial multiplication of degree $n$
by doing multiplication of polynomials of degree $\gO(n^{1-e})$.

All these results remain valid under the bit complexity model, but the obtained complexity might not be optimal.  For polynomials
over $\cfq[X]$, both approaches using products in $\cfq[X]/(X-\alpha)$ or in $\cfq[X]/(X^i-1)$ lead to a bit complexity of
$\gO(n\imul(\log q))= \gO(n\log q \log \log q)$. 
While being non optimal, they remain however asymptotically faster than the
computation of the initial products by a factor $\log n/\log \log q$. 
Actually, Kaminski's approach has a better bit complexity than the standard method and can even yield a linear bit complexity in favorable cases.
For polynomials over $\gz[X]$, the result is more surprising
as it is possible to reach an optimal bit complexity of $\gO(n \log C)$ for any input. This result should be attributed to Kaminski, as he
provided in \cite{Kaminski89} all the necessary materials while not noticing the result explicitly. It seems surprising, but we
haven't found any references advertising such result. Thus, we propose to provide the description of those optimal verifications
of polynomial products.\medskip

For sparse polynomials, the verification of products remains less studied. It is misleading to think that using polynomial
evaluation is satisfactory. Assuming that only $T$ coefficients are nonzero, sparse polynomial evaluation is not quasi-linear in
the input size $\gO(T(\log n+\log C))$.  Indeed, computing $\alpha^n$ requires $\gO(\log n)$ operations in $\gr$ which implies a
complexity of $\gO(T\log n)$ operations in $\gr$ when applied to the $T$ nonzero monomials.  Since one needs to use a subset of
$\gr$ of size at least $n$ to ensure a nonzero probability of success, this implies that the bit complexity is at least $\gO(T\log^2
n)$. Using similar ideas as Kaminski's~\cite{Kaminski89}, we proposed recently in \cite{giorgi2020:sparsemul} to verify
sparse polynomial identities by doing the computation in $\gr[X]/(X^p-1)$ for a random prime $p$. In particular, we prove that
choosing $p=\gO(T^2 \log n)$ ensures that $(H-FG) \bmod (X^p-1) = 0$ implies that
$H-FG=0$ with good probability and that the computation can be done in quasi-linear time with $\gOt(T(\log n+\log C))$ bit operations.\medskip

Another important measure for randomized verification algorithms is the probability of failure. All the known verification algorithms are True-biased one-sided Monte Carlo algorithms. This means that they always return True if $H = FG$ and return False with probability at least $1-\epsilon$ otherwise. Given an algorithm with error probability at most $\epsilon$, we can attain any smaller probability of error $\tau$ by repeating $\gO(\frac{\log\tau}{\log\epsilon})$ rounds of the algorithm. This shows that the complexity of the algorithm is actually dependent on the target error probability. In our results, we always explicitly indicate this dependency. 
We can distinguish several \emph{regimes} of values for the error probability: the constant regime $\epsilon = \gO(1)$, the inverse polynomial regime $\epsilon = 1/n^{\gO(1)}$ and the inverse exponential regime $\epsilon = 1/2^{\gO(n)}$, where $n$ is the input degree. Given an algorithm with constant error probability, one can attain any smaller constant probability using a constant number of rounds. This keeps the same asymptotic complexity. The same is true for two probabilities inside the inverse polynomial regime. To get to the inverse polynomial regime from the constant regime, the number of rounds must be $\gO(\log n)$, slightly increasing the asymptotic complexity. The inverse exponential regime can then be attained using a polynomial number of rounds. In our context of linear and quasi-linear algorithms, the inverse exponential regime is not attainable in general. The best known verification algorithms have linear bit complexity in the inverse polynomial regime.

\paragraph{Contributions} As an extension of our prior work \cite{giorgi2020:sparsemul}, we propose to study more generally the
verification of polynomial multiplication in $\gr[X]/P$ where $P$ is a monic sparse polynomial. In the dense case, this
generalizes a work from one of the authors on the probabilistic verification on polynomial middle product~\cite{giorgi2018}.  By
reusing our modular product's verification, we show that we can address the difficulty of Kaminski's approach that verifies
polynomial products using products of roughly the same degree, more than $\sqrt n$. In particular, we show that we can avoid the
dependency on polynomial multiplication in every 
cases. 
When dealing with finite field arithmetic it is quite common
to rely on irreducible polynomials that are sparse \cite{MullenPanario:HandbookFiniteFields:2013}. Therefore, having the
possibility to verify multiplication over finite fields in less operations than computing the product seems of great interest. In
particular, we show that the verification of products in $\cfqs$ can be done in $\gO(s\#P)$ operations in $\cfq$ where
$P\in\cfq[X]$ is the monic irreducible polynomial of degree $s$ used to define $\cfqs$. Clearly for irreducible polynomial with
constant sparsity, as it is often the case over $\cf_2$~\cite{Golomb82} and more generally
$\cfq$~\cite{MullenPanario:HandbookFiniteFields:2013}, this offers an optimal verification procedure.  Finally, for sparse
polynomials, this work extends our prior result for $\gr[X]/(X^p-1)$ in \cite{giorgi2020:sparsemul} that was of great importance to
achieve the first quasi-linear time algorithm for sparse polynomial multiplication. We hope our new insight on this problem will
leverage other fast algorithms for sparse polynomial operations, especially for the division problem~\cite{Roche2018}. \medskip

All our techniques and results extend to the more general problem of verifying a polynomial identity of the form $\sum_i F_iG_i \bmod P = 0$, where the sum may have an arbitrary number of terms. It would be interesting to be able to extend these results to more general polynomial identities. As a very simple example, given $F_1$, $F_2$, $F_3$, $H$ and $P\in\gr[X]$ for some integral domain $\gr$, what is the complexity of the verification of $H = F_1F_2F_3\bmod P$? Obviously if the inputs are dense polynomials, the computation of $F_1F_2F_3\bmod P$ can be done in quasi-linear time. But the question is to design an algorithm than runs faster than performing the computation. In the sparse case, the computation may increase the input size quite a lot and even a quasi-linear time algorithm is lacking. More generally, the problem is to verify identities of the form $\sum_i\prod_j\sum_k\dotsb\prod_\ell F_{i,j,k,\dotsc,\ell} \bmod P = 0$. This problem can be phrased as a \emph{Modular Polynomial Identity Testing} (Modular PIT) problem. The standard Polynomial Identity Testing (PIT) problem takes as input an arithmetic circuit, or equivalently a straight-line program, and consists in deciding whether the polynomial it represents is zero. In this extension, a polynomial $P$ is also given as input and the question is whether the polynomial represented by the circuit is divisible by $P$. The standard PIT problem admits polynomial-time, and even quasi-linear-time, randomized algorithm. A very important open question is whether it also admits a polynomial-time \emph{deterministic} algorithm. For the Modular PIT problem, the question is already to design efficient \emph{randomized} algorithms. If the dense case, the challenge is to obtain faster algorithms than performing the product, ideally linear-time algorithms. In the sparse case, it is not even known how to solve the problem in randomized polynomial-time. Our results may be seen as a first step towards this goal.

\paragraph{Outline}
We start our work in Section~\ref{sec:prelim} by introducing all the technical materials that serve to demonstrate our main
results.  Then, Section~\ref{sec:eval} is devoted to the study of the evaluation of modular multiplication.  In particular, we
provide algorithms and their thorough analysis for evaluating $(FG)\bmod P$ on $\alpha$ without computing $FG\bmod P$.  The
results of that section serve to derive efficient algorithms in Section~\ref{sec:verif} for the verification of modular
multiplication of polynomials.  Finally, we present in Section~\ref{sec:product} the more general results on the verification of
classical polynomial multiplication. In particular, we extend the work of Kaminski~\cite{Kaminski89} for the dense case with
a thorough analysis of its bit complexity that enables to reach optimal verification. We also give a more detailed presentation
of our first quasi-optimal algorithm for the sparse case that appears in \cite{giorgi2020:sparsemul}.

\section{Preliminaries}\label{sec:prelim}

\subsection{Notations and complexities}

Let $Q\in\gr[X]$ be a degree-$n$ polynomial. We denote its coefficient of degree $i$ by $q_i$. The \emph{sparsity} of $Q$ is its
number of nonzero monomials and is denoted by $\#Q$. The \emph{support} of $Q$ is the set $\supp(Q) = \{i : q_i\neq 0\}$. If $Q$
is a polynomial over $\gz$, we denote by $\norm{Q}$ its norm, defined as $\max_{0\le i\le n} |q_i|$.
We denote by $\log(\cdot)$ the base-$2$ logarithm and by $\ln(\cdot)$ the natural logarithm. We also use $\log_b(\cdot)$ to denote the base-$b$ logarithm defined by $\log_b(x) = \frac{\log x}{\log b}$.

We work in this paper with dense and sparse polynomials. A dense polynomial is represented as the vector of its coefficients,
which has size $n+1$ for a degree-$n$ polynomial. A sparse polynomial is represented by the list of its nonzero monomials. We
consider that we work, for sparse polynomials, with an abstract structure of \emph{sparse vector}. In practice, this can be
implemented by several data structures, depending on the operations that need to be performed. A standard choice in sparse
polynomial arithmetic is the use of heaps~\cite{Johnson74,MonaganPearce2009,MonaganPearce2011}. To get better complexities, we
might resorting to van Emde Boas Trees~\cite[Chapter 20]{CormenBook} as in Section~\ref{ssec:sparseval}.We also use sparse vectors in some
algorithms to represent data which are not directly polynomials. The underlying data structure is the same as for sparse
polynomials.

\paragraph{Complexity of dense polynomial multiplications}
We denote by $\mul(n)$ the number of ring operations needed to compute a product of degree-$n$ dense polynomials over an integral
domain. We can take $\mul(n)=\gO(n\log n\log\log n)$~\cite{cantor1991}. We denote by $\mul_{q}(n)$ the bit complexity of the
multiplication of two degree-$n$ dense polynomials over a finite field $\cfq$. The best known bounds on $\mul_q(n)$ are
$\gO(n\log q\log(n\log q)\,4^{\log^* (n\log q)})$~\cite{vdH:cyclomult} unconditionally and $\gO(n\log q\log(n\log q))$ assuming
the existence of some Linnik constant~\cite{harvey:polmul-nlogn}.  To simplify the notation, we assume the existence of this
Linnik constant.  The cost of multiplying two elements in an extension field $\cf_{q^d}$ is the cost of degree-$d$ polynomial
multiplications, that is $\gO(\mul_q(d))$.  Let $F$, $G\in \gz[X]$ of degree $n$ and norm $C$. Their product has norm at most
$nC^2$. To compute $FG$, we can evaluate both $F$ and $G$ on some power of $2$ larger than $nC^2$, multiply the resulting integers
(that have size $n\log(nC^2)$), and read the coefficients of $FG$ directly on the output integer. Let $\imul(m) = \gO(m\log m)$
denote the bit complexity of multiplying two $m$-bit integers~\cite{harvey2019}. Then the bit complexity of multiplying $F$ and
$G$ is $\imul(n\log(nC^2)) = \gO(n\log^2n+n\log n\log C + n\log C\log\log C)$.
 
\paragraph{Complexity of polynomial evaluation}
The evaluation of a dense degree-$n$ polynomial $F\in\gr[X]$ on a point $\alpha\in\gr$ requires $\gO(n)$ operations in $\gr$ using
for instance Horner scheme. If $\alpha$ lies in an extension ring $\grext$ of $\gr$, the evaluation requires $\gO(n)$ operations
in $\grext$. If $F$ has coefficients in a finite field $\cfq$, this translates directly to a linear number of operations in
$\cfq$. Now if $F$ has coefficients in $\gz$, one must take into account the growth of the integers during the computation. Using
a divide-and-conquer approach to use balanced integer multiplications, the cost of the evaluation is
$\gO(\imul(n\log C)\log(n\log C))$ bit operations where $C = \max(|\alpha|,\norm{F})$.  We note that this cost is quasi-linear in
the worst case output size while using Horner scheme would have been quadratic.

To evaluate a sparse polynomial $F\in\gr[X]$ on $\alpha\in\gr$, we compute the relevant powers of $\alpha$ and then perform $\#F$
multiplications and additions in $\gr$. Computing each power independently yields $\gO(\#F\log n)$ operations in
$\gr$. Using simultaneous exponentiation~\cite{Yao1976}, the cost is reduced to $\gO(\log n+\#F\log n/\log\log n)$ operations in
$\gr$. Again, this directly translates to operations in $\cfq$ if $\gr=\cfq$. For polynomials with integer coefficients, the
growth is much more severe than in the dense case. Indeed, $\alpha^n$ has $\gO(n\log|\alpha|)$ bits. This implies that the bit
complexity is at least linear in $n$ which is exponentially larger than the input size. The cost is actually not better than with
dense polynomials.

\subsection{Bounds on polynomial products and modular reductions}

Reducing a polynomial modulo $P$ changes its norm and sparsity. We provide bounds on these growths. They rely on the \emph{gap} between the degree of $P$ and its \emph{second degree}, that is the degree of its second highest-degree monomial. 

\begin{definition} Let $P = X^n + \sum_{i=0}^k p_i X^i$ for $k < n$ and $p_k\neq 0$. The \emph{second degree} of $P$ is the integer $k$. The \emph{gap parameter} $\gamma$ of $P$ is $\gamma = \frac{1}{n}(n-k)$. 
\end{definition}

In particular, the second degree of $P$ is $(1-\gamma)n$. The parameter $\gamma$ is between $0$ and $1$. If $\gamma$ is close to
$0$, the polynomial actually has no gap, while $\gamma = 1$ corresponds to a binomial $aX^n+b$. We note that given this
definition, $\frac{1}{\gamma}$ is always upper bounded by $n$. Polynomials with a large gap are also known as \emph{sedimentary}
polynomials~\cite{MullenPanario:HandbookFiniteFields:2013}. A polynomial is said $t$-sedimentary if it is of the form $X^n + H$
where $\deg(H) = t$. A $t$-sedimentary polynomial is a polynomial with gap parameter $(n-t)/n$ and conversely a monic polynomial
with a gap parameter $\gamma$ is $(1-\gamma)n$-sedimentary.

The norm of the product of two polynomials is classically related to their norms and degrees. This can be slightly refined using the
sparsities instead of the degrees.

\begin{lem}\label{lem:prodbounds}
Let $F$ and $G$ be two polynomials over $\gz$. Then
$\norm{FG} \le \min(\#F, \#G)\norm F\norm G$. 
\end{lem}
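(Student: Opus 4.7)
The plan is to expand the product coefficient by coefficient and bound the sum directly by observing that only a small number of terms can contribute to each coefficient.

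Write $F = \sum_i f_i X^i$ and $G = \sum_j g_j X^j$, so that the coefficient of $X^k$ in the product is $c_k = \sum_{i+j=k} f_i g_j$. The first step is to bound $|c_k|$ by the triangle inequality: $|c_k| \le \sum_{i+j=k} |f_i||g_j|$, where the sum effectively only runs over pairs $(i,j)$ with both $f_i$ and $g_j$ nonzero.

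The key observation is that the number of nonzero summands contributing to $c_k$ is bounded by $\min(\#F, \#G)$. Indeed, each nonzero $f_i$ determines at most one partner $j = k - i$, so there are at most $\#F$ nonzero summands; symmetrically, each nonzero $g_j$ determines at most one $i = k - j$, giving a bound of $\#G$. Bounding each $|f_i||g_j|$ by $\norm{F}\norm{G}$ and summing then yields $|c_k| \le \min(\#F, \#G) \norm{F} \norm{G}$, and taking the maximum over $k$ gives the claim.

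There is no genuine obstacle here: the argument is a one-line sharpening of the usual $\norm{FG} \le (1+\min(\deg F, \deg G))\norm{F}\norm{G}$ bound, where the degree bound is replaced by the sparsity bound coming from the pigeonhole observation above.
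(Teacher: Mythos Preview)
Your proof is correct and follows essentially the same approach as the paper: expand the coefficient $h_k=\sum_{i+j=k}f_ig_j$, note that at most $\min(\#F,\#G)$ summands are nonzero, and bound each by $\norm{F}\norm{G}$. You even spell out the ``each $f_i$ determines at most one $j$'' justification that the paper leaves implicit.
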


\begin{proof}
Let $H = \sum_k h_k X^k$ be the product of $F$ and $G$. Then $h_k = \sum_{i+j = k} f_i g_j$. Let $T = \min(\#F,\#G)$. Then the sum to define $h_k$ has size of most $T$. Since $|f_i|\le\norm F$ for all $i$ and $|g_j|\le\norm G$ for all $j$ by definition, $|h_k|\le T\norm F\norm G$, whence the result.
\end{proof}

The modular reduction of polynomials has a bigger impact on the norm. It is actually related to several parameters such as the gap parameter of the divisor and the difference of the degrees. The following example shows a large increase in the norm, as well as a densification of the result.

\begin{example}
	Let $P=X^{80}+7X^{65}+2X^{61}-8X^{59}+X^{56}+3$ and $Q=X^{131}+4X^{120}+8X^{118}-3X^{108}-3X^{80}+X^{71}+5X^{32}$.
	Here the gap parameter of $P$ is $\gamma=\frac{3}{16}$, $\#P=6$, $\#Q = 7$ and $\norm{P}=\norm{Q}=8$.
	The polynomial $Q\bmod P$ has degree $79$, sparsity $53$ and norm $11912$. 
\end{example}

The following proposition bounds the growth on the different parameters of the polynomial after a modular reduction.

\begin{prop}\label{prop:sparsity+coeffbound}
	Let $Q$ be a sparse polynomial of degree at most $n-1+k$ and $P$ a monic polynomial of degree $n$ with $\#P\geq 2$.
	The polynomial $Q\bmod P$ has at most $\#Q(\#P-1)^{\lceil\frac{k}{\gamma n}\rceil}$ monomials.
	If $Q$ and $P$ are defined over $\gz$, 
        $\norm{Q\bmod P}\le \norm{Q}(\#P\norm{P})^{\lceil\frac{k}{\gamma n}\rceil}$.
\end{prop}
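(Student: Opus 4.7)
My plan is to prove both bounds simultaneously by induction on $j = \lceil k/(\gamma n)\rceil$, the number of elementary reduction steps needed to bring the degree below $n$. The base case $j = 0$ corresponds to $k \le 0$: then $\deg Q < n$, so $Q \bmod P = Q$ and both bounds reduce to the trivial inequalities $\#Q \ge \#Q$ and $\norm{Q} \ge \norm{Q}$.

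For the inductive step ($k \ge 1$), I perform a single reduction using the identity $X^n \equiv -(P - X^n) \pmod{P}$. I split $Q = Q_{\text{lo}} + Q_{\text{hi}}$ at degree $n$, and for each monomial $q_m X^m$ of $Q_{\text{hi}}$ I substitute $-q_m X^{m-n}(P - X^n)$, producing a polynomial $Q' \equiv Q \pmod P$. Since $P - X^n$ has degree $(1-\gamma)n$ and sparsity $\#P - 1$, each replacement yields at most $\#P - 1$ monomials of degree at most $m - \gamma n \le n - 1 + k - \gamma n$. Hence $\deg(Q') \le n - 1 + \max(0, k - \gamma n)$ and $\#Q' \le \#Q \cdot (\#P - 1)$, using $\#P - 1 \ge 1$ to absorb $Q_{\text{lo}}$. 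For the norm, each coefficient of $Q'$ at a given degree $d$ receives at most one contribution from $Q_{\text{lo}}$ (of magnitude $\le \norm{Q}$) and at most $\#P - 1$ contributions from $Q_{\text{hi}}$, one per pair $(m, i)$ with $m - n + i = d$ and $i$ in the support of $P - X^n$, each bounded by $\norm{P}\norm{Q}$. Therefore $\norm{Q'} \le (1 + (\#P - 1)\norm{P})\norm{Q} \le \#P\norm{P}\cdot\norm{Q}$, where $\norm{P} \ge 1$ holds since $\#P \ge 2$ over $\gz$.

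Applying the induction hypothesis to $Q'$, whose excess degree is at most $k - \gamma n$ and therefore requires at most $j - 1$ further reduction steps, multiplies the two bounds by one more factor of $\#P - 1$ and of $\#P\norm{P}$ respectively, which yields the stated result. The main obstacle is purely bookkeeping: I need to verify $\lceil (k - \gamma n)/(\gamma n)\rceil = \lceil k/(\gamma n)\rceil - 1$ when $k > \gamma n$, while the subcase $0 < k \le \gamma n$ (giving $\lceil k/(\gamma n)\rceil = 1$) is handled directly by the single reduction step since $Q'$ then already has degree strictly less than $n$ and thus equals its own reduction modulo $P$.
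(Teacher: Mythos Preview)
Your proof is correct and follows essentially the same approach as the paper: both perform the elementary reduction $Q\mapsto (Q\bmod X^n)+(Q\quo X^n)(X^n-P)$ and track how degree, sparsity, and norm evolve, iterating $\lceil k/(\gamma n)\rceil$ times. The paper phrases this as an explicit iterative sequence $Q^{[i]}$ with per-step bounds $\#Q^{[i+1]}\le\#Q^{[i]}(\#P-1)$ and $\norm{Q^{[i+1]}}\le(1+(\#P-1)\norm{P})\norm{Q^{[i]}}$, while you phrase it as an induction on $j=\lceil k/(\gamma n)\rceil$, but the content is identical.
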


\begin{proof}
	
  We analyse the growth of the norm and the sparsity while performing the euclidean division. 
	
        Instead of following the classical quadratic algorithm, 
	we first reduce once all the monomials of $Q$ with degree at least $n$ to obtain a new dividend. We repeat this process until the dividend has degree less than $n$. 
	Let us define the sequence $(Q^{[i]})_i$ by $Q^{[0]}=Q$ and 
	$Q^{[i+1]} = (Q^{[i]}\bmod X^n) + (Q^{[i]}\quo X^n)(X^n-P)$. Then $Q^{[i]}\bmod P = Q\bmod P$ for all $i$. 
        Since $\deg(Q^{[i]}\quo X^n) = \deg(Q^{[i]})-n$ and $\deg(X^n-P)\le (1-\gamma)n$, $\deg(Q^{[i+1]})\le\max(n-1,\deg(Q^{[i]})-\gamma n)$, whence $\deg(Q^{[i]})\le\max(n-1, \deg(Q)-i\gamma n)$.

        Also, $\#Q^{[i+1]}$ is at most $\#Q^{[i]}(\#P-1)$, thus $\#Q^{[i]}\le \#Q(\#P-1)^i$. Finally, 
        \[\norm{Q^{[i+1]}} \le \norm{Q^{[i]}}(1+\min(\#Q^{[i]},\#P-1)\norm P).\]
        Therefore, $\norm{Q^{[i]}}\le (\#P\norm P)^i \norm Q$.
	
        Since $\deg(Q^{[i]}) \le n+k-1-i\gamma n$, $\deg(Q^{[i]}) < n$ if $i=\lceil\frac{k}{\gamma n}\rceil$. This implies that $Q^{[i]} = Q\bmod P$. 
\end{proof}

\subsection{Random primes and random irreducible polynomials}

We collect in this section some useful results to produce random prime numbers and random irreducible polynomials over finite fields.

\begin{prop}[\cite{rosser1962}] \label{prop:rosser}
If $\lambda\ge21$, there are at least $\frac{3}{5}\lambda/\ln\lambda$ prime numbers in $[\lambda,2\lambda]$.
\end{prop}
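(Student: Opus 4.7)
The plan is to deduce the proposition from the classical explicit bounds on the prime-counting function due to Rosser and Schoenfeld: the lower bound $\pi(x) \ge x/\ln x$ (valid for $x \ge 17$) and the upper bound $\pi(x) \le 1.25506\, x/\ln x$ (valid for all $x > 1$). The number of primes in $[\lambda, 2\lambda]$ is at least $\pi(2\lambda) - \pi(\lambda)$, so for $\lambda \ge 21$ (in particular $2\lambda \ge 17$) combining these two inequalities yields
\[
\pi(2\lambda) - \pi(\lambda) \;\ge\; \frac{2\lambda}{\ln(2\lambda)} \;-\; \frac{1.25506\,\lambda}{\ln\lambda}.
\]

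The core algebraic step is to show that the right-hand side exceeds $\frac{3}{5}\lambda/\ln\lambda$. Dividing through by $\lambda$ and clearing denominators, this reduces to the single inequality $(2 - 1.85506)\ln\lambda \ge 1.85506\,\ln 2$, i.e.\ $\ln\lambda \ge c$ for an explicit constant $c \approx 8.87$. So the asymptotic argument alone already settles the proposition for every $\lambda$ above a threshold $\lambda_0$ of at most a few thousand.

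The main obstacle is then the finite residual window $21 \le \lambda < \lambda_0$, which this clean estimate does not cover. I would dispose of this range by a direct enumeration: sieve the primes up to $2\lambda_0$, compute the exact count $\pi(2\lambda) - \pi(\lambda - 1)$ for each integer $\lambda$ in the window, and compare it with $\lceil 3\lambda/(5\ln\lambda)\rceil$. Because $\lambda_0$ is of moderate size, this is a finite, machine-verifiable check. As a refinement, one could instead invoke the sharper Rosser--Schoenfeld bound $\pi(x) \le x/(\ln x - 1)$, valid once $x$ is moderately large, to push $\lambda_0$ much closer to $21$ and thereby cut the residual window down to a handful of small cases that can be verified by hand. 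Either way, the constant $21$ in the hypothesis should be understood as the smallest threshold at which the stated constant $\frac{3}{5}$ is simultaneously compatible with the behaviour of $\pi$ on small inputs and with the asymptotic estimate above.
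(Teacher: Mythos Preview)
The paper does not give a proof of this proposition; it is quoted directly from Rosser and Schoenfeld, where the bound $\pi(2x)-\pi(x)>\tfrac{3x}{5\ln x}$ appears as an explicit corollary of their main estimates (with threshold $x\ge 20.5$, here rounded up to $21$). So there is nothing to compare against on the paper's side beyond the citation itself.

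Your plan to rederive the inequality from the coarser Rosser--Schoenfeld bounds $\pi(x)\ge x/\ln x$ (for $x\ge 17$) and $\pi(x)\le 1.25506\,x/\ln x$ is sound, and the algebra is correct: it reduces the asymptotic part to $0.14494\,\ln\lambda\ge 1.85506\,\ln 2$, i.e.\ $\lambda\gtrsim 7100$. Two points to tighten. First, the residual finite check on $[21,7100]$ is described but not carried out; since the entire content of the proposition lies in the explicit constants, this is where the actual work is, and leaving it as ``machine-verifiable'' is not yet a proof. Second, the proposition is stated for all real $\lambda\ge 21$, not just integers, so an enumeration over integer $\lambda$ is not quite sufficient: the count of primes in $[\lambda,2\lambda]$ is piecewise constant while $\tfrac{3}{5}\lambda/\ln\lambda$ is increasing on this range, so you must check the inequality at the right endpoint of each constancy interval (equivalently, at each $\lambda$ just below a prime and each $\lambda$ just below half a prime). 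With those routine repairs your argument becomes a complete, self-contained alternative to the bare citation.
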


Using this proposition together with Miller-Rabin probability test, we can produce integers that are prime with good probability~\cite{shoup2008}.

\begin{prop}\label{prop:rdprime}
There exists an algorithm $\rp(\lambda,\epsilon)$ that returns an integer $q$ in $[\lambda, 2\lambda]$, such that $q$ is prime with probability at least $1-\epsilon$. 
	It requires $\gO(\log(\frac{1}{\epsilon})\log^2(\lambda)\imul(\log\lambda)\log\log\lambda)$ bit operations.
\end{prop}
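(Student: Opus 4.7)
The plan is to combine uniform sampling in $[\lambda,2\lambda]$ with the Miller--Rabin compositeness test. The algorithm repeatedly draws a uniformly random candidate $q\in[\lambda,2\lambda]$ and runs $k$ independent rounds of Miller--Rabin on $q$, each with a uniformly random base in $[2,q-1]$; it returns $q$ as soon as all $k$ rounds declare $q$ a probable prime and, after at most $N$ unsuccessful candidates, aborts and returns the last drawn value. The parameters $N$ and $k$ are chosen as functions of $\lambda$ and $\epsilon$ below. The two facts we need are: (i) by Proposition~\ref{prop:rosser}, a single uniform draw is prime with probability at least $\pi := 3/(5\ln\lambda)$, and (ii) a single round of Miller--Rabin accepts a composite with probability at most $1/4$, while always accepting a prime.

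For correctness I would bound the probability that the returned value is composite via a union bound on two events. Let $E_1$ be the event that none of the $N$ drawn candidates is prime; then $\Pr[E_1]\le(1-\pi)^N\le\exp(-N\pi)$, so taking $N=\lceil\pi^{-1}\ln(2/\epsilon)\rceil=\gO(\log\lambda\cdot\log(1/\epsilon))$ gives $\Pr[E_1]\le\epsilon/2$. Let $E_2$ be the event that some composite candidate passes all $k$ rounds; by a union bound over the $N$ candidates, $\Pr[E_2]\le N\cdot 4^{-k}$, so taking $k=\lceil\log_4(2N/\epsilon)\rceil=\gO(\log\log\lambda+\log(1/\epsilon))$ gives $\Pr[E_2]\le\epsilon/2$. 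Outside $E_1\cup E_2$, the first candidate to pass all rounds is genuinely prime, so the output is prime with probability at least $1-\epsilon$.

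For the bit complexity, one round of Miller--Rabin on $q=\gO(\lambda)$ amounts to a modular exponentiation with a $\gO(\log\lambda)$-bit exponent, requiring $\gO(\log\lambda)$ modular multiplications, each costing $\imul(\log\lambda)$ bit operations. Sampling the random candidate and each random base contributes only lower-order $\gO(\log\lambda)$ terms, and computing the decomposition $q-1=2^sd$ needed to set up Miller--Rabin is negligible. The worst-case total number of rounds is $Nk$, so the overall bit complexity is $\gO(Nk\cdot\log\lambda\cdot\imul(\log\lambda))$; substituting the chosen values of $N$ and $k$ yields the announced bound $\gO(\log(1/\epsilon)\log^2(\lambda)\imul(\log\lambda)\log\log\lambda)$, with the $\log\log\lambda$ factor originating from the $\log N=\gO(\log\log\lambda)$ contribution to $k$.

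The main subtlety is the coupled choice of $N$ and $k$: enlarging $N$ to kill event $E_1$ simultaneously worsens the union bound on $E_2$ and forces $k$ to grow with $\log N$, so the two parameters must be balanced carefully. A smaller point to check is that the costs of sampling random bits and of extracting the odd part of $q-1$ are indeed subsumed by the modular-exponentiation cost, so that the full Miller--Rabin routine per round remains $\gO(\log\lambda\cdot\imul(\log\lambda))$ as claimed.
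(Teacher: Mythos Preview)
Your approach—uniform sampling in $[\lambda,2\lambda]$ followed by Miller--Rabin—is exactly what the paper intends; it offers no detailed proof, only the one-line indication ``using [Proposition~\ref{prop:rosser}] together with Miller--Rabin'' and a citation to Shoup. Your correctness argument via $E_1$ and $E_2$ is sound.

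There is, however, an arithmetic slip in the final complexity step. With $N=\gO(\log\lambda\,\log(1/\epsilon))$ and $k=\gO(\log\log\lambda+\log(1/\epsilon))$ one gets
\[
Nk=\gO\bigl(\log\lambda\,\log(\tfrac1\epsilon)\log\log\lambda+\log\lambda\,\log^2(\tfrac1\epsilon)\bigr),
\]
hence a total cost of $\gO\bigl((\log(1/\epsilon)\log\log\lambda+\log^2(1/\epsilon))\log^2\lambda\,\imul(\log\lambda)\bigr)$. Your last sentence quietly drops the $\log(1/\epsilon)$ summand in $k$; multiplied by the $\log(1/\epsilon)$ already present in $N$, that summand yields an extra $\log^2(1/\epsilon)$, so the stated bound follows from your analysis only when $\log(1/\epsilon)=\gO(\log\log\lambda)$. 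In every invocation the paper makes of \rp one has $\lambda\ge c/\epsilon$, so the discrepancy is harmless for the downstream results, but to match the proposition literally you need a slightly finer scheme: run only $k_1=\gO(\log\log\lambda)$ rounds per candidate during the search (a geometric-series refinement of your union bound on $E_2$ then shows the first accepted candidate is prime with constant probability), apply $k_2=\gO(\log(1/\epsilon))$ confirmation rounds to that single candidate, and wrap this in $\gO(\log(1/\epsilon))$ independent restarts.
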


\begin{prop}\label{prop:choixdep}
  Let $H\in\gr[X]$ be a nonzero polynomial of degree at most $n$ and sparsity at most $T$, $0<\epsilon<1$ and
  $\lambda = \max(21, \frac{10}{3\epsilon} T\ln n)$. With probability at least $1-\epsilon$,
  $\rp(\lambda, \frac{\epsilon}{2})$ returns a prime number $p$ such that $H\bmod X^p-1\neq 0$.
\end{prop}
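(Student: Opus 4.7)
The plan is to identify a clean sufficient condition on $p$ that forces $H\bmod(X^p-1)\neq 0$, to count the primes violating it in $[\lambda,2\lambda]$, and to combine that count with the primality guarantee of \rp{} through a union bound.

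The starting observation is the following. Write $H=\sum_{i=1}^{T'} h_i X^{e_i}$ with $T'\leq T$, $h_i\neq 0$, and $e_1<\cdots<e_{T'}\leq n$. Because reduction modulo $X^p-1$ simply collects monomials whose exponents are congruent modulo $p$, the coefficient of $X^{e_{T'}\bmod p}$ in $H\bmod(X^p-1)$ equals $h_{T'}$ plus a sum of those $h_j$ with $j<T'$ and $e_j\equiv e_{T'}\pmod{p}$. If no such $j$ exists, this coefficient is exactly $h_{T'}\neq 0$, so $H\bmod(X^p-1)\neq 0$. Hence any \emph{bad} prime $p$ (one for which the reduction vanishes) must divide at least one of the $T-1$ nonzero differences $e_{T'}-e_j$ with $j<T'$. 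This is the step that drives the linear rather than quadratic dependence on $T$: I consciously use the single highest-degree monomial as witness instead of enumerating every pair of exponents.

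Next I count bad primes. Each difference $e_{T'}-e_j$ is a positive integer at most $n$, and therefore has at most $\log_\lambda n=\ln n/\ln\lambda$ distinct prime divisors that are $\geq\lambda$. Summing over the at most $T-1$ differences gives at most $(T-1)\ln n/\ln\lambda$ bad primes in $[\lambda,2\lambda]$. Since $\lambda\geq 21$, Proposition~\ref{prop:rosser} provides at least $\tfrac{3}{5}\lambda/\ln\lambda$ primes in that interval, so the fraction of bad primes among all primes of $[\lambda,2\lambda]$ is at most $\tfrac{5(T-1)\ln n}{3\lambda}$. The hypothesis $\lambda\geq\tfrac{10}{3\epsilon}T\ln n$ bounds this fraction by $\epsilon/2$.

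To finish, I combine the two error sources via a union bound. The call $\rp(\lambda,\epsilon/2)$ returns a non-prime with probability at most $\epsilon/2$, and conditional on the output being prime — which in the standard implementation of \rp{} is essentially uniform among the primes of $[\lambda,2\lambda]$ — the preceding paragraph shows the returned prime is bad with probability at most $\epsilon/2$. Hence the probability that $\rp(\lambda,\epsilon/2)$ fails to return a prime $p$ with $H\bmod(X^p-1)\neq 0$ is at most $\epsilon$. The only delicate point is step one: the exact choice of sufficient condition is what yields the linear dependence on $T$ appearing in the bound on $\lambda$; the counting and union bound afterwards are routine.
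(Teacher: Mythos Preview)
Your proof is correct and follows essentially the same approach as the paper: fix one exponent of $H$ as a witness (you specify the largest one, the paper leaves it generic), bound the bad primes via the $(T-1)$ differences each having at most $\ln n/\ln\lambda$ prime factors exceeding $\lambda$, compare against Rosser's lower bound on primes in $[\lambda,2\lambda]$, and combine with the $\epsilon/2$ failure probability of \rp{} by a union bound. Your write-up is slightly more explicit about the witness choice and the uniformity assumption on \rp{}, but the argument is the same.
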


\begin{proof}
  It is sufficient, for $H\bmod X^p-1$ to be nonzero, that there exist one exponent $e$ of $H$ that is not congruent to any other
  exponents $e_j$ modulo $p$.  In other words, it is sufficient that $p$ does not divide any of the $T-1$ differences
  $\delta_j=e_j-e$.
  
  Noting that $\delta_j\leq n$, the number of primes in $[\lambda,2\lambda]$ that divide at least one $\delta_j$ is at most
  $\frac{(T-1)\ln n}{\ln \lambda}$. Since there exists $\frac{3}{5}\lambda/\ln \lambda$ primes in this interval, the probability
  that a prime randomly chosen from it divides at least one $\delta_j$ is at most $\epsilon/2$. $\rp(\lambda,\epsilon/2)$ returns
  a prime in $[\lambda,2\lambda]$ with probability at least $1-\epsilon/2$, whence the result.
\end{proof}

The following two propositions will be useful to either reduce integer coefficients modulo some prime number or to construct irreducible polynomials over finite fields.

\begin{prop}\label{prop:choixdeq}
  Let $H\in\gz[X]$ be a nonzero polynomial, $0<\epsilon<1$ and $\lambda \geq \max(21,\frac{10}{3\epsilon}\ln \|H\|_\infty)$.  Then
  with probability at least $1-\epsilon$, $\rp(\lambda, \frac{\epsilon}{2})$ returns a prime $q$ such that
  $H\bmod q \neq 0$.
\end{prop}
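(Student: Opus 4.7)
The plan is to mirror the argument used for Proposition \ref{prop:choixdep}, replacing the combinatorial count on exponent differences by a count on prime divisors of a single integer coefficient. The key observation is that for $H \bmod q$ to be nonzero, it suffices that $q$ fails to divide \emph{some} nonzero coefficient of $H$. So I would fix any nonzero coefficient $c$ of $H$ and note $|c| \le \|H\|_\infty$; the goal then reduces to showing that $\rp(\lambda, \epsilon/2)$ returns a prime $q \nmid c$ with probability at least $1-\epsilon$.

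First, I would bound the number of ``bad'' primes in the interval $[\lambda,2\lambda]$, i.e.\ those dividing $c$. Since every such prime is at least $\lambda$, the integer $c$ has at most $\log_\lambda |c| = \ln |c| / \ln\lambda \le \ln\|H\|_\infty / \ln\lambda$ of them. Then, by Proposition \ref{prop:rosser} (applicable since $\lambda \ge 21$), the interval $[\lambda,2\lambda]$ contains at least $\frac{3}{5}\lambda/\ln\lambda$ primes. Taking the ratio, a uniformly random prime drawn from this interval divides $c$ with probability at most
\[
\frac{\ln\|H\|_\infty/\ln\lambda}{\frac{3}{5}\lambda/\ln\lambda} \;=\; \frac{5\ln\|H\|_\infty}{3\lambda} \;\le\; \frac{\epsilon}{2},
\]
where the last inequality uses the hypothesis $\lambda \ge \frac{10}{3\epsilon}\ln\|H\|_\infty$.

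Finally, I would combine this with the guarantee of Proposition \ref{prop:rdprime}: $\rp(\lambda,\epsilon/2)$ returns some integer in $[\lambda,2\lambda]$ that is prime with probability at least $1-\epsilon/2$. Conditioning on this event and applying a union bound with the bad-prime estimate above yields a prime $q$ with $q \nmid c$, hence $H \bmod q \ne 0$, with probability at least $1 - \epsilon$. There is no real obstacle here; the statement is a direct adaptation of Proposition \ref{prop:choixdep} in which the role of ``exponents colliding modulo $p$'' is played by ``a prime dividing a fixed coefficient,'' and the proof is essentially two lines once the right coefficient is fixed.
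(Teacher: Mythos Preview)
Your proposal is correct and follows essentially the same approach as the paper: fix a nonzero coefficient, bound the number of primes in $[\lambda,2\lambda]$ dividing it by $\ln\|H\|_\infty/\ln\lambda$, compare to the prime count from Proposition~\ref{prop:rosser} to get probability at most $\epsilon/2$, and combine with the $\epsilon/2$ failure probability of \rp via a union bound. The paper's proof is simply a terser version of exactly this argument.
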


\begin{proof}
  Let $h_i$ be a nonzero coefficient of $H$,
  a random prime from $[\lambda,2\lambda]$ divides $h_i$ with probability at most
  $\frac{5}{3} \ln \|H\|_\infty/\lambda\leq \epsilon/2$. Since $\rp(\lambda,\epsilon/2)$ returns a prime in $[\lambda,2\lambda]$
  with probability at least $1-\epsilon/2$ the result follows.
\end{proof}

\begin{prop}[{\cite[Chapter 19]{shoup2008}}]\label{prop:nbirrpoly}
 The number of irreducible monic polynomial of degree $d$ over a field $\cfq$ is between $\frac{q^d}{2d}$ and $\frac{q^d}{d}$.
\end{prop}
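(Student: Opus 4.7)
The plan is to rely on the classical fact that $X^{q^n}-X$ factors over $\cfq$ as the product of all monic irreducible polynomials whose degree divides $n$. Let $I_q(d)$ denote the number of monic irreducible polynomials of degree $d$ over $\cfq$; comparing degrees in this factorization yields the identity
\begin{equation*}
q^n \;=\; \sum_{d \mid n} d\, I_q(d).
\end{equation*}
I would take this identity as the starting point, since both bounds can be extracted from it without needing the full strength of Möbius inversion.

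For the upper bound $I_q(d)\le q^d/d$, I would simply observe that applying the identity with $n=d$ gives $d\, I_q(d)\le q^d$ because the right-hand side is a sum of nonnegative terms that includes $d\, I_q(d)$.

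For the lower bound $I_q(d)\ge q^d/(2d)$, I would isolate the $d'=d$ term and bound the remaining contributions. Every proper divisor $d'$ of $d$ satisfies $d'\le d/2$, so
\begin{equation*}
d\, I_q(d) \;=\; q^d - \sum_{\substack{d' \mid d \\ d' < d}} d'\, I_q(d') \;\ge\; q^d - \sum_{d'=1}^{\lfloor d/2\rfloor} q^{d'},
\end{equation*}
using the trivial bound $d'\, I_q(d')\le q^{d'}$ from the upper bound already proven. The geometric sum is at most $\frac{q^{\lfloor d/2\rfloor+1}}{q-1}\le 2q^{d/2}$ for $q\ge 2$, so $d\, I_q(d)\ge q^d - 2q^{d/2}\ge q^d/2$ as soon as $q^{d/2}\ge 4$, and the few remaining small cases (such as $q=2$, $d\in\{1,2,3\}$) can be checked by hand. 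Dividing by $d$ gives the claimed lower bound.

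The only real obstacle is making sure the geometric-sum bound is uniform in $q$ and $d$; everything else is bookkeeping. I would devote a sentence to checking the small cases explicitly so that the stated bound $q^d/(2d)$ holds without any restriction on $q$ or $d$.
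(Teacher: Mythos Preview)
Your proof is correct and follows the standard argument found in Shoup's book, which is precisely what the paper cites. Note, however, that the paper does not actually supply its own proof of this proposition: it is stated with a reference to \cite[Chapter~19]{shoup2008} and used as a black box. So there is nothing in the paper to compare your argument against beyond the citation itself; your write-up is essentially a faithful reconstruction of the referenced proof.
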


\begin{prop}[{\cite[Chapter 20]{shoup2008}}]\label{prop:irrpoly}
  There exists an algorithm that, given a finite field $\cfq$, an integer $d$ and $0<\epsilon<1$, computes a degree-$d$ polynomial
	in $\cfq[X]$ that is irreducible with probability at least $1-\epsilon$. It requires $\gO(\log(\frac{1}{\epsilon})d^2\mul(d)(\log q+\log\log d))$ operations in $\cfq$
	or $\gO(\log(\frac{1}{\epsilon})d^4\log q)$ operations in $\cfq$ if using only naive polynomial multiplications.
\end{prop}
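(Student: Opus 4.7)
The plan is to combine the classical randomized generator of irreducible polynomials with a careful accounting of the modular exponentiations required by Rabin's irreducibility test. Concretely, the procedure draws a uniformly random monic polynomial $f\in\cfq[X]$ of degree $d$, tests whether it is irreducible, and repeats. By Proposition~\ref{prop:nbirrpoly}, each single draw produces an irreducible polynomial with probability at least $\frac{1}{2d}$. Consequently, after $t=\lceil 2d\ln(1/\epsilon)\rceil$ independent draws, the probability that none of them is irreducible is at most $(1-\tfrac{1}{2d})^{t}\le e^{-\ln(1/\epsilon)}=\epsilon$, which settles the probability analysis. Thus I only need to produce an irreducibility test whose cost times $d\log(1/\epsilon)$ matches the advertised bound.

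For the test, I would use Rabin's criterion: $f$ is irreducible over $\cfq$ if and only if $X^{q^{d}}\equiv X\pmod f$ and $\gcd(X^{q^{d/p}}-X,f)=1$ for every prime $p$ dividing $d$. The expensive step is the iterated Frobenius: starting from $g_0=X$, set $g_i = g_{i-1}^{\,q}\bmod f$ for $i=1,\dotsc,d$, so that $g_i=X^{q^i}\bmod f$. Each update is a modular exponentiation to the power $q$ inside $\cfq[X]/(f)$, realized by repeated square-and-multiply; using $\log q$ squarings modulo $f$ and one reduction per squaring, it costs $\gO(\mul(d)\log q)$ operations in $\cfq$. Summed over $d$ steps, computing all the $g_i$ we need (in particular $g_d$ and the $g_{d/p}$ for $p\mid d$) costs $\gO(d\,\mul(d)\log q)$ operations in $\cfq$. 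The few gcd computations $\gcd(g_{d/p}-X,f)$ add $\gO(\mul(d)\log d\cdot\omega(d))$ operations, where $\omega(d)=\gO(\log d/\log\log d)$ counts the prime factors of $d$; this is where the extra $\log\log d$ contribution appears, after one also factors $d$ by trial division in $\gO(\sqrt d)$ arithmetic operations on small integers. Altogether one test costs $\gO(d\,\mul(d)(\log q+\log\log d))$ operations in $\cfq$.

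Multiplying the per-trial cost by the number $t=\gO(d\log(1/\epsilon))$ of trials gives the first complexity bound $\gO(\log(1/\epsilon)\,d^{2}\mul(d)(\log q+\log\log d))$. Substituting the schoolbook bound $\mul(d)=\gO(d^{2})$ and absorbing the $\log\log d$ term into $\log q$ (which is at least $1$) yields the second bound $\gO(\log(1/\epsilon)\,d^{4}\log q)$ claimed in the statement.

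The main obstacle in this plan is not the correctness of Rabin's criterion, which is standard, but the precise bookkeeping of the iterated Frobenius: one must be careful that the squarings happen in $\cfq[X]/(f)$ rather than computing $X^{q^{i}}$ as an element of $\cfq[X]$ (which would blow up the degree), and that raising $g_{i-1}$ to the power $q$ genuinely costs $\gO(\mul(d)\log q)$ and not more. Once this is in place, a union bound across the at most $\log d$ gcd tests and across the $t$ repetitions yields the stated failure probability, and no further optimization is needed to reach the announced complexity.
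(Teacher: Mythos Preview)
The paper does not prove this proposition at all---it simply cites Shoup and adds a remark that the authors take the Monte Carlo version of Shoup's Las Vegas algorithm and re-express the complexities with fast arithmetic. Your proposal \emph{is} that standard argument (random monic polynomial, Rabin's test, bounded number of repetitions), so there is nothing to compare approach-wise: you are supplying the very proof the paper outsourced.

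A couple of small points on your write-up. First, the remark that ``a union bound across the at most $\log d$ gcd tests $\dotsc$ yields the stated failure probability'' is misplaced: Rabin's test is deterministic, so the gcd steps contribute nothing to the failure probability. The only randomness is in the choice of $f$, and your earlier computation $(1-\tfrac{1}{2d})^{t}\le\epsilon$ already handles it. Second, your justification of the $\log\log d$ term is a bit loose: with fast gcds the total gcd cost is $\gO(\omega(d)\,\mul(d)\log d)=\gO(\mul(d)\log^{2}d/\log\log d)$, which is actually $o(d\,\mul(d)\log\log d)$ and hence dominated by the Frobenius iterations; so the $\log\log d$ in the stated bound is slack rather than a genuine bottleneck coming from the gcds. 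For the naive bound, it is cleaner to redo the accounting directly with $\mul(d)=\gO(d^{2})$ and the schoolbook $\gO(d^{2})$ gcd, which gives $\gO(d^{3}\log q)$ per test and hence $\gO(\log(1/\epsilon)\,d^{4}\log q)$ overall, rather than ``absorbing $\log\log d$ into $\log q$''. None of this affects correctness.
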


\begin{rem}
 Shoup~\cite{shoup2008} presents Las Vegas algorithms for Propositions~\ref{prop:rdprime} and~\ref{prop:irrpoly}. We consider Monte Carlo versions of his algorithms. Also, he analyses the complexities with naive algorithms. Our complexity estimates use fast integer and polynomial arithmetic.
\end{rem}

\section{Evaluation for polynomial multiplication in a quotient ring}\label{sec:eval}

As seen earlier, the verification of polynomial multiplication mainly relies on the evaluation of the polynomial identity at a
random point.  In this section we present algorithms to efficiently compute the evaluation of a modular product $(FG)\bmod P$ on a
point $\alpha$, without computing $(FG)\bmod P$. There, the modulus $P$ is always considered as a sparse polynomial, while $F$ and
$G$ can be either dense or sparse. 

Section~\ref{ssec:evalmodbinom} describes our method in the simpler case where $P$ is a binomial. We obtain linear-time
evaluations, whether $F$ and $G$ are dense or sparse.
Section~\ref{ssec:eval} generalizes the method to the product of two dense polynomials \emph{modulo} a sparse modulus, and Section~\ref{ssec:sparseval} presents the case of a sparse modular product.

\subsection{Evaluation of a product modulo a binomial}\label{ssec:evalmodbinom}
Let us first present our method to evaluate a modular product $FG\bmod P$ where $P = X^n-1$.
This special case illustrates our more general method. It also has its own interest since it is used as the main tool for the
verification of a product of two polynomials in Section~\ref{sec:product}, either for dense or sparse representation.

We first describe the algorithm for dense polynomials $F$ and $G$.

\begin{thm} \label{lem:evalmodbinomial} Let $F$ and $G$ be two polynomials in $\gr[X]$ of degrees less than $n$ and
  $\alpha\in\gr$. The polynomial $(FG)\bmod X^n-1$ can be evaluates on $\alpha$ using $\gO(n)$ operations in $\gr$.
\end{thm}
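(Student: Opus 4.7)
The plan is to rewrite $H(\alpha) := ((FG)\bmod X^n-1)(\alpha)$ as an inner product whose ``other'' factor admits a short linear recurrence, avoiding any explicit computation of $FG$ or $H$. Expanding the cyclic convolution gives
\[
  H(\alpha)
  =\sum_{i=0}^{n-1}\sum_{j=0}^{n-1} f_i g_j\,\alpha^{(i+j)\bmod n}
  =\sum_{i=0}^{n-1} f_i\,e_i,
  \qquad e_i := \sum_{j=0}^{n-1} g_j\,\alpha^{(i+j)\bmod n},
\]
and in particular $e_0 = G(\alpha)$, which I would compute by Horner's scheme.

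The key step I would then establish is the recurrence
\[
  e_{i+1} \;=\; \alpha\,e_i \;-\; (\alpha^n-1)\,g_{n-1-i},\qquad 0\le i\le n-2.
\]
To justify it, multiply $e_i$ by $\alpha$: each exponent $(i+j)\bmod n$ becomes $((i+j)\bmod n)+1$. For every $j\ne n-1-i$ this already matches the corresponding exponent $(i+1+j)\bmod n$ appearing in $e_{i+1}$; only for $j=n-1-i$ does the exponent become $n$ in $\alpha e_i$, while it should be $0$ in $e_{i+1}$, which accounts for the correction $(\alpha^n-1)g_{n-1-i}$.

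With the recurrence in hand, the algorithm I would use is: precompute $\alpha^n$ by repeated squaring in $\gO(\log n)$ operations; evaluate $e_0=G(\alpha)$ by Horner's rule in $\gO(n)$ operations; then, for $i=0,\dots,n-2$, update $e_{i+1}$ from $e_i$ via a single multiplication and subtraction using the recurrence, while accumulating $f_{i+1}e_{i+1}$ into a running total. Each update and accumulation costs $\gO(1)$ ring operations, so the overall cost is $\gO(n)$ operations in $\gr$. The main obstacle is simply spotting the right linear form $e_i$: once chosen, correctness and complexity are immediate, and the argument uses no inversion in $\gr$, which matters because neither $\alpha$ nor $\alpha^n-1$ need be invertible. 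This same viewpoint should generalize to the sparse modulus case treated later: the correction factor $\alpha^n-1$ becomes $\alpha^n-R(\alpha)$ for $X^n\equiv R(X)\pmod P$, with $R(\alpha)$ precomputed once.
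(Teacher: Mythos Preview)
Your proof is correct and takes essentially the same approach as the paper: both derive a two-term recurrence for a sequence whose inner product with the other polynomial's coefficients equals $((FG)\bmod X^n-1)(\alpha)$. The only cosmetic differences are that the paper frames this via the circulant matrix of the multiplication-by-$F$ map (computing $\vec c=\vec\alpha_n C_F$ with $c_0=F(\alpha)$ and $c_{j+1}=\alpha c_j-(\alpha^n-1)f_{n-1-j}$, then $\vec c\cdot\vec g$), whereas you swap the roles of $F$ and $G$ and expand the cyclic convolution directly to get the same recurrence for $e_i$.
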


\begin{proof}
  Let $H = FG$ and $M = H\bmod X^n-1$. We denote by $f_i$ (resp. $g_i$, $h_i$, $m_i$) the coefficient of degree $i$ of the
  polynomial $F$ (resp. $G$, $H$, $M$). Let also $\vec g = (g_0,\dotsc,g_{n-1})^T$, $\vec h = (h_0,\dotsc,h_{2n-2})^T$ and
  $\vec m = (m_0,\dotsc, m_{n-1})^T$.

  It is a well-known fact that considering $F$ as fixed, the multiplication by $F$ is a linear map described by a Toeplitz
  matrix. More precisely, we have $\vec h = T_F \vec g$ where
\[T_F = \begin{pmatrix}
    f_0 \\
    f_1 & f_0 \\
    \vdots&&\ddots\\
    f_{n-1} & \dots &\dots& f_0 \\
     & f_{n-1} &  & f_1 \\
     &&\ddots&\vdots\\
     &  & & f_{n-1}
    \end{pmatrix}.
\]
Since $M = H\bmod X^n-1$, $m_i = h_i+h_{i+n-1}$ for $0\le i < n-1$ and $m_{n-1} = h_{n-1}$. Therefore, $\vec m = C_F\vec g$ where $C_F$ is the circulant matrix
\[C_F = \begin{pmatrix}
    f_0 & f_{n-1} & \cdots & f_1\\
    f_1 & f_0     & \cdots & f_2\\
    \vdots & \vdots&&\vdots\\
    f_{n-1} & f_{n-2} & \cdots & f_0
    \end{pmatrix}.
\]

On the other hand, evaluating $M$ on $\alpha$ corresponds to the inner product $\vec\alpha_n\vec m$ where
$\vec\alpha_n = (1,\alpha,\dotsc,\alpha^{n-1})$. Therefore, our aim is to compute $\vec\alpha_n C_F\vec g$. The standard way to
perform this evaluation corresponds to first computing $\vec m = C_F\vec g$ and then $\vec\alpha_n\vec m$. As noticed by
Giorgi~\cite{giorgi2018}, the bracketing $(\vec\alpha_n C_F)\vec g$ yields a faster algorithm due to the structure of the matrix
$C_F$.

Let $\vec c = \vec\alpha_n C_F$. Then
$c_{j+1} = \sum_{\ell=0}^{n-1} \alpha^\ell f_{(\ell-j-1)\bmod n} = f_{n-j-1} + \alpha\sum_{\ell=0}^{n-2} \alpha^\ell f_{(\ell-j)\bmod
  n}$. Since for $j>0$ we have $\sum_{\ell=0}^{n-2} \alpha^\ell f_{(\ell-j)\bmod n} = c_{j}-\alpha^{n-1}f_{n-j-1}$, we obtain
the recurrence relation

\begin{equation}\label{eq:recurrence}
  \begin{cases}
    c_{j+1} = \alpha c_{j}-P(\alpha)f_{n-j-1} & \text{\, for } j\geq 0\\
    c_0 = F(\alpha) &
  \end{cases}
\end{equation} 
where $P = X^n-1$ and $c_0=F(\alpha)$.

It is immediate that exploiting such recurrence relation for computing the evaluation of $(FG) \bmod P$ leads to a complexity of
$\gO(n)$ operations in $\gr$. Indeed, once $c_0$ and $P(\alpha)=\alpha^n-1$ are computed each other $c_j$ can be computed sequentially at cost $\gO(1)$.
\end{proof}

For completeness, we provide the full description of this method in Algorithm~\ref{alg:evalbinom}.

\begin{algorithm}\label{alg:evalbinom}
\caption{\textsc{EvaluationModuloBinomial}}
\begin{algorithmic}[1]
  \Input $F$, $G\in\gr[X]$ with $\deg(F),\deg(G) < n$, and $\alpha\in\gr$.
 
  \Output $(FG\bmod X^n-1)(\alpha)$

\State $c\gets F(\alpha)$
\State $P_\alpha\gets\alpha^n-1$
\State $\beta \gets c g_0$
\For{$j=1$ to $n-1$} 
    \State $c\gets\alpha c-P_\alpha f_{n-j}$
    \State $\beta\gets\beta+cg_j$
\EndFor
\State \Return $\beta$
\end{algorithmic}
\end{algorithm}

We can actually be more precise on the number of operations required by Algorithm~\ref{alg:evalbinom}. In particular when $\alpha$
does not lie into $\gr$ but in an extension $\grext$ of $\gr$, we can distinguish between operations in $\gr$ and $\grext$. In the
next corollary, we call \emph{scalar multiplications} those that are multiplications of an element of $\grext$ by an element of
$\gr$. The following analysis minimizes the number of non-scalar multiplications.

\begin{cor}
  Let $F$ and $G$ be two polynomials in $\gr[X]$ of degree less than $n$ and $\alpha\in\grext$. The polynomial $(FG)\bmod X^n-1$
  can be evaluated on $\alpha$ using $2n-2$ multiplications and $3n-2$ additions in $\grext$, and $3n-2$ scalar multiplications.
\end{cor}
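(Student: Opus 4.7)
The plan is to revisit Algorithm~\ref{alg:evalbinom} and classify every arithmetic operation according to whether it is a non-scalar multiplication (both operands in $\grext$), a scalar multiplication (one operand in $\grext$ and the other in $\gr$), or an addition in $\grext$. First I would analyze the main \texttt{for} loop: each of its $n-1$ iterations contributes one non-scalar multiplication (the update $\alpha c$), two scalar multiplications ($P_\alpha f_{n-j}$ and $c g_j$), and two additions. This already accounts for $n-1$ non-scalar multiplications, $2(n-1)$ scalar multiplications, and $2(n-1)$ additions, leaving a budget of $n-1$ non-scalar multiplications, $n$ scalar multiplications, and $n$ additions for the precomputation of $F(\alpha)$ and $P_\alpha$ together with the initialization $\beta\gets cg_0$.

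The main obstacle is fitting the precomputation inside this budget. A Horner-scheme evaluation of $F(\alpha)$ alone already costs $n-2$ non-scalar multiplications, and $\alpha^n$ requires $n-1$ more, which would exceed the allotted $n-1$ non-scalar multiplications. The remedy is to share work between the two computations: precompute the successive powers $\alpha,\alpha^2,\dotsc,\alpha^n$ using exactly $n-1$ non-scalar multiplications, then form $P_\alpha = \alpha^n-1$ with one addition and evaluate $F(\alpha) = f_0 + \sum_{i=1}^{n-1} f_i\alpha^i$ using $n-1$ scalar multiplications and $n-1$ additions. The remaining scalar multiplication of the budget is consumed by $\beta \gets cg_0$.

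To conclude I would sum the counts over the two phases: $(n-1)+(n-1) = 2n-2$ non-scalar multiplications, $(n-1)+1+2(n-1) = 3n-2$ scalar multiplications, and $n+2(n-1) = 3n-2$ additions in $\grext$, matching the claim. Correctness of the evaluated value is already guaranteed by Theorem~\ref{lem:evalmodbinomial}, so only the bookkeeping needs to be carried out; the essential mathematical content is the sharing trick for $\alpha^n$ and $F(\alpha)$.
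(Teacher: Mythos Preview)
Your proposal is correct and follows essentially the same approach as the paper: both precompute the powers $\alpha^2,\dotsc,\alpha^n$ with $n-1$ non-scalar multiplications (the sharing trick you identify), evaluate $F(\alpha)$ from these powers with $n-1$ scalar multiplications and additions, form $P_\alpha$ with one addition, and then count the loop exactly as you do. The bookkeeping and the final totals coincide.
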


\begin{proof}
	We can first compute $\alpha^2$, $\alpha^3$, \dots, $\alpha^n$ using $(n-1)$ multiplications. Then, $F(\alpha)$ can be computed using $(n-1)$ scalar multiplications and additions, and $P(\alpha) = \alpha^n-1$ require one more addition. The initial value $cg_0$ of $\beta$ requires one scalar multiplication. Then each iteration of the loop require one multiplication, two scalar multiplications and two additions. Therefore, the complete evaluation require $3n-2$ additions, $2n-2$ multiplications and $3n-2$ scalar multiplications.
\end{proof}

To minimize the total number of multiplications instead, we remark that one can also evaluate $F$ using Horner's scheme with $(n-2)$ multiplications, one scalar multiplication and $(n-1)$ additions. Then $\alpha^n$ has to be computed using at most $2\log n$ multiplications. This results in $3n-2$ additions, $2n-3+2\log n$ multiplications and $2n$ scalar multiplications. The total number of multiplications (scalar or not) is a bit less.

We now turn to the analysis of the algorithm for $F$ and $G$ given in sparse representation.

\begin{thm}\label{thm:sparseevalbin}
	Let $F$ and $G$ be two sparsely represented polynomials in $\gr[X]$ of degrees less than $n$ and $\alpha\in\gr$. The polynomial $(FG)\bmod X^n-1$ can be evaluated on $\alpha$ using $\gO((\#F+\#G)\log n)$ operations in $\gr$. 
\end{thm}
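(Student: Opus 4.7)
The plan is to reuse the recurrence from the proof of Theorem~\ref{lem:evalmodbinomial}, namely $c_{j+1} = \alpha c_j - P_\alpha f_{n-j-1}$ with $c_0 = F(\alpha)$ and $P_\alpha = \alpha^n-1$, but to evaluate the sequence $(c_j)_j$ only at indices $j\in\supp(G)$. Since
\[
  (FG \bmod X^n-1)(\alpha) = \sum_{j=0}^{n-1} g_j c_j = \sum_{j\in\supp(G)} g_j c_j,
\]
it suffices to obtain $c_j$ for each such $j$ and accumulate the weighted sum.

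Unrolling the recurrence yields a jumped version: for any $0\le a<b$,
\[
  c_b = \alpha^{b-a} c_a - P_\alpha \sum_{\ell=0}^{b-a-1} \alpha^\ell f_{n-b+\ell}.
\]
Substituting $m = n-b+\ell$, the inner sum involves exactly the coefficients $f_m$ for $m\in\supp(F)\cap[n-b, n-a-1]$. I would enumerate $\supp(G)$ in increasing order as $j_1<\dots<j_T$, set $c_0 = F(\alpha)$ (with $j_0=0$), and apply the jumped recurrence to pass from $c_{j_{k-1}}$ to $c_{j_k}$, accumulating $g_{j_k}c_{j_k}$ into $\beta$ as we go (and adding $g_0 F(\alpha)$ separately if $0\in\supp(G)$). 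Note that only non-negative powers of $\alpha$ appear, so no invertibility assumption on $\alpha$ is required.

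The complexity analysis relies on the key observation that the intervals $[n-j_k, n-j_{k-1}-1]$ for $k=1,\dots,T$ are pairwise disjoint subsets of $[0,n-1]$: indeed, the right end of the $(k+1)$-th interval is $n-j_k-1$, strictly below the left end $n-j_k$ of the $k$-th interval. Hence every $m\in\supp(F)$ contributes to at most one jump, and the total number of nonzero terms in the sums across all jumps is at most $\#F$. Each jump costs $\gO(\log n)$ to compute $\alpha^{j_k - j_{k-1}}$ by binary exponentiation, plus $\gO(\log n)$ per contributing $m$ to compute $\alpha^{m-n+j_k}$. Summing over all jumps yields $\gO(\#G\log n + \#F\log n)$. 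Adding the initial sparse evaluation of $F(\alpha)$ in $\gO(\#F\log n)$ and of $P_\alpha$ in $\gO(\log n)$, the total is $\gO((\#F+\#G)\log n)$ operations in $\gr$.

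The main subtlety is spotting that the intervals arising from the jumped recurrence are pairwise disjoint: without this, a naive bound would pair every $m\in\supp(F)$ with every $j\in\supp(G)$ and give $\gO(\#F\cdot\#G\cdot\log n)$, which is not quasi-linear. Once disjointness is noticed, the rest amounts to careful bookkeeping: marching through $\supp(F)$ in decreasing order in sync with $\supp(G)$ in increasing order to identify each interval's contributing indices without overhead charged to ring operations.
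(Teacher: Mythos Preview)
Your proposal is correct and follows essentially the same approach as the paper: both unroll the recurrence $c_{j+1}=\alpha c_j-P(\alpha)f_{n-j-1}$ into a jumped version, restrict attention to the indices $j\in\supp(G)$, and observe that each nonzero coefficient $f_t$ of $F$ lands in at most one of the resulting partial sums, so the aggregate cost over all jumps is $\gO(\#F\log n)$ rather than $\gO(\#F\cdot\#G\cdot\log n)$. The only cosmetic difference is that the paper initializes at $c_{j_0}=((X^{j_0}F)\bmod X^n-1)(\alpha)$ for the smallest $j_0\in\supp(G)$, whereas you start from $c_0=F(\alpha)$ and treat the step from $0$ to $j_1$ as the first jump; the two formulations are equivalent.
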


\begin{proof}
We use the same notations as in the previous proof. If the support of $G$ is 
$\supp(G) = \{j_0,\dotsc, j_{\#G-1}\}$ with $j_0 < \dotsb < j_{\#G-1}$, the inner product $\vec c\vec g$ is equal to $\sum_{k=0}^{\#G-1} c_{j_k} g_{j_k}$. This means that only the $\#G$ entries $c_{j_0}$, \dots, $c_{j_{\#G-1}}$ of $\vec c$ need to be computed.
Applying the recurrence relation~\eqref{eq:recurrence} as many times as necessary, we obtain the new recurrence relation 
\begin{equation}\label{eq:sparserec}
  \begin{cases}
    c_{j_{k+1}}=\alpha^{j_{k+1}-j_k}c_{j_k} -P(\alpha)\sum_{\ell=j_k+1}^{j_{k+1}}\alpha^\ell f_{n-\ell}& \text{ for } k\geq 0 \\
    c_{j_0}= ((X^{j_0}F)\bmod X^n-1) (\alpha).
  \end{cases}
\end{equation}
The initial value $c_{j_0}$ can be computed using $\gO(\#F\log n)$ operations in $\gr$ since it needs $\#F$ exponentiations of
$\alpha$ with exponent bounded by $n$. Most values of $f_{n-\ell}$ are actually equal to zero since $F$ is sparse.

A nonzero coefficient $f_t$ of $F$ appears in the definition of $c_{j_{k+1}}$ if and only if $n-j_{k+1}\leq t< n-j_{k}$.  Thus,
each $f_t$ is used exactly once to compute all the $c_{j_k}$'s.  Since for each summand, one needs to compute $\alpha^\ell$ for
some $\ell<n$, the total cost for computing all the sums is $\gO(\#F\log n)$ operations in $\gr$. Similarly, the computation of
	$\alpha^{j_{k+1}-j_k}c_{j_k}$ for all $k\in[0,\#G-2]$ costs $\gO(\#G\log n)$ operations in $\gr$
	plus $\#G-1$ additions of $\gO(\log n)$-bit integers to get the exponents.
	As one operation in $\gr$ requires at least one bit operation, the integer additions that costs $\gO(\#G\log n)$ bit operations are negligible.
	The last remaining step is the final inner product which costs
$\gO(\#G)$ operations in $\gr$, whence the result. 
\end{proof}

As in the dense case, one can be more precise on the complexity if $\alpha$ liesin an extension $\grext$.
In contrary to the dense case where there is more operations in $\gr$ than in $\grext$, one can note that the number of operations in $\gr$ is negligible in the sparse case.

\begin{cor} \label{cor:sevalbin} Let $F$ and $G$ be two sparsely represented polynomials in $\gr[X]$ of degrees less than $n$ and
  $\alpha\in\grext$. The polynomial $(FG)\bmod X^n-1$ can be evaluated on $\alpha$ using
	$\log n + \gO((\#F+\#G)\log n/\log\log n)$ operations in $\grext$ plus $\#G-1$ additions of $\gO(\log n)$-bit integers.
\end{cor}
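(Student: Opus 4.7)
The plan is to refine the analysis of the algorithm already used for Theorem~\ref{thm:sparseevalbin} by carefully splitting its cost into operations in $\gr$, operations in $\grext$, and integer operations on exponents, then applying Yao's simultaneous exponentiation to reduce the total cost of computing the needed powers of $\alpha$.

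Recall that the algorithm needs three kinds of quantities in $\grext$: the initial value $c_{j_0}=((X^{j_0}F)\bmod X^n-1)(\alpha)$, the products $\alpha^{j_{k+1}-j_k}c_{j_k}$ appearing in the recurrence~\eqref{eq:sparserec} for $k=0,\dotsc,\#G-2$, and the scalar combinations $\sum_{\ell} \alpha^\ell f_{n-\ell}$ that appear as the second term of each right-hand side. The first step is to observe that, exactly as in the proof of Theorem~\ref{thm:sparseevalbin}, across all indices $k$, each nonzero coefficient $f_t$ of $F$ appears in exactly one of these sums. Hence the list of powers of $\alpha$ that ever needs to be multiplied by a nonzero coefficient of $F$ has size $\#F$, and the list of powers $\alpha^{j_{k+1}-j_k}$ used to update $c_{j_k}$ has size $\#G-1$; all these exponents are bounded by $n$.

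Next I would apply simultaneous exponentiation~\cite{Yao1976}, as recalled in the preliminaries: given a target set of $T$ exponents each less than $n$, all the corresponding powers of $\alpha\in\grext$ can be computed in $\log n + \gO(T\log n/\log\log n)$ operations in $\grext$ (the additive $\log n$ accounts for the one-time precomputation of a table of small powers of $\alpha$). Applying this once with the union of the two lists above gives all the powers we need in $\log n + \gO((\#F+\#G)\log n/\log\log n)$ operations in $\grext$. Scaling each such power by an element of $\gr$ and summing, both to assemble $c_{j_0}$ and to evaluate the $\sum_\ell\alpha^\ell f_{n-\ell}$ sums, as well as performing the $\#G-1$ updates $c_{j_{k+1}}\gets\alpha^{j_{k+1}-j_k}c_{j_k}-P(\alpha)(\cdots)$ and the final inner product with $G$, adds only $\gO(\#F+\#G)$ further operations in $\grext$, which is absorbed in the stated bound.

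The only remaining cost is bookkeeping on exponents: to drive the simultaneous exponentiation table and the recurrence, one must compute the $\#G-1$ differences $j_{k+1}-j_k$, each of bit size $\gO(\log n)$. This gives the extra $\#G-1$ integer additions mentioned in the statement, which are kept separate from the $\grext$-operation count because they do not involve any arithmetic in $\grext$ or in $\gr$. I do not expect any real obstacle here; the main care is bookkeeping, making sure that the list of powers of $\alpha$ fed to simultaneous exponentiation really is of size $\gO(\#F+\#G)$ and that each power is then reused a constant number of times, so that a single invocation of Yao's method suffices and its precomputation cost $\log n$ is paid only once.
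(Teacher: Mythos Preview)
Your proposal is correct and follows essentially the same approach as the paper: enumerate the $\gO(\#F+\#G)$ powers of $\alpha$ needed (for $c_{j_0}$, the recurrence updates, and the sums), apply Yao's simultaneous exponentiation once to obtain them all in $\log n+\gO((\#F+\#G)\log n/\log\log n)$ operations in $\grext$, and observe that the remaining work is $\gO(\#F+\#G)$ operations plus the $\#G-1$ integer differences. The paper is slightly more explicit in its bookkeeping---it counts up to $2(\#F+\#G)$ exponents since each nonzero $f_t$ may contribute both to $c_{j_0}$ and to one of the sums, and it also lists $\alpha^n$ for $P(\alpha)$---but your undercount by a constant factor does not affect the asymptotic bound.
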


\begin{proof}
	We first notice that in the sparse case the operations on $\alpha$ dominate the complexity. These operations are operations in $\grext$.
  To improve the complexity estimates, we remark that in the sparse settings we need to compute $\alpha^t$ for several values of
  $t$. The computation of $c_{j_0}$ requires to know $\#F$ values of $\alpha^t$, more precisely those with $t = \ell-j_0\bmod n$
  for each nonzero coefficient $f_\ell$ of $F$. To apply Equation~\eqref{eq:sparserec}, one needs to compute $\alpha^t$ for
  $t = j_{k+1}-j_k$, $1\le k<\#G$, and for $t = \ell$ where $f_{n-\ell}\neq 0$. The value $\alpha^n$ is also needed to compute
  $P(\alpha)$. Finally, the inner product requires to compute $\alpha^t$ for each nonzero $g_t$. Altogether, one needs $\alpha^t$
  for at most $2(\#F+\#G)$ values of $t$, each at most $n$. They can be computed independently using fast exponentiation, using at
  most $\gO((\#F+\#G)\log n)$ multiplications, as it is done in Theorem~\ref{thm:sparseevalbin}. Actually, Yao~\cite{Yao1976}
  shows that these values of $\alpha^t$ can be computed simultaneously using only $\log n + \gO((\#F+\#G)\log n/\log\log n)$
  multiplications.  Once these $\alpha^t$ have been computed, computing $c_{j_0}$ and the $c_{j_k}$'s by means of
  Equation~\eqref{eq:sparserec}, as well as the inner product $\vec c\vec g$, only require $\gO(\#F+\#G)$ operations.

\end{proof}

\subsection{Evaluation of a dense modular product}\label{ssec:eval}

In this section, we extend the previous algorithm to the evaluation of a polynomial $FG\bmod P$ where $P$ is any monic sparse
polynomial. We first consider the case where $F$ and $G$ are given in dense representation. The case where they are given is
sparse representation is postponed to the next section.

The algorithm goes along the same lines as the evaluation modulo $X^n-1$.
Let $\F[i] = (X^iF)\bmod P$. We can rewrite $FG\bmod P=\sum_{i=0}^{n-1}g_i\F[i]$ where $g_i$ is the coefficient of degree $i$ in
$G$. The evaluation of this equality on a point $\alpha$ yields the
formula 
\begin{equation}\label{eq:eval}
(FG\bmod P)(\alpha)= \sum_{i=0}^{n-1}g_i\F[i](\alpha).
\end{equation}
To make use of this formula, we need to be able to efficiently evaluate each $\F[i]$ on $\alpha$. Note that consecutive $\F[i]$'s
are bound by the recurrence relation $F^{[i+1]}=(X\F[i])\bmod P$. Since $\deg(\F[i]) = n-1$,
$(X \F[i])\bmod P = X\F[i] - \f[i]_{n-1}P$ where $\f[i]_{n-1}$ is the coefficient of degree $n-1$ of $\F[i]$.  Consequently we
have the following recurrence relation
\begin{equation}\label{eq:rec}
  \begin{cases}
    F^{[i+1]}(\alpha)=\alpha \F[i](\alpha)-\f[i]_{n-1}P(\alpha) & \text{ for } i\geq 0 \\
    F^{[0]}(\alpha) = F(\alpha)
  \end{cases}
\end{equation}
The evaluations of each $\F[i]$ on $\alpha$ can thus be computed iteratively from $F(\alpha)$, only knowing the coefficient
$\f[i]_{n-1}$ of $\F[i]$ for $0 < i < n-1$.

We first present an algorithm to compute these coefficients. Note that we did not need such an algorithm when $P = X^n-1$ since
these coefficients were given for free as we had $\f[i]_{n-1} = f_{n-1-i}$. In the general case, the computation is based on the
recurrence relation $\F[i+1] = X\F[i]-\f[i]_{n-1}P$, which implies
\begin{equation}\label{eq:recFi}
\f[i+1]_{k} = \f[i]_{k-1} - \f[i]_{n-1}p_k
\end{equation}
for $0 < k \le n-1$. This allows to compute each $\f[i]_{n-1}$, starting from the values of $\f[0]_{k}$ for all $k$. These values are given as input since $\F[0] = F$ by definition. Note that since $P$ is a sparse polynomial, Equation~\eqref{eq:recFi} actually reduces to an equality $\f[i+1]_k = \f[i]_{k-1}$ in many cases. Algorithm~\ref{algo:coef} takes this into account and only performs the required updates.

\begin{algorithm}
\caption{\coef}
\label{algo:coef}
\begin{algorithmic}[1]
  \Input Two polynomials $P$ and $F$ in $\gr[X]$, with $\deg(F) < \deg(P)=n$ and $P$ monic. 
 
  \Output The vector $[\f[0]_{n-1}, \f[1]_{n-1}, \dotsc, f^{[n-2]}_{n-1}]$, where $\f[i]_{n-1}$ is the coefficient of degree $n-1$ of $\F[i] = (X^iF)\bmod P$. 

\State $V\gets [f_{n-1}, f_{n-2}, \dotsc, f_1]$ 
    \label{coef:init}
\For{$i=0$ to $n-2$} 
	\For{$k\in\supp(P)$ such that $i < k < n$} 
		\State $V[i+n-k] \gets V[i+n-k] - p_kV[i]$ \label{coef:update}
	\EndFor
\EndFor
\State \Return $V$
\end{algorithmic}
\end{algorithm}

\begin{lem} \label{lem:lastcoefdense}
Algorithm~\ref{algo:coef} is correct. It uses $\gO(n\#P)$ operations in $\gr$. 
\end{lem}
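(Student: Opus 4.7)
The plan is to prove correctness via a loop invariant on $V$, and then read off the operation count directly from the structure of the two nested loops. The central invariant I would maintain is the following: at the start of iteration $i$ (for $0 \le i \le n-2$), the vector satisfies $V[j] = \f[i]_{n-1-j+i}$ for every index $j$ with $i \le j \le n-2$. Specialised to $j=i$ this gives $V[i] = \f[i]_{n-1}$, which is exactly the entry the algorithm returns in the $i$th slot of its output.

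The base case $i=0$ is immediate from line~\ref{coef:init}, since $\F[0]=F$ gives $\f[0]_k=f_k$ and the initialisation assigns $V[j]=f_{n-1-j}$ for $0 \le j \le n-2$. For the inductive step I would rely on Equation~\eqref{eq:recFi}, namely $\f[i+1]_k = \f[i]_{k-1} - \f[i]_{n-1}\,p_k$. Writing $k = i+n-j$, this becomes $\f[i+1]_{n-1-j+(i+1)} = \f[i]_{n-1-j+i} - p_k \f[i]_{n-1}$, and the induction hypothesis identifies the right-hand side with $V[j] - p_k V[i]$. When $k \in \supp(P)$ and $i < k < n$ with $k \neq i+1$, this is exactly the update performed on line~\ref{coef:update}; when $k \notin \supp(P)$ we have $p_k = 0$ and the unchanged entry $V[j]$ already equals $\f[i+1]_{n-1-j+(i+1)}$, so no action is required. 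The only edge case is $k = i+1$, which would formally write into the out-of-range slot $V[n-1]$, but the coefficient $\f[i+1]_0$ that this slot would encode is never needed by the invariant at any later iteration (which only constrains indices $j \ge i+1$), so such a write can be silently discarded without affecting the returned values.

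For the complexity I would observe that the outer loop has $n-1$ iterations and that the inner loop at iteration $i$ iterates over $\supp(P) \cap (i, n)$, which has at most $\#P$ elements. Each body executes one multiplication and one subtraction in $\gr$, so the total cost is $\gO(n\#P)$ operations in $\gr$, as claimed. The main obstacle in the whole argument is really only disentangling the index translation between positions in $V$ and the coefficients $\f[i]_k$ they represent; once the invariant is pinned down in the form above, both the correctness step and the complexity bound fall out with a short induction.
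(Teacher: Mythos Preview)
Your proof is correct and takes essentially the same approach as the paper: both maintain a loop invariant relating $V[j]$ to the coefficients $\f[i]_{n-1-j+i}$ (the paper phrases it as a post-condition $V[j]=\f[i+1]_{n-(j-i)}$ for $j>i+1$ together with $V[j]=\f[j]_{n-1}$ for $j\le i+1$) and carry out the inductive step via Equation~\eqref{eq:recFi}. One small point worth making explicit: your invariant only pins down $V[i]=\f[i]_{n-1}$ at the \emph{start} of iteration $i$, so to conclude the returned vector is correct you are implicitly using that every write in iteration $i'\ge i$ lands at an index $>i'$ and hence never touches $V[i]$ again; conversely, you handle the out-of-range edge case $k=i+1$ more carefully than the paper, which simply asserts that only $V[i+1],\dots,V[n-2]$ can be modified.
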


\begin{proof}
  The number of operations is clear: all operations are performed at Step~\ref{coef:update} and it is called $\gO(n\#P)$
  times. Note that the external for loop can be stopped as soon as there exists no $k\in\supp(P)$ such that $i < k < n$. In other
  words, $i$ never goes beyond
  $\deg(X^n-P)-1$. 

  To show the correctness of Algorithm~\ref{algo:coef}, we prove by induction that after iteration $i$ of the external loop, the following property
  $\mathcal{P}(i)$ holds:
  \[
    V[j] = \f[j]_{n-1}  \text{ for any } j\le i+1 \text{ and }
    V[j] = f^{[i+1]}_{n-(j-i)} \text{ for } j > i+1.
  \]

Before the first iteration, $\mathcal{P}(-1)$ holds since it reads $V[j] = \f[0]_{n-j-1}$ for all $j$, and $F = \F[0]$ by definition.

Suppose that $\mathcal{P}(i-1)$ holds. In particular, $V[j] = \f[j]_{n-1}$ for $j\le i$. During iteration $i$, only $V[i+1]$ to
$V[n-2]$ can be modified so these equalities remain after that iteration.
For $j > i$, $V[j] = \f[i]_{n-(j-i+1)}$ before the iteration by hypothesis. After the iteration, it becomes
\[
  V[j]=\f[i]_{n-(j-i+1)}-p_{n-j+i}V[i] = \f[i]_{n-j+i-1}-p_{n-j+i}\f[i]_{n-1}.
\]
Equation~\eqref{eq:recFi} shows that $V[j] = f^{[i+1]}_{n-j+i}$ after Step~\ref{coef:update}, and $\mathcal{P}(i)$ holds.

To conclude, after the last iteration, $V[j] = \f[j]_{n-1}$ for all $j\le n-2$ and the algorithm is correct.
\end{proof}

We can now make use of Algorithm~\ref{algo:coef} to evaluate $FG\bmod P$ on a point $\alpha$. In the following algorithm, we assume that $\alpha$
belongs to some extension ring $\grext$ of $\gr$. Our analysis distinguishes between operations in $\gr$ and in $\grext$.  
\begin{algorithm}
\caption{\eval}
\label{algo:eval}
\begin{algorithmic}[1]
  \Input $P$, $F$, $G\in\gr[X]$ with $\deg(F),\deg(G) < \deg(P)=n$, $P$ monic, and $\alpha\in\grext$.
 
  \Output $(FG\bmod P)(\alpha)$

\State $V\gets [\f[0]_{n-1},\dotsc,f^{[n-2]}_{n-1}]$ using a call to $\coef(P, F)$
\State $P_\alpha\gets P(\alpha)$ 
\State $F_\alpha\gets F(\alpha)$ 
\State $\beta \gets F_\alpha g_0$
\For{$i=1$ to $n-1$} 
	\State $F_\alpha\gets\alpha F_\alpha - V[i-1]P_\alpha$ \label{eval:updateFia} 
	\State $\beta\gets\beta + F_\alpha g_i$ \label{eval:updateBeta}
\EndFor
\State \Return $\beta$
\end{algorithmic}
\end{algorithm}

\begin{thm} \label{thm:evalmodPdense} 
  Algorithm~\ref{algo:eval} is correct. 
  It uses $\gO(n\#P)$ operations in $\gr$ and $\gO(n)$ operations in $\grext$.
\end{thm}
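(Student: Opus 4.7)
The plan is to establish a loop invariant expressing $F_\alpha$ and $\beta$ in terms of the polynomials $\F[i]$, combine it with identity~\eqref{eq:eval}, and then account for the cost of each phase of the algorithm separately in $\gr$ and $\grext$.

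First I would prove by induction on $i$ the invariant that, at the end of iteration $i$ of the for loop---taking $i=0$ to mean the state just after the initialization and before entering the loop---one has $F_\alpha = \F[i](\alpha)$ and $\beta = \sum_{j=0}^{i} g_j\F[j](\alpha)$. The base case is immediate from $\F[0]=F$, from the assignment $F_\alpha\gets F(\alpha)$, and from $\beta\gets F_\alpha g_0$. For the inductive step, Lemma~\ref{lem:lastcoefdense} applied to the call $\coef(P,F)$ guarantees that $V[i-1]=\f[i-1]_{n-1}$, so Step~\ref{eval:updateFia} is exactly the recurrence~\eqref{eq:rec} evaluated at $\alpha$ and produces $\F[i](\alpha)$, while Step~\ref{eval:updateBeta} adds the correct new term $g_i \F[i](\alpha)$ to $\beta$. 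After the last iteration, equation~\eqref{eq:eval} yields $\beta = (FG\bmod P)(\alpha)$, establishing correctness.

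Next I would analyse the complexity by segregating operations purely in $\gr$ from those involving $\grext$. The call to \coef contributes $\gO(n\#P)$ operations in $\gr$ by Lemma~\ref{lem:lastcoefdense}; no subsequent step performs arithmetic purely in $\gr$. Horner's scheme on the dense polynomial $F$ evaluates $F(\alpha)$ in $\gO(n)$ operations in $\grext$. For the sparse $P$, I would use a sparse-Horner nesting over the support: writing $\supp(P)\cup\{n\} = \{k_0<k_1<\dotsb<k_t\}$, the jumps $\alpha^{k_{j+1}-k_j}$ are built by repeated multiplication by $\alpha$, and the telescoping bound $\sum_j (k_{j+1}-k_j) \le n$ keeps the cost at $\gO(n)$ operations in $\grext$. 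Each of the $n-1$ loop iterations performs one multiplication in $\grext$ (namely $\alpha F_\alpha$) together with a constant number of scalar multiplications and additions, i.e.\ $\gO(1)$ operations in $\grext$ per iteration. Summing everything gives the announced bounds.

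The main obstacle I anticipate is keeping the $\grext$-count linear in $n$ and independent of $\#P$. Had we tried to re-reduce $(X^i F)\bmod P$ on the fly inside $\grext$, or to exponentiate each $\alpha^{k_j}$ separately by fast exponentiation, we would incur an extra factor of $\#P$ or $\log n$. The recurrence~\eqref{eq:rec} is precisely what allows us to advance $\F[i-1](\alpha)$ to $\F[i](\alpha)$ using only $\gO(1)$ operations in $\grext$, consuming the precomputed scalar $V[i-1]\in\gr$ and the single value $P(\alpha)$. The $\#P$-dependence is thereby confined to the cheaper preprocessing phase carried out entirely inside $\gr$ by \coef.
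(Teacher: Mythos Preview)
Your proposal is correct and follows essentially the same approach as the paper: establishing that Step~\ref{eval:updateFia} implements the recurrence~\eqref{eq:rec} so that $F_\alpha=\F[i](\alpha)$, then invoking~\eqref{eq:eval} for correctness, and attributing the $\gO(n\#P)$ cost in $\gr$ to Lemma~\ref{lem:lastcoefdense} while the remaining steps account for $\gO(n)$ operations in $\grext$. Your write-up is simply more explicit about the loop invariant and about how $P(\alpha)$ is evaluated within the $\gO(n)$ budget, which the paper leaves implicit.
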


\begin{proof}
Step~\ref{eval:updateFia} relies on Equation~\eqref{eq:rec} to compute $F_\alpha = \F[i](\alpha)$. 
Step~\ref{eval:updateBeta} uses this evaluation together with Equation~\eqref{eq:eval} to correctly compute $(FG\bmod P)(\alpha)$.
The first step requires $\gO(n\#P)$ operations in $\gr$ by Lemma~\ref{lem:lastcoefdense}. (It does not depend on $\alpha$.) The other steps require $\gO(n)$ operations in $\grext$.
\end{proof}

As before, we notice that the operations in $\grext$ are sometimes \emph{scalar} multiplications, that is multiplications of an
element of $\grext$ by an element of $\gr$. We provide an analysis that minimizes the number of non-scalar multiplications.

\begin{cor}
  Let $P$, $F$, $G$ and $\alpha$ as in Algorithm~\ref{algo:eval}. Then $(FG)\bmod P$ can be evaluated on $\alpha$ using $2n-2$
  multiplications and $(3n-5+\#P)$ additions in $\grext$, $(3n-3+\#P)$ scalar multiplications in $\grext$, and $(n-1)(\#P-1)$ multiplications and
  additions in $\gr$.
\end{cor}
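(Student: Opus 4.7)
The plan is a careful accounting of the operations performed in Algorithm~\ref{algo:eval}, separating the three kinds tracked in the statement: non-scalar multiplications and additions in $\grext$, scalar multiplications (an element of $\gr$ times an element of $\grext$), and multiplications and additions in $\gr$. The cost in $\gr$ comes entirely from the call to \coef, while the rest of the algorithm performs operations involving $\alpha$ and hence takes place in $\grext$ (possibly scalar). Throughout, the key strategic decision is to evaluate $F$ and $P$ via precomputed powers of $\alpha$, just as in the corollary for the binomial case, which keeps the non-scalar count tight.

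First, I would handle the \coef call by inspecting Algorithm~\ref{algo:coef}. The only arithmetic happens on line~\ref{coef:update}, where exactly one multiplication and one subtraction are performed in $\gr$. The number of executions equals the number of pairs $(i, k)$ with $0 \le i \le n - 2$, $k \in \supp(P)$, and $i < k < n$. For each fixed $i$ there are at most $\#P - 1$ admissible $k$ (the leading exponent being excluded), giving at most $(n - 1)(\#P - 1)$ pairs and hence the stated bound on operations in $\gr$.

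Next, to minimize the non-scalar multiplication count in $\grext$, I would precompute $\alpha^2, \alpha^3, \dots, \alpha^n$ by successive multiplications by $\alpha$, using $n - 1$ non-scalar multiplications. With these powers available, $F(\alpha) = f_0 + \sum_{i=1}^{n-1} f_i \alpha^i$ is obtained as an inner product in $n - 1$ scalar multiplications and $n - 1$ additions. Since $P$ is monic and sparse, $P(\alpha) = \alpha^n + \sum_{k \in \supp(P) \setminus \{n\}} p_k \alpha^k$ costs only $\#P - 1$ scalar multiplications and $\#P - 1$ additions. The initialization $\beta \gets F_\alpha g_0$ contributes one more scalar multiplication.

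Finally, each of the $n - 1$ iterations of the main loop executes line~\ref{eval:updateFia}, costing one non-scalar multiplication ($\alpha \cdot F_\alpha$), one scalar multiplication ($V[i-1] \cdot P_\alpha$), and one subtraction in $\grext$; followed by line~\ref{eval:updateBeta}, costing one scalar multiplication ($g_i \cdot F_\alpha$) and one addition. Adding these loop contributions to the preliminary counts yields the announced totals of $2n - 2$ non-scalar multiplications, $3n - 3 + \#P$ scalar multiplications, and $3n - 5 + \#P$ additions in $\grext$, together with $(n - 1)(\#P - 1)$ operations of each type in $\gr$. There is no conceptual obstacle here: the argument is essentially careful bookkeeping, the only subtlety being to account precisely for the boundary terms (the monic leading term of $P$, the constant term $f_0$ of $F$, and the separate initialization of $\beta$) in order to obtain the exact constants.
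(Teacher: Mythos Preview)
Your approach is essentially identical to the paper's: precompute the powers $\alpha^2,\dots,\alpha^n$, evaluate $F$ and $P$ as inner products against these powers, then account for the loop. One small bookkeeping slip: your stated intermediate counts (namely $n-1$ additions for $F(\alpha)$, $\#P-1$ for $P(\alpha)$, and $2(n-1)$ in the loop) sum to $3n-4+\#P$ additions, not $3n-5+\#P$; the paper obtains the extra $-1$ by attributing only $n-2$ additions to $F_\alpha$, so your final tally does not follow from your own itemization.
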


\begin{proof}
  We first note that the number of operations performed by Algorithm~\ref{algo:coef} is at most $(n-1)(\#P-1)$ multiplications and additions
  in $\gr$. In Algorithm~\ref{algo:eval}, we need to evaluate both $P$ and $F$ on $\alpha$. To minimize the number of non-scalar
  multiplications, we first compute $\alpha^2$, \dots, $\alpha^n$ using $n-1$ multiplications in $\grext$. We can then compute
  $P_\alpha$ using $\#P-1$ scalar multiplications and $\#P-1$ additions, and $F_\alpha$ using $n-1$ scalar multiplications and
  $n-2$ additions. Then, the initialization of $\beta$ and the for loop require $n-1$ multiplications, $2n-1$ scalar
  multiplications and $2n-2$ additions. This results in $2n-2$ multiplications in $\grext$, $(3n-3+\#P)$ scalar multiplications in
  $\grext$, and $(3n-5+\#P)$ additions in $\grext$.
\end{proof}

In a different context where the aim is specifically to compute the evaluation with no restriction to the use of polynomial arithmetic one can first compute the polynomial $FG\bmod P$ and then evaluate it on $\alpha\in\grext$.
Such method requires $\gO(n\log n\log\log n)$ operations in $\gr$ for the polynomial multiplication $F\times G$ and division by $P$ and $\gO(n)$ operations in $\grext$ for the evaluation.
Thus we see that if $P$ verifies $\#P<\log n\log\log n$ our technique is more efficient.

\subsection{Evaluation of a sparse modular product}\label{ssec:sparseval}

In this section, we adapt and analyse the previous algorithms for polynomials $F$ and $G$ given in sparse representation. 
Our results depend on the difference between the highest and the second highest exponents in $P$. Recall that the \emph{gap
  parameter} $\gamma$ is a measure of this difference, defined by $1-\gamma = \frac{1}{n}\max\{k < n:p_k\neq 0\}$. In particular,
the second highest exponent with nonzero coefficient in $P$ is
$(1-\gamma)n$. 
Proposition~\ref{prop:sparsity+coeffbound} gives a relation between the gap parameter and the sparsity of $(FG)\bmod
P$. The potential growth of the sparsity induced by the reduction modulo
$P$ explains the dependency of our results on the gap parameter.

As $G$ is sparse, Equation~\eqref{eq:eval} becomes 
\begin{equation}\label{eq:seval}
(FG\bmod P)(\alpha)=\sum_{i\in\supp(G)}g_i\F[i](\alpha),
\end{equation}
with the same notations as in the previous section.  The recurrence relation 
$\F[i+1]=X\F[i]-\f[i]_{n-1}P$ still holds, hence $\F[i+1](\alpha) = \alpha\F[i](\alpha) -
\f[i]_{n-1}P(\alpha)$ too. The goal now is to efficiently compute $\F[i](\alpha)$ for all
$i\in\supp(G)$ only, not for all indices $i$.
When $\gamma$ is not close to zero, there are actually few indices $i$ such that $\f[i]_{n-1}\neq
0$. In fact, the number of such indices depends on $\#F, \#P$ and $\gamma$.  Let $I =
\{i_1,\dotsc,i_t\}$ denote this set of indices. We will prove in Lemma \ref{lem:lastcoefsparse} that this set is of size
$\gO(\#F\#P^{\lceil 1/\gamma-1\rceil})$.  We decide to first provide some arguments and an explicit algorithm to prove this claim.

An important remark is that for any $0\leq j < n-1$, in particular those verifying $j\in\supp(G)$, if we assume $i$ to be the largest
index in $I$ not larger than $j$, then Equation~\eqref{eq:recFi} implies
\begin{equation}\label{eq:srec2}
  \F[j](\alpha) = \alpha^{j-i}\F[i](\alpha).
\end{equation}

Therefore, the recurrence relation given in~\eqref{eq:rec} becomes 
\begin{equation}\label{eq:srec1}
  \begin{cases}
    \F[i_{k+1}](\alpha) = \alpha^{i_{k+1}-i_k-1}(\alpha\F[i_k](\alpha) - \f[i_k]_{n-1}P(\alpha)) & \text{ for } k\geq 0\\
    \F[i_1](\alpha) = \alpha^{i_1}\F[0](\alpha)
  \end{cases}
\end{equation}
To efficiently use Equations~\eqref{eq:srec2} and~\eqref{eq:srec1} to perform the evaluation, we need to provide a sparse variant
of Algorithm~\ref{algo:coef}. It computes a sparse representation of the vector $V = [\f[0]_{n-1}, \dotsc, \f[n-2]_{n-1}]$, that is the
sparse vector $\{(i, \f[i]_{n-1}) : \f[i]_{n-1}\neq 0\}$.

The idea of Algorithm~\ref{algo:scoef} is to mimic Algorithm~\ref{algo:coef} in the sparse settings.  For simplicity of the presentation, we first
consider to store $V$ as a sparse vector as it is sufficient to our needs for proving our claims on the size of the set $I$.  We
will show in Corollary~\ref{cor:vanEmdeBoas} that we must require another structure to minimize the complexity attached to data
management.

The initial nonzero values in $V$ are the nonzero coefficients of $F = \F[0]$, with $V[i] = f_{n-1-i}$ if
$n-i-1\in\supp(F)$. 
Let now consider the external loop in Algorithm~\ref{algo:coef}. Iteration $i$ does not require any operation if $\f[i]_{n-1} = 0$ since
Equation~\eqref{eq:recFi} reduces to $\f[i+1]_k = \f[i]_{k-1}$ in that case. Therefore, we must loop over indices $i$ such that
$\f[i]_{n-1}$ is nonzero. For such an index $i$, the same updates as in Step~\ref{coef:update} of Algorithm~\ref{algo:coef} are
required. 
For $k\in\supp(P)$, $i < k < n$, we must perform the update $V[i+n-k]\gets V[i+n-k]-p_k\f[i]_{n-1}$. If $V[i+n-k]$ is already
nonzero, its value is already stored in $V$ and must be updated. Otherwise, the new value $-p_k\f[i]_{n-1}$ must be inserted in
$V$ with index $i+n-k$.

It remains to be able to only loop over the indices $i$ such that $\f[i]_{n-1} \neq 0$. Let us assume that iteration $i$ has been
performed since $\f[i]_{n-1} \neq 0$. The proof of Lemma~\ref{lem:lastcoefdense} shows that $V[i+1]$ then contains
$\f[i+1]_{n-1}$. Therefore in the sparse setting, we know that iteration $i+1$ has to be performed if, and only if,
$V[i+1]\neq 0$. More generally, the next index to be considered is the index of the next nonzero entry of $V$ after $V[i]$. 
Algorithm~\ref{algo:scoef}  below uses such method for computing all the indices $i$ such that $\f[i]_{n-1}$ is nonzero.

\begin{algorithm}
\caption{\scoef}
\label{algo:scoef}
\begin{algorithmic}[1]
  \Input $P$, $F\in\gr[X]$ with $\deg(F) < \deg(P)=n$, and $P$ monic 
 
  \Output The list $\{(i, \f[i]_{n-1}) : 0\le i < n-1, \f[i]_{n-1}\neq 0\}$, sorted by increasing values of $i$. 

\State $L\gets$ empty list
\State $V\gets\{(i,f_{n-1-i}) : n-1-i\in\supp(F)\}$ (sparse vector) 
\While{$V$ is not empty}
    \State $(i,v)\gets$ extract the element of minimal index from $V$
    \If{$v\neq 0$}
        \State Add $(i,v)$ to the list $L$ 
        \For{$k\in\supp(P)$ such that $i < k < n$} 
              \State $V[i+n-k]\gets V[i+n-k]-p_k v$ \label{scoef:update}
        \EndFor
    \EndIf
\EndWhile
\State \Return $L$
\end{algorithmic}
\end{algorithm}

\begin{lem} \label{lem:lastcoefsparse} \sloppy
  Algorithm~\ref{algo:scoef} is correct. If the polynomial $P$ has a gap parameter $\gamma$, the algorithm uses $\gO(\#F\#P^{\lceil
	1/\gamma-1\rceil})$ operations in $\gr$ and additions of $\gO(\log n)$-bits integers.
	In particular, there are at most $\#F\#P^{\lceil 1/\gamma-1\rceil}$ indices
  $i$ such that $\f[i]_{n-1}\neq 0$.
\end{lem}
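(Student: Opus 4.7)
The proof naturally splits into correctness, the bound on the cardinality of $I := \{i : 0 \le i < n-1,\ \f[i]_{n-1} \neq 0\}$, and the operation count. I would derive all three from a single forest associated with the execution of Algorithm~\ref{algo:scoef}.

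For correctness, the plan is to piggyback on the invariant from Lemma~\ref{lem:lastcoefdense} and proceed by induction on the successive non-zero extractions. Whenever $\f[i]_{n-1} = 0$, Equation~\eqref{eq:recFi} reduces to the pure shift $\f[i+1]_k = \f[i]_{k-1}$, which in sparse representation is a tautology: the multiset of (index, value) pairs stored in $V$ is unchanged. Consequently Algorithm~\ref{algo:scoef} may safely skip all such iterations, and the min-extraction step always produces the next pair $(i, \f[i]_{n-1})$ with a non-zero value. Combined with the initialization $V[i] = f_{n-1-i}$ matching $\F[0] = F$, this reproduces exactly the output of Algorithm~\ref{algo:coef} while never visiting zero entries.

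For the size of $I$ and the complexity, I would view the execution as a forest. The $\#F$ initial entries of $V$ form the roots, and every update $V[i+n-k] \leftarrow V[i+n-k] - p_k v$ creates a \emph{child} at index $i+n-k$ of the entry currently processed at index $i$. Since the second degree of $P$ equals $(1-\gamma)n$, the increment $n-k$ is at least $\gamma n$ for every $k \in \supp(P)$ with $i < k < n$. Hence any depth-$d$ descendant of an initial entry has index at least $d\gamma n$; requiring this to remain below $n$ forces $d \le D := \lceil 1/\gamma - 1 \rceil$. Since each node spawns at most $\#P - 1$ children, the forest contains at most $\#F \sum_{d=0}^{D} (\#P-1)^d$ nodes, and a short induction on $D$ gives $\sum_{d=0}^{D} (\#P-1)^d \le \#P^D$. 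This bounds $|I|$ by $\#F \#P^D$. For the $\gr$-operations, an entry at depth $D$ has index at least $D\gamma n \ge (1-\gamma)n$ and so triggers no updates (its inner loop is empty); only entries at depth $\le D-1$ fire updates, totalling at most $\#F \#P^{D-1}(\#P-1) \le \#F \#P^D$ ring operations, plus one $\gO(\log n)$-bit integer addition per update to form the target index.

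The main difficulty I anticipate is pinning down the depth bound exactly as $\lceil 1/\gamma - 1 \rceil$: the naive estimate $\lfloor 1/\gamma \rfloor$ is off by one precisely when $1/\gamma$ is an integer, so a short case analysis on this parity is needed to avoid losing a factor of $\#P$. Everything else reduces to the single inductive claim $\sum_{d=0}^{D} (\#P-1)^d \le \#P^D$, from which both the cardinality bound on $I$ and the operation count drop out.
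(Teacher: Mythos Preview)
Your proposal is correct and follows essentially the same route as the paper: the ``forest'' you build is exactly the paper's classification of inserted pairs by \emph{generation}, with the same depth bound $D=\lceil 1/\gamma-1\rceil$ obtained from the observation that each update increases the index by at least $\gamma n$, and the same geometric sum $\sum_{d=0}^{D}(\#P-1)^d\le\#P^{D}$ to bound both $|I|$ and the number of ring operations. Your correctness sketch (skipping iterations with $\f[i]_{n-1}=0$ leaves $V$ unchanged, so the sparse algorithm simulates the dense one) is the informal version of the invariant the paper spells out; the only minor imprecision is that the extracted value $v$ can be zero due to cancellation, which the algorithm already handles with its explicit test $v\neq 0$.
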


\begin{proof}
  As explained above, Algorithm~\ref{algo:scoef} is an adaptation of Algorithm~\ref{algo:coef} to the sparse settings that only computes those
  $\f[i]_{n-1}$ that are nonzero, using Equations~\eqref{eq:srec2} and~\eqref{eq:srec1} in place of Equation~\eqref{eq:rec}. 
  Instead of considering all the $F_i[n-1]$ one after the other it only considers those which are not zero.
Let us call ``iteration $i$'' the iteration in the while loop that extract a pair $(i,v)$ from $V$.  To prove the correctness, we
prove by induction that at the end of iteration $i$,
\[
  V = \{(j, \f[i+1]_{n-j+i}) : j>i, \f[i+1]_{n-j+i}\neq 0\} \text{ and }
  L = \{(j, \f[j]_{n-1}) : j \le i, \f[j]_{n-1}\neq 0\}.
\]

  Before the loop (``iteration $-1$'') the property holds, that is $L$ is
empty and $V$ contains exactly the pairs $(j, \f[0]_{n-j-1})$ such that $\f[0]_{n-j-1}\neq 0$.

Let us assume that the property holds at then end of iteration $\ell$, and let $(i,v)$ be the pair extracted at the next iteration. We first prove that $\f[i]_{n-1} = v$ and $\f[j]_{n-1} = 0$ for $\ell < j < i$. By minimality of $i$ and induction hypothesis, $\f[\ell+1]_{n-j+\ell} = 0$ for $\ell < j < i$ at the end of iteration $\ell$. In particular, $\f[\ell+1]_{n-1} = 0$. By Equation~\eqref{eq:recFi}, $\f[\ell+2]_{n-j+\ell+1} = \f[\ell+1]_{n-j+\ell} - \f[\ell+1]_{n-1}p_{n-j+\ell+1} = 0$. And an easy recurrence shows that $\f[j]_{n-1} = 0$ for $\ell < j < i$. Now this implies that $\f[i]_{n-1} = \f[i-1]_{n-2} = \dotsb = \f[\ell+1]_{n-i+\ell}$. Yet by induction hypothesis, at the end of iteration $\ell$, $V$ contains $(j, \f[\ell+1]_{n-j+i})$ if $\f[\ell+1]_{n-j+i}\neq 0$. Therefore, if $\f[i]_{n-1}$ is nonzero, $\f[\ell+1]_{n-i+\ell}$ is nonzero too and $V$ contains the pair $(i, \f[\ell+1]_{n-i+\ell}) = (i, \f[i]_{n-1})$. In other words, the value $v$ extracted from $V$ is indeed equal to $\f[i]_{n-1}$ and the property holds for $L$ after iteration $i$.

Now with the same argument, $\f[i]_{n-1-j+i} = \f[\ell+1]_{n-j+\ell}$ for $\ell < j < i$. Right before iteration $i$, $V$ contains then the pairs $(j, \f[i]_{n-1-j+i})$ for $\f[i]_{n-1-j+i}\neq 0$. After iteration $i$, such pairs are replaced by $(j, \f[i]_{n-1-j+i} - p_{n-j+i}\f[i]_{n-1})$, that is $(j, \f[i+1]_{n-j+i})$ by Equation~\eqref{eq:recFi}. And if $\f[i]_{n-1-j+i} = 0$ but $p_{n-j+i}\neq 0$, a new pair $(j, -p_{n-j+i}\f[i]_{n-1}) = (j, \f[i]_{n-j+i})$ is inserted into $V$. Therefore, the property holds for $V$ too after iteration $i$.

The second point is to count the number of operations.  Since the while loop stops when $V$ is empty, the number of operations in $\gr$ is
at most twice the number of pairs that are inserted into $V$ during the algorithm
	and the same number of additions in $\gz$ are performed as the index of each pair is computed by two additions of number at most $n$. We will classify the pairs by
\emph{generations}. Initially, $V$ contains $\#F$ pairs which form generation $0$. New pairs can be inserted into $V$ when a pair
$(i,v)$ is extracted. If $(i,v)$ is a pair of generation $t$, the new pairs inserted at iteration $i$ belong to generation
$t+1$. At any iteration, at most $\#P-1$ pairs are inserted into $V$. Therefore, there are at most $\#F(\#P-1)$ pairs of
generation $1$, $\#F(\#P-1)^2$ pairs of generation $2$, and in general $\#F(\#P-1)^t$ pairs of generation $t$. Now we need to
bound the number of generations. Note the pairs of generation $0$ have an index $i$ between $0$ and $n-1$. But at generation $1$,
the new pairs have index $(i+n-k)$ for some $k\in\supp(P)$, $k < n$. There comes the gap into account: If $P$ has gap parameter
$\gamma$, the largest exponent less than $n$ in $\supp(P)$ is $(1-\gamma)n$ by definition. Therefore, at generation $1$, all pairs
have an index at least $i+n-(1-\gamma)n\ge\gamma n$. At generation $2$, all pairs have then an index at least $2\gamma n$. At
generation $t$, all pairs have an index at least $t\gamma n$. Since indices are bounded by $n-1$, there cannot be any pair of
generation $t$ if $t\gamma n\ge n$. In other words, the largest possible generation is $t = \lceil 1/\gamma -
1\rceil$. Altogether, the total number of pairs inserted into $V$ is at most
\[\sum_{t=0}^{\lceil 1/\gamma-1\rceil} \#F(\#P-1)^t 
        = \#F\ \frac{(\#P-1)^{\lceil 1/\gamma\rceil}-1}{\#P-2} \]
if $\#P > 2$, and is at most $\lceil {1}/{\gamma}\rceil\#F$ if $\#P = 2$. 
To simplify the exposition, we bound both of them by $\#F\#P^{\lceil {1}/{\gamma}-1\rceil}$ in the following.
Note that of course, this number is also a bound on the number of pairs that are extracted from $V$ during the algorithm.

This has two consequences. First, the number of extracted pairs is a bound on the size of the list at then end of the algorithm. Therefore there are at most $\gO(\#F\#P^{\lceil 1/\gamma-1\rceil})$ nonzero values $\f[i]_{n-1}$. Second, this number also bounds the total number of executions of Step~\ref{scoef:update}, that is the total number of operations. 
\end{proof}

\begin{cor}\label{cor:vanEmdeBoas}
  All operations of \textsc{Insertion}, \textsc{Removal}, \textsc{Minimum} and \textsc{Search} of pairs $(i,\nu)$ in the data
  structure $V$ within Algorithm~\ref{algo:scoef} can be done with $\gO(\#F\#P^{\lceil 1/\gamma-1\rceil}\log\log n)$ bit operations.
\end{cor}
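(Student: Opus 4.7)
The plan is to implement the sparse vector $V$ as a van Emde Boas tree indexed by the integer position $i$, exploiting the fact that all indices lie in the bounded universe $[0, n-1]$. Van Emde Boas trees, as cited in the preliminaries, support \textsc{Insertion}, \textsc{Removal}, \textsc{Minimum} and \textsc{Search} in $\gO(\log\log n)$ bit operations per call when the keys are drawn from a universe of size $n$.

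First I would count the number of data-structure calls performed by Algorithm~\ref{algo:scoef}. By the proof of Lemma~\ref{lem:lastcoefsparse}, the total number of pairs ever inserted into $V$ is bounded by $\#F\#P^{\lceil 1/\gamma-1\rceil}$: this bounds both the number of \textsc{Insertion} calls at Step~\ref{scoef:update} (which includes a \textsc{Search} to decide whether to update an existing entry or insert a new one) and the number of \textsc{Minimum}/\textsc{Removal} calls in the while loop (each such pair is extracted at most once). Each iteration moreover performs at most $\#P-1$ such operations, but their cumulative count is already captured by the generational argument in Lemma~\ref{lem:lastcoefsparse}.

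Multiplying the two bounds then yields the stated $\gO(\#F\#P^{\lceil 1/\gamma-1\rceil}\log\log n)$ bit cost. The only subtle point is to check that the van Emde Boas tree can be initialized within the same budget: rather than building a tree for a universe of size $n$ at cost $\gO(n)$, I would initialize it lazily so that only the $\#F$ initial keys (the support of $F$, shifted) are inserted, each at $\gO(\log\log n)$ bit cost, which fits inside the claimed bound since $\#F \le \#F\#P^{\lceil 1/\gamma-1\rceil}$. The main obstacle is thus not the counting (which is inherited from Lemma~\ref{lem:lastcoefsparse}) but making sure the chosen data structure really does avoid an $\gO(n)$ initialization overhead; this is standard for van Emde Boas trees via the usual lazy-allocation trick.
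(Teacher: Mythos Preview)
Your proposal is correct and follows essentially the same approach as the paper: use a van~Emde~Boas tree over the universe $\{0,\dotsc,n-1\}$ so that each \textsc{Insertion}, \textsc{Removal}, \textsc{Minimum} and \textsc{Search} costs $\gO(\log\log n)$ bit operations, and combine this with the $\gO(\#F\#P^{\lceil 1/\gamma-1\rceil})$ bound on the total number of such operations established in Lemma~\ref{lem:lastcoefsparse}. Your treatment is in fact more careful than the paper's, which does not discuss the initialization cost; your lazy-allocation remark closes that gap.
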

\begin{proof}
  By definition the size of the sparse vector $V$ is at most $n$. Therefore, using a data structure for $V$ of type \emph{van Emde
    Boas} tree with a universe of size $n$, ensures that any requested operations can be done with $\gO(\log \log n)$ bit
  operations, see \cite[Chapter 20]{CormenBook}.
\end{proof}

\begin{rem}
	As the bit-complexity of all the operations in $\gz$ of Algorithm~\ref{algo:scoef} is $\gO(\#F\#P^{\lceil 1/\gamma-1\rceil}\log n)$ the cost driven by the data structure of $V$ is negligible.
\end{rem}

Our algorithm to compute the evaluation of the polynomial $FG\bmod P$ on some point $\alpha$, when $F$ and $G$ are given in sparse
representation, relies on Equations~\eqref{eq:seval}, \eqref{eq:srec1} and~\eqref{eq:srec2}.  More precisely, we first compute
each $\F[i](\alpha)$ for indices $i$ such that $\f[i]_{n-1}\neq 0$ by means of Equation~\eqref{eq:srec1}. From these values, we
get each $\F[j](\alpha)$ for $j\in\supp(G)$ by means of~Equation~\eqref{eq:srec2}. Finally, we deduce $(FG\bmod P)(\alpha)$ using
Equation~\eqref{eq:seval}.

In Algorithm~\ref{algo:seval}, all these computations are intertwined.  The idea is to loop over all indices $j$ such that either
$\f[j]_{n-1}\neq 0$ or $j\in\supp(G)$. If $\f[j]_{n-1}\neq 0$, we update the value $\F[j](\alpha)$ using
Equation~\eqref{eq:srec1}. If $j\in\supp(G)$, we accumulate partial evaluations of $(FG\bmod P)(\alpha)$ using
Equations~\eqref{eq:seval} and~\eqref{eq:srec2}.

\begin{algorithm}
\caption{\seval}
\label{algo:seval}
\begin{algorithmic}[1]
  \Input $P$, $F$ and $G\in\gr[X]$, with $\deg(F),\deg(G) < \deg(P)=n$, $P$ monic and $\alpha\in\grext$.
 
  \Output $(FG\bmod P)(\alpha)$

\State $V\gets\{(i,\f[i]_{n-1}): 1\le i < n-1, \f[i]_{n-1}\neq 0\}$, using a call to $\scoef(P,F)$ \label{seval:scoef}
\If {$\f[0]_{n-1} = 0$} insert $(0,0)$ in $V$\EndIf
\State $P_\alpha\gets P(\alpha)$ 
\State $F_\alpha\gets F(\alpha)$
\State $\beta\gets F_\alpha g_0$ \Comment $\beta\gets0$ if $0\notin\supp(G)$
\State $i\gets 0$
\For{$j\in\supp(V)\cup\supp(G)\setminus\{0\}$, by increasing order} \label{seval:loop}
    \If{$j\in\supp(V)$}
      \State $F_\alpha\gets\alpha^{j-i-1}(\alpha F_\alpha - V[i] P_\alpha)$ \Comment Equation~\eqref{eq:srec1} \label{seval:Fa}
      \State $i\gets j$
    \EndIf
    \If{$j\in\supp(G)$}
      \State $\beta\gets\beta + \alpha^{j-i}F_\alpha g_j$ \Comment Equations~\eqref{eq:seval} and \eqref{eq:srec2} \label{seval:beta}
    \EndIf
\EndFor
\State \Return $\beta$
\end{algorithmic}
\end{algorithm}

\begin{thm} \label{thm:evalmodPsparse} 
Algorithm~\ref{algo:seval} is correct. 
It uses $\gO(\#F\#P^{\lceil\frac{1}{\gamma}-1\rceil})$ operations in $\gr$,
	$\gO((\#F\#P^{\lceil\frac{1}{\gamma}-1\rceil}+\#G)\log n)$ operations in $\grext$.
\end{thm}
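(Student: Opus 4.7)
My plan is to prove correctness by maintaining three invariants through the loop at Step~\ref{seval:loop}: after processing an index $j$, (i) the variable $i$ equals the largest element of $\supp(V)$ not exceeding $j$; (ii) $F_\alpha = \F[i](\alpha)$; and (iii) $\beta = \sum_{j'\in\supp(G),\,j'\le j} g_{j'}\F[j'](\alpha)$. The base case (before entering the loop) holds because $i$ is initialized to $0$, Step~\ref{seval:scoef} together with the subsequent conditional insertion ensures $0\in\supp(V)$ with $V[0]=\f[0]_{n-1}$, and the initializations give $F_\alpha=F(\alpha)=\F[0](\alpha)$ and $\beta=g_0\F[0](\alpha)$ (interpreting $g_0=0$ when $0\notin\supp(G)$). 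For the inductive step, when $j\in\supp(V)$, Equation~\eqref{eq:srec1} applied with the previous value of $i$ as pivot justifies the update at Step~\ref{seval:Fa} bringing $F_\alpha$ to $\F[j](\alpha)$ before $i$ is reassigned to $j$. When $j\in\supp(G)$, Equation~\eqref{eq:srec2} gives $\F[j](\alpha)=\alpha^{j-i}F_\alpha$, since the construction of the loop guarantees that no element of $\supp(V)$ lies strictly between the current $i$ and $j$. Once the loop terminates, Equation~\eqref{eq:seval} identifies $\beta$ with $(FG\bmod P)(\alpha)$.

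\textbf{Complexity.} I will decompose the cost into three contributions. First, the call to \scoef\ at Step~\ref{seval:scoef} costs $\gO(\#F\#P^{\lceil 1/\gamma-1\rceil})$ operations in $\gr$ and produces a set $V$ of cardinality $\gO(\#F\#P^{\lceil 1/\gamma-1\rceil})$, by Lemma~\ref{lem:lastcoefsparse}. Second, evaluating $P(\alpha)$ and $F(\alpha)$ via fast exponentiation of each of their monomials costs $\gO((\#P+\#F)\log n)$ operations in $\grext$. Third, the main loop performs $\gO(\#F\#P^{\lceil 1/\gamma-1\rceil}+\#G)$ iterations, each computing one power of $\alpha$ of exponent less than $n$ by fast exponentiation at a cost of $\gO(\log n)$ operations in $\grext$, plus a constant number of additional ring operations. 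Summing and absorbing the $\#P\log n$ and $\#F\log n$ contributions into $\gO((\#F\#P^{\lceil 1/\gamma-1\rceil}+\#G)\log n)$ yields the claimed bounds. The data-structure operations on $V$ are negligible by Corollary~\ref{cor:vanEmdeBoas}.

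\textbf{Main obstacle.} The delicate point will be the careful bookkeeping of the variable $i$, which plays two distinct roles: it is the previous pivot index in the recurrence~\eqref{eq:srec1} used to update $F_\alpha$, and simultaneously the reference index in~\eqref{eq:srec2} used to shift $F_\alpha$ by a suitable power of $\alpha$ when accumulating into $\beta$. When a single value of $j$ belongs to both $\supp(V)$ and $\supp(G)$, the two updates must be performed in the right order (update $F_\alpha$ and $i$ first, then accumulate with the resulting trivial shift $\alpha^0$), which is precisely why the exponent at Step~\ref{seval:Fa} is $j-i-1$ rather than $j-i$. The insertion of a sentinel pair at index $0$ when $\f[0]_{n-1}=0$ plays an analogous role, making the very first application of~\eqref{eq:srec1} from $\F[0](\alpha)$ uniformly valid regardless of whether the leading coefficient of $F$ vanishes.
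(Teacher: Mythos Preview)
Your proof is correct and follows essentially the same approach as the paper: the same loop invariants (the paper states them slightly more tersely as ``$F_\alpha=\F[j](\alpha)$ when $j\in\supp(V)$ and $\beta=\sum_{i\le j,\,i\in\supp(G)} g_i\F[i](\alpha)$''), the same appeal to Equations~\eqref{eq:seval}, \eqref{eq:srec2} and~\eqref{eq:srec1}, and the same complexity decomposition via Lemma~\ref{lem:lastcoefsparse} and Corollary~\ref{cor:vanEmdeBoas}. Your additional paragraph on the sentinel at index~$0$ and on the ordering of the two updates when $j\in\supp(V)\cap\supp(G)$ makes explicit a point the paper leaves implicit, but does not change the argument.
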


\begin{proof}
  We prove that at the end of iteration $j$, $F_\alpha = \F[j](\alpha)$ if $j\in\supp(F)$ and $\beta = \sum_i g_i\F[i](\alpha)$
  where the sum ranges over indices $i\in\supp(G)\cap\{0,\dotsc,j\}$. The property is satisfied after iteration $0$ (before
  entering the loop) since $F_\alpha = \F[0](\alpha)$ and $\beta = F_\alpha g_0 = g_0\F[0](\alpha)$. Let us assume that the
  property holds before entering iteration $j$. Index $i$ denotes the previous index that belongs to $\supp(F)$. Therefore, if
  $j\in\supp(F)$, Equation~\eqref{eq:srec1} ensures that $F_\alpha$ has the right value after iteration $j$ since
  $V[i] = \f[i]_{n-1}$. And Equations~\eqref{eq:seval} and~\eqref{eq:srec2} justify that $\beta$ also has the right value if
  $j\in\supp(G)$.

  The evaluations $P(\alpha)$ and $F(\alpha)$ require $\gO(\#P\log n)$ and $\gO(\#F\log n)$ operations in $\grext$
  respectively. Steps~\ref{seval:Fa} and~\ref{seval:beta} each require $\gO(\log n)$ operations in $\grext$ to compute powers of
	$\alpha$ and $\gO(1)$ additions with integers of size $\gO(\log n)$ to compute the appropriate exponent.
	These steps are executed $\gO(\#V+\#G)$ times. Since $\#V = \gO(\#F\#P^{\lceil 1/\gamma-1\rceil})$ this gives a total
	of $\gO((\#F\#P^{\lceil\frac{1}{\gamma}-1\rceil}+\#G)\log n)$ operations in $\grext$ plus $\gO((\#F\#P^{\lceil\frac{1}{\gamma}-1\rceil}+\#G)\log n)$ bit operations for the integer additions.
	Since we can easily assume that one operation in $\grext$ will cost more than one bit operation, the latter complexity is dominated by the computation part in $\grext$.
	The cost of Step~\ref{seval:scoef} is given by Lemma~\ref{lem:lastcoefsparse}
	
	. We can still use \emph{van Edme Boas} tree to iterate
  over the union of the supports of $V$ and $G$ at Step~\ref{seval:loop} with a total of
	$\gO((\#F\#P^{\lceil\frac{1}{\gamma}-1\rceil}+\#G)\log\log n)$ bit operations which is less than the number bit operations required by the additions in $\gz$ and thus negligible. 
\end{proof}

Obviously, since polynomial multiplication over integral domains is commutative, the roles of $F$ and $G$ can be exchanged in Algorithm~\ref{algo:seval}. In particular if $\#G < \#F$, this exchange decreases the complexity in Theorem~\ref{thm:evalmodPsparse}. In other words, the statement remains valid if $\#F$ is replaced by $\min(\#F,\#G)$ and $\#G$ by $\max(\#F, \#G)$. The same remark applies to subsequent results.

\begin{rem}
  As in Corollary~\ref{cor:sevalbin}, we can decrease the number of operations over $\grext$ by using simultaneous exponentiation
  on $\alpha$. This results in $\log n + \gO((\#F\#P^{\lceil 1/\gamma-1\rceil}+\#G)\log n/\log\log n)$ operations in $\grext$.
\end{rem}

If the gap parameter $\gamma$ is close to $\frac{1}{n}$, the polynomial $FG\bmod P$ is in general a dense polynomial even if $FG$ is sparse and the dense modular evaluation will be more appropriate.
On the contrary, $FG\bmod P$ remains sparse if $\gamma$ is close to $1$, in particular if $\gamma\geq\frac{1}{2}$.

\begin{rem}
	If $\gamma\geq\frac{1}{2}$, the evaluation requires $\gO((\#F\#P+\#G)\log n)$ operations in $\grext$.
\end{rem}

\begin{rem}
  The factor $\#F\#P^{\lceil\frac{1}{\gamma}-1\rceil}+\#G$ in the complexity may be larger than the actual sparsity of
  $FG\bmod P$. For instance, the sparsity can be $0$ if $P$ divides $FG$. Yet, it is smaller than the general bound
  $\#F\#G(\#P-1)^{\lceil\frac{1}{\gamma}\rceil}$ given by Proposition~\ref{prop:sparsity+coeffbound}.
	Thus in general it is more efficient to use our method than to directly evaluate $FG\bmod P$ if the polynomial is known.
\end{rem}

\section{Verification of polynomial modular product}\label{sec:verif}

This section is devoted to the verification of polynomial modular product. That is, given $F$, $G$, $H$ and $P$, such
that $\deg(F),\deg(G),\deg(H)<\deg(P)=n$, we want to test whether $H = FG\bmod P$. The idea is classical, that is to evaluate the
identity at a random point.  Contrary to the more straightforward verification of polynomial multiplication, we cannot do such
evaluation directly since we do not know the polynomial $Q= (FG - (FG \bmod P))/P$.  As seen in the previous section, we provide
new algorithms to do such evaluation efficiently without reverting to the computation of $Q$.  We remind that $P$ is always
taken monic.
Note that this is a mild assumption since $(FG)\bmod P = (FG)\bmod(\lambda P)$ for any invertible
constant $\lambda$.

In the following, all algorithms are analysed both when the polynomials $F$, $G$ and $H$ are dense and when they are sparse. On the other hand, $P$ is always considered as a sparse polynomial. We shall recall
that $\gamma$ denotes the gap parameter of $P$, defined by $1-\gamma = \deg(P-X^n)$, and it serves to control the densification of
the modular reduction.

We first begin in Section~\ref{ssec:verifIntDom} with an abstract case where the polynomials are defined over an integral domain.
There, we analyse the algorithms by counting the number of ring operations.  In Sections~\ref{ssec:verifoverZ}
and~\ref{ssec:smallfieldverif} we discuss some adaptations of the algorithm to the case of integers and small finite fields in
order to provide finer analysis in the bit complexity model.

\subsection{Modular product verification in $\gr[X]$}\label{ssec:verifIntDom}

Algorithm~\ref{algo:verify} depicted below is straightforward from Theorems~\ref{thm:evalmodPdense} or~\ref{thm:evalmodPsparse}. We
mainly provide his description to serve as a starting point for its adaptations in the next sections. The algorithm covers both the
dense and the sparse case. The only difference is at Step~\ref{verify:sparse}.

\begin{algorithm}
\caption{\verify}
\label{algo:verify}
\begin{algorithmic}[1]
  \Input $F$, $G$, $H$ and $P\in\gr[X]$, $P$ monic of degree $n$ and $F$, $G$ and $H$ of degrees $< n$ ; $0<\epsilon<1$.
 
  \Output True if $H = FG\bmod P$, False with probability at least $1-\epsilon$ otherwise.

\If{$F$, $G$ and $H$ are given in sparse representation} \label{verify:sparse}
    \If{$\#H > \#F\#G(\#P-1)^{\lceil 1/\gamma\rceil}$} 
	\Return False
	\Comment Proposition~\ref{prop:sparsity+coeffbound}
    \EndIf
\EndIf
\State $\alpha\gets$ random element from a subset $\mathcal{E}$ of $\gr$, of size $\ge\frac{1}{\epsilon}(n-1)$ 
\State $\beta \gets (FG\bmod P)(\alpha)$ \Comment using Theorem~\ref{thm:evalmodPdense} or~\ref{thm:evalmodPsparse} 
\State \Return True if $\beta= H(\alpha)$, False otherwise
\end{algorithmic}
\end{algorithm}

\begin{thm}\label{thm:verifmod}
If $\gr$ has at least $\frac{1}{\epsilon}(n-1)$ elements, Algorithm~\ref{algo:verify} is correct. 

If $F$, $G$ and $H$ are dense, the algorithm uses $\gO(\#P n)$ operations in $\gr$.

If $F$, $G$ and $H$ are sparse, the algorithm uses $\gO((\#F\#P^{\lceil\frac{1}{\gamma}-1\rceil}+\#G+\#H)\log n)$ operations in $\gr$.
\end{thm}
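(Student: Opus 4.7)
The plan is to analyse Algorithm~\ref{algo:verify} along two axes: correctness of its Monte Carlo behaviour, and the arithmetic cost in each of the two representation regimes.

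For correctness, I would argue as follows. If $H = FG\bmod P$, then $\beta = H(\alpha)$ for every choice of $\alpha$, so the algorithm always returns True. If $H \neq FG\bmod P$, there are two cases. In the sparse branch, if the preliminary test at Step~\ref{verify:sparse} fires, Proposition~\ref{prop:sparsity+coeffbound} guarantees $\#H>\#(FG\bmod P)$, so False is the correct answer with probability $1$. Otherwise, set $R = H - (FG\bmod P)$; this is a nonzero polynomial of degree at most $n-1$, so it has at most $n-1$ roots in $\gr$. Since $\alpha$ is sampled uniformly from a subset of $\gr$ of size at least $(n-1)/\epsilon$, the probability that $R(\alpha)=0$ is at most $\epsilon$, which is the desired error bound. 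The assumption that $\gr$ contains at least $(n-1)/\epsilon$ elements is exactly what is needed for such a subset to exist.

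For the dense complexity bound, the only nontrivial work is evaluating $(FG\bmod P)(\alpha)$ and $H(\alpha)$. The first evaluation is handled by Theorem~\ref{thm:evalmodPdense} in $\gO(n\#P)$ operations in $\gr$ (with $\grext=\gr$ here since $\alpha\in\gr$), and the second is a Horner-scheme evaluation of a dense polynomial of degree less than $n$ in $\gO(n)$ operations. The comparison at the end is a single subtraction. Summing yields the $\gO(n\#P)$ bound.

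For the sparse complexity bound, the preliminary test at Step~\ref{verify:sparse} costs only $\gO(1)$ ring operations once the sparsities are known. Evaluating $(FG\bmod P)(\alpha)$ with the sparse algorithm of Theorem~\ref{thm:evalmodPsparse} costs $\gO((\#F\#P^{\lceil 1/\gamma-1\rceil}+\#G)\log n)$ operations in $\gr$ (again taking $\grext=\gr$). Evaluating a sparse polynomial $H$ at $\alpha$ requires computing at most $\#H$ powers of $\alpha$ of exponent at most $n-1$, multiplying each by the corresponding coefficient, and summing, for a total of $\gO(\#H\log n)$ operations in $\gr$. Adding these three contributions gives the claimed $\gO((\#F\#P^{\lceil 1/\gamma-1\rceil}+\#G+\#H)\log n)$ bound. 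There is no real obstacle in this proof; the substantive work was already carried out in Section~\ref{sec:eval} and in Proposition~\ref{prop:sparsity+coeffbound}, and what remains is to assemble those pieces with the standard Schwartz--Zippel argument.
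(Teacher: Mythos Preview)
Your proof is correct and follows essentially the same approach as the paper: the Schwartz--Zippel-style root-counting argument for correctness, and Theorems~\ref{thm:evalmodPdense} and~\ref{thm:evalmodPsparse} for the complexity bounds. You are simply more explicit than the paper about the cost of evaluating $H$ itself and about the role of the preliminary sparsity test, but the structure is identical.
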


\begin{proof}
  Step~1 dismisses a trivial mistake if the polynomials are sparse. If $H = (FG)\bmod P$, $H(\alpha) = (FG\bmod P)(\alpha)$ for
  any $\alpha$ and the algorithm always returns True. Otherwise, let $\Delta = H-(FG)\bmod P$. Then $\Delta$ has degree $< n$,
  hence has at most $n-1$ roots since it is nonzero. Therefore, the probability that $\alpha$, randomly chosen in $\mathcal E$, is
  a root of $\Delta$ is at most $(n-1)/\frac{1}{\epsilon} (n-1) = \epsilon$. The algorithm returns True in that case with
  probability at most $\epsilon$.

  The complexity is given by the cost of a single modular product evaluation that is stated in Theorem~\ref{thm:evalmodPdense} for
  the dense case and in Theorem~\ref{thm:evalmodPsparse} for the sparse case.
\end{proof}

If $\gr$ is not large enough, the algorithm fails and it is customary to revert to an extension ring $\grext$ to perform the
evaluation in a larger set.  Using Theorems~\ref{thm:evalmodPdense} and~\ref{thm:evalmodPsparse}, we get the following extension
when the polynomials are evaluated on a random point of $\grext$ rather than $\gr$.

\begin{cor}\label{cor:verifmodext}
  Let $\gr$ be an integral domain with less than $\frac{1}{\epsilon}(n-1)$ elements, and $\grext$ an extension ring of $\gr$ with
  at least $\frac{1}{\epsilon}(n-1)$ elements. Then Algorithm~\ref{algo:verify} can be adapted by choosing a random element from $\grext$, with the
  same probability of success. It uses $\gO(n\#P)$ operations in $\gr$ and $\gO(n)$ operations in $\grext$ if $F$, $G$ and $H$ are
  dense, and $\gO((\#F\#P^{\lceil\frac{1}{\gamma}-1\rceil}+\#G+\#H)\log n)$ operations in $\grext$ if they are sparse.
\end{cor}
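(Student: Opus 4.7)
The plan is to observe that Algorithm~\ref{algo:verify} only needs a cosmetic modification: replace the sampling set $\mathcal{E} \subseteq \gr$ by a subset of $\grext$ of cardinality at least $(n-1)/\epsilon$, which exists by hypothesis, and perform the evaluation step via Theorem~\ref{thm:evalmodPdense} or Theorem~\ref{thm:evalmodPsparse}, each of which was already stated with $\alpha\in\grext$ and already splits its cost between operations in $\gr$ and in $\grext$. For correctness, I would reuse the argument of Theorem~\ref{thm:verifmod} essentially verbatim: if $H = FG\bmod P$ the test always succeeds, and otherwise the polynomial $\Delta = H - (FG\bmod P)$ lies in $\gr[X]\subseteq\grext[X]$, is nonzero, and has degree strictly less than $n$. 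Provided $\grext$ is itself an integral domain so that the usual root-count bound applies, $\Delta$ has at most $n-1$ roots in $\grext$, hence a uniformly random $\alpha\in\mathcal{E}$ is a root with probability at most $(n-1)/\bigl((n-1)/\epsilon\bigr) = \epsilon$.

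For the complexity I would invoke the two already-proved evaluation theorems to bound the computation of $\beta = (FG\bmod P)(\alpha)$, then add the cost of $H(\alpha)$ and of the final equality test in $\grext$. In the dense case, Horner's scheme evaluates $H$ on $\alpha$ with $\gO(n)$ operations in $\grext$; together with Theorem~\ref{thm:evalmodPdense}, this gives $\gO(n\#P)$ operations in $\gr$ and $\gO(n)$ operations in $\grext$ overall. In the sparse case, $H(\alpha)$ can be computed using $\gO(\#H\log n)$ operations in $\grext$ by fast exponentiation on $\alpha$ (or simultaneous exponentiation as in Corollary~\ref{cor:sevalbin}); combined with Theorem~\ref{thm:evalmodPsparse} this yields the claimed $\gO((\#F\#P^{\lceil 1/\gamma - 1\rceil} + \#G + \#H)\log n)$ operations in $\grext$. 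The $\gr$-operation count of Theorem~\ref{thm:evalmodPsparse} is dominated by this $\grext$-operation count and is therefore implicit in the statement.

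The hard part is really only intellectual bookkeeping rather than a genuine obstacle: I must make sure the root-counting argument is legitimate in $\grext$, which is why the natural reading is that $\grext$ is also an integral domain (as is the case for $\grext = \cfqs$ over $\gr = \cfq$, our main use case in the sequel). Once this is granted, the corollary follows by direct plug-in of the previously established evaluation bounds, with no new algorithmic ingredient required.
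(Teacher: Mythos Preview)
Your proposal is correct and follows essentially the same approach as the paper, which in fact does not give a separate proof at all but simply states that the corollary follows from Theorems~\ref{thm:evalmodPdense} and~\ref{thm:evalmodPsparse} applied with $\alpha\in\grext$. Your added care about $\grext$ needing to be an integral domain for the root-count bound is a worthwhile observation that the paper leaves implicit.
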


In the dense case, Algorithm~\ref{algo:verify} uses an optimal 
number of operations in $\gr$ as soon as $\#P$ is constant. It  
is always faster than a modular product when $\#Pn<\mul(n)$ that is when $\#P<\log (n)\log\log(n)$ for a general ring $\gr$.  In
the sparse case, Theorem~\ref{thm:verifmod} is not linear in the input size. Indeed, $\#P$ is raised to a potentially large power
$\lceil\frac{1}{\gamma}-1\rceil$ and more importantly there is a factor $\log n$ in the number of operations in $\gr$ while the input
has only $(\#F+\#G+\#H)$ elements of $\gr$.
Nevertheless, the efficiency of verification has to be compared with the cost of computing $FG\bmod P$. We assume the latter to be
done with a sparse multiplication followed by a sparse division with $P$. This hypothesis seems reasonable as no work have been
done to optimize such operation yet. Letting aside the division, the number of operations in $\gr$ for sparse multiplication could
be either $\gO(\#F\#G)$ with naive approach or $\gOt(\#F+\#G+\#(FG))$ using \cite{giorgi2020:sparsemul}. Assuming $\#P$ to be constant,
our verification has a complexity of $\gO((\#F+\#G+\#H)\log n)$ which is always faster when $\#(FG)=o(\#F\#G/\log n)$.  If we
assume that $\#(FG)=\gO(\#F\#G)$, our verification will be faster at least when $n = 2^{\gO(\#F+\#G)}$ and $\#P$ constant.
Depending on the cost of the division, our algorithm could be faster in more cases.

Of course, these conditions are not very restrictive. We use Algorithm~\ref{algo:verify} in Section~\ref{sec:product} to
verify classical polynomial multiplication, where $P$ will be a binomial of degree either logarithmic in the sparsity or
polynomial in the input degree.

Yet the efficiency of Algorithm~\ref{algo:verify} depends heavily on the integral domain $\gr$.  Indeed 
the complexity
of polynomial multiplication in $\gr[X]$ can be faster than $\gO(n\log n\log\log n)$ operations in $\gr$
\cite{vdH:cyclomult,harvey:polmul-nlogn}. Furthermore, if $\gr$ is small, one operation in $\grext$ corresponds to a non-constant
number of operations in $\gr$.  In the following sections we consider polynomials over the integers or finite fields and we
provide thorough analyses together with adapted versions when necessary.

\subsection{Modular product verification in  $\gz[X]$}\label{ssec:verifoverZ}

If the polynomials are defined over $\gz$, there is no difficulty with the size of $\mathcal{E}$ in
Algorithm~\ref{algo:verify}. However we must prevent the integers growth during the evaluation. It is very classical to choose a
random prime $q$ and to map the whole computation into $\cfq$. To do so, we must ensure two properties on the prime $q$. First,
$q$ must be large enough to use the algorithm, that is at least $\frac{1}{\epsilon}(n-1)$. Second, if $H\neq (FG)\bmod P$, we need
this inequality to hold modulo $q$ as well. For this second property, we define $\Delta = H-(FG)\bmod P$. To ensure that $\Delta$
does not vanish modulo $q$, we need that at least one coefficient of $\Delta$ is nonzero modulo $q$. We then need to bound its
coefficients to assess the latter fact.

\begin{prop}\label{prop:productnorm}
	The coefficients of $\Delta$ are bounded  by
	$\norm{H} + \min(\#F,\#G)\norm{F}\norm{G}(\#P\norm{P})^{\lceil\frac{1}{\gamma}\rceil}$. 
\end{prop}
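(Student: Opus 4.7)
The plan is to combine the triangle inequality with the two bounds already established in the paper: Lemma~\ref{lem:prodbounds} for the norm of a product, and Proposition~\ref{prop:sparsity+coeffbound} for the growth of the norm under modular reduction by a polynomial with gap parameter $\gamma$.

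First I would write $\Delta = H - (FG \bmod P)$ and apply the triangle inequality coefficient-wise to get
\[ \norm{\Delta} \le \norm{H} + \norm{(FG)\bmod P}. \]
So the whole task reduces to bounding $\norm{(FG)\bmod P}$.

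Next, I would set $Q = FG$ and note that $\deg(Q) \le 2n-2 = n-1+(n-1)$, so Proposition~\ref{prop:sparsity+coeffbound} applies with the parameter $k = n-1$. This yields
\[ \norm{(FG)\bmod P} \le \norm{FG}\,(\#P\,\norm{P})^{\lceil (n-1)/(\gamma n)\rceil}. \]
Since $(n-1)/(\gamma n) < 1/\gamma$, the ceiling is at most $\lceil 1/\gamma\rceil$, so the exponent simplifies to the one in the statement. Finally, Lemma~\ref{lem:prodbounds} gives $\norm{FG} \le \min(\#F,\#G)\norm{F}\norm{G}$, and combining everything produces the claimed bound.

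There is essentially no obstacle here; the only mild subtlety is the passage from $\lceil (n-1)/(\gamma n)\rceil$ to $\lceil 1/\gamma\rceil$, which is a loose but convenient simplification needed to match the form of the stated bound. The proof should be only a few lines.
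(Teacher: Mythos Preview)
Your proposal is correct and follows essentially the same approach as the paper: triangle inequality, then Proposition~\ref{prop:sparsity+coeffbound} applied to $Q=FG$, combined with Lemma~\ref{lem:prodbounds} for $\norm{FG}$. You are simply more explicit about the value $k=n-1$ and the passage from $\lceil (n-1)/(\gamma n)\rceil$ to $\lceil 1/\gamma\rceil$, which the paper leaves implicit.
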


\begin{proof}
	The coefficient of $\Delta$ are bounded by $\norm{H}+\norm{FG\bmod P}$.
	The bound follows from Proposition~\ref{prop:sparsity+coeffbound}, with $Q = FG$ and $\norm Q\le \min(\#F,\#G)\norm F\norm G$ by Lemma~\ref{lem:prodbounds}.
\end{proof}

Using this bound and Proposition~\ref{prop:choixdeq} we can determine an appropriate prime $q$ to adapt the
Algorithm~\ref{algo:verify} to the integer case. This is done in the Algorithm~\verifyZ below.

\begin{algorithm}
\caption{\verifyZ}
\label{algo:verifyZ}
\begin{algorithmic}[1]
  \Input $F$, $G$, $H$ and $P\in\gz[X]$, $P$ monic of degree $n$ and $F$, $G$ and $H$ of degrees $< n$ ; $0<\epsilon<1$.
 
  \Output True if $H = FG\bmod P$, False with probability at least $1-\epsilon$ otherwise.

	\State $\Delta_\infty\gets \norm{H} + \min(\#F,\#G)\norm{F}\norm{G}(\#P\norm{P})^{\lceil\frac{1}{\gamma}\rceil}$ 
\State $q\gets \rp(\lambda, \frac{\epsilon}{2})$ where $\lambda = \max(21, \frac{2}{\epsilon} n, \frac{20}{3\epsilon}\ln\Delta_\infty)$ \label{verifZ:q}
\State $(F_q,G_q,H_q,P_q)\gets (F\bmod q, G\bmod q, H\bmod q, P\bmod q)$ \label{verifZ:reduction}
\State \Return $\verify(F_q,G_q,H_q,P_q,\frac{\epsilon}{2})$ \label{verifZ:verify}
\end{algorithmic}
\end{algorithm}

\begin{thm}\label{thm:sparsemodverifZ}
Algorithm~\ref{algo:verifyZ} is correct. 
If $C = \max(\norm P, \norm F, \norm G, \norm H)$ and $T = \max(\#F,\#G,\#H)$, the algorithm requires
 $\gO(\log^2(\frac{n}{\epsilon}\log C)\log\log(\frac{n}{\epsilon}\log C)\log\frac{1}{\epsilon}\imul(\log(\frac{n}{\epsilon}\log C)))$ bit operations to get a prime number, plus

\begin{itemize}
	\item $\gO((\#Pn+n\frac{\log C}{\log(\frac{n}{\epsilon}\log C)})\imul(\log(\frac{n}{\epsilon}\log C)))$ bit operations if $F$, $G$ and $H$ are dense, or

	\item $\gO((T\#P^{\lceil\frac{1}{\gamma}-1\rceil}\log n+(T+\#P)\frac{\log C}{\log(\frac{n}{\epsilon}\log C)})\imul(\log(\frac{n}{\epsilon}\log C)))$ bit operations if $F$, $G$ and $H$ are sparse. 
\end{itemize}
\end{thm}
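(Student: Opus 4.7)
The plan is to establish correctness and complexity separately, following the standard two-stage pattern for integer verification: reduce modulo a random prime, then apply the abstract modular verification over $\cfq$.

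For correctness, I would first dispose of the trivial case $H = FG \bmod P$, where the equality persists modulo any prime $q$, so the call to \verify always returns True. Otherwise $\Delta = H - (FG \bmod P)$ is a nonzero polynomial in $\gz[X]$, and by Proposition~\ref{prop:productnorm} its coefficients are bounded by the quantity $\Delta_\infty$ computed on line~1 of the algorithm. The choice $\lambda \geq \frac{20}{3\epsilon}\ln\Delta_\infty = \frac{10}{3(\epsilon/2)}\ln\Delta_\infty$ puts us in the hypothesis of Proposition~\ref{prop:choixdeq} with parameter $\epsilon/2$, so $\rp(\lambda, \epsilon/2)$ returns a prime $q$ for which $\Delta \bmod q \neq 0$ with probability at least $1 - \epsilon/2$. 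Conditioned on this event, $H_q \neq F_q G_q \bmod P_q$ in $\cfq[X]$, and the recursive call $\verify(F_q, G_q, H_q, P_q, \epsilon/2)$ returns False with probability at least $1 - \epsilon/2$, because $q \geq \lambda \geq \frac{2n}{\epsilon}$ ensures $\cfq$ is large enough to satisfy the size condition of Theorem~\ref{thm:verifmod} with parameter $\epsilon/2$. A union bound gives total error $\leq \epsilon$.

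The main technical step of the complexity analysis is bounding $\log \lambda$ well enough to match the stated expression. Using $\Delta_\infty \leq \norm H + T \norm F \norm G (\#P \norm P)^{\lceil 1/\gamma \rceil}$ and the inequalities $\#P \leq n$, $\frac{1}{\gamma} \leq n$, we obtain $\log \Delta_\infty = O\bigl(\frac{1}{\gamma}(\log n + \log C)\bigr) = O(n(\log n + \log C))$, so $\log\log \Delta_\infty = O(\log n + \log\log C)$. Combined with $\log(n/\epsilon)$ coming from the other branches of the $\max$, this yields $\log \lambda = O\bigl(\log(\tfrac{n}{\epsilon}\log C)\bigr)$, which is the bitlength controlling every subsequent cost.

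The bit complexity then decomposes into three additive contributions. Prime generation costs $\gO(\log(1/\epsilon) \log^2 \lambda \, \log\log \lambda \, \imul(\log \lambda))$ by Proposition~\ref{prop:rdprime}, giving exactly the first term. The reduction on line~\ref{verifZ:reduction} processes $n$ integer coefficients in the dense case and $T + \#P$ coefficients in the sparse case, each of bitlength $O(\log C)$, reduced modulo a $\log q$-bit prime; using fast division, each reduction costs $O\bigl(\frac{\log C}{\log q}\imul(\log q)\bigr)$ bit operations, producing the $n\frac{\log C}{\log q}\imul(\log q)$ (respectively $(T+\#P)\frac{\log C}{\log q}\imul(\log q)$) terms. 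The final call to \verify over $\cfq$ uses, by Theorem~\ref{thm:verifmod}, $\gO(\#P \cdot n)$ ring operations in the dense case and $\gO((T\#P^{\lceil 1/\gamma - 1 \rceil} + T)\log n)$ in the sparse case; multiplying each $\cfq$ operation by its bit cost $\imul(\log q)$ completes the proof.

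The principal obstacle is the bound on $\log \lambda$: the exponent $\lceil 1/\gamma \rceil$ in $\Delta_\infty$ could in principle be as large as $n$, which would make $\log \Delta_\infty$ linear in $n$, but the outer logarithm tames this to a $\log n$ contribution, absorbed into $\log(\frac{n}{\epsilon}\log C)$. Verifying this absorption carefully, and keeping track of when $\log C < \log q$ in the reduction cost (where the bound degenerates to $\gO(\imul(\log q))$ per coefficient, still consistent with the stated expression), are the delicate points.
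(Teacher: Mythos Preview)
Your proposal is correct and follows essentially the same approach as the paper: both establish correctness via a union bound over the two failure events (either $\Delta$ vanishes modulo $q$, or \verify\ fails over $\cfq$), and both derive the complexity by first bounding $\log q = \gO(\log(\tfrac{n}{\epsilon}\log C))$ through the observation $\tfrac{1}{\gamma}\le n$, then summing the costs of prime generation, coefficient reduction, and the call to \verify. Your treatment is in fact slightly more careful in flagging the edge case $\log C < \log q$ for the reduction step.
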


\begin{proof}
  To ensure that \verify works properly, we need $q$ to be at least $\frac{2}{\epsilon} (n-1)$. This is the case since
  $\lambda\ge \frac{2}{\epsilon}n$. The algorithm always returns the correct answer when $H = (FG)\bmod P$. Otherwise, it may
  incorrectly return True in two cases: Either $H_q = (F_qG_q)\bmod P_q$ while the equality does not hold over $\gz$, or \verify
  incorrectly returns True. Both situations occur with probability at most $\frac{\epsilon}{2}$.  Indeed, by
  Proposition~\ref{prop:productnorm}, $\Delta_\infty$ is always a bound on $\norm\Delta$ where $\Delta = H-(FG)\bmod
  P$. Therefore, Proposition~\ref{prop:choixdeq} shows that with probability at least $\frac{\epsilon}{2}$, the number $q$ chosen
  at Step~\ref{verifZ:q} is a prime number such that $\Delta\bmod q\neq 0$. The error probability of Algorithm~\ref{algo:verifyZ} is thus at most
  $\epsilon$.

  Let us now analyse the complexity of the algorithm. As a first step, we shall express $q$ in terms of the input size.
	Since $\#P$, $\#F$, $\#G \le n$,  $\Delta_\infty = \gO(n^{1+\lceil\frac{1}{\gamma}\rceil}C^{2+\lceil\frac{1}{\gamma}\rceil})$. 
	Thus,  $\ln\Delta_\infty = \gO(\frac{1}{\gamma}(\log n+\log C)) = \gO(n(\log n+\log C))$ since $\frac{1}{\gamma}\le n$. This implies
  $q = \gO(\frac{n}{\epsilon}(\log n + \log C))$ and $\log q = \gO(\log(\frac{n}{\epsilon}\log C))$.

	By Proposition~\ref{prop:rdprime}, Step~\ref{verifZ:q} costs $\gO(\log\frac{1}\epsilon\log^2q\log\log q \imul(\log q))$ bit
  operations, that is
\begin{equation}\label{eq:findq}
\gO\left(\log\tfrac{1}{\epsilon} \log^2(\tfrac{n}{\epsilon}\log C) \log\log(\tfrac{n}{\epsilon}\log C) \imul(\log(\tfrac{n}{\epsilon}\log C))\right).
\end{equation}
Step~\ref{verifZ:reduction} requires $\gO(n\frac{\log C}{\log q}\imul(\log q))$ bit operations in the dense case, and
$\gO((T+\#P)\frac{\log C}{\log q}\imul(\log q))$ bit operations in the sparse case. By Theorem~\ref{thm:sparsemodverifZ},
Step~\ref{verifZ:verify} requires $\gO(\#Pn\imul(\log q))$ bit operations in the dense case and
$\gO(T\#P^{\lceil\frac{1}{\gamma}-1\rceil}\log n\imul(\log q))$ bit operations in the sparse case.  Adding the complexities of all
these steps leads to the claimed bit complexity.
\end{proof} 

\begin{rem}\label{rem:negligible}
  In most cases, the cost of finding a prime number is negligible in comparison to the rest of the algorithm. 
  In the dense case, it is negligible as long as $\epsilon = 1/n^{\gO(1)}$. 
  In the sparse case, it is negligible when 
  the degree $n$ is not too large compared to the other input parameters. More precisely, this is the case when
  $\frac{n}{\epsilon}=((T+\#P)\log C)^{\gO(1)}$.
\end{rem}

When computing a multiplication followed by a modular reduction with polynomials in $\gz[X]$, the size of the coefficients can
grow significantly as shown by the bound on $\norm{FG\bmod P}$ given in the proof of Proposition~\ref{prop:productnorm}.  At the
opposite, our verification algorithm is done with bounded integers of bit length $\gO(\log(\frac{n}{\epsilon}\log C))$.  This
is logarithmic in the input size in dense representation and linear in the sparse one.  Our verification therefore 
avoids paying the coefficient growth, contrary to the direct computation. 
Taking this growth into account to compare our verification algorithm with the computation seems hard.
We only detail the cases where our algorithm is already
faster even without considering the coefficient growth. We shall mention that for sparse polynomial, $\#(FG \bmod P)$ might be smaller that $\#H$. 
In our analysis, we assume for simplicity that both have approximately the same size.

\begin{rem}
	
        For $\epsilon = 1/n^{\gO(1)}$,
        Algorithm~\ref{algo:verifyZ} is faster than the polynomial modular product 
	\begin{itemize}
		\item[(i)] when the polynomials are dense and $\#P< \min(\frac{\log n}{\log\log n},\frac{\log C}{\log\log\log C})$;
		\item[(ii)] when the polynomials are sparse and $\frac{n}{\epsilon}=((T+\#P)\log C)^{\gO(1)}$.
                \end{itemize}
\end{rem}

\begin{proof}
    We assume $\epsilon = 1/n^{\gO(1)}$. In particular, $\log\frac{n}{\epsilon} = \gO(\log n)$.
  To simplify the analysis, we place ourselves in the case of a negligible cost for finding the prime $q$, as described in Remark~\ref{rem:negligible}. 
  In the sparse case, this implies that $\frac{n}{\epsilon}$ must be polynomial in $(T+\#P)\log C$.
  Therefore $\log n<\min(T,\#P)$
  and $\imul(\log(n\log C))=\imul(\log\log C)$. This makes our verification faster than computing the modular product since the
  latter requires at least $\#(FG\bmod P)=\gO(T^2\#P^{\lceil\frac{1}{\gamma}\rceil})$ operations on integers of bit length
  $\log C$. 

  For the dense case, we compare to the cost of multiplying two polynomials of degree $n$ and coefficients bounded by
  $C$.  As seen in the introduction, this reduces to integer multiplication with Kronecker substitution and it costs
  $\imul(n(\log n+\log C))=\gO(n(\log^2 n +\log n\log C + \log C\log\log C)$ bit operations. By
  Theorem~\ref{thm:sparsemodverifZ} our verification needs $\gO(\#Pn\imul(\log n+\log\log C) + n\log C\log(\log n+\log\log C))$
  bit operations. The second term is dominated by the complexity of multiplying the polynomials when $\epsilon = 1/n^{\gO(1)}$. The first term is 
  \[\gO(\#P n((\log n+\log\log C)\log\log n + (\log n + \log\log C)\log\log\log C)).\]
  When $\#P< \min(\frac{\log n}{\log\log n},\frac{\log C}{\log\log\log C})$, this term is bounded by 
  $\gO(n(\log^2 n +\log n\log C + \log C\log\log C))$, that is the complexity of computing the product.
\end{proof}

\subsection{Modular product verification in $\cfq[X]$}\label{ssec:smallfieldverif}
The situation over a finite field $\cfq$ is different since there is no growth to prevent.  When $q$ is large enough,
Theorem~\ref{thm:verifmod} applies directly.  Otherwise, one can revert to computing in a sufficiently large extension field of
$\cfq$, where Corollary~\ref{cor:verifmodext} can be applied.  We first give the precise complexity bounds for these two cases.

\begin{cor}\label{cor:smallfieldbyextension}
  Let $F$, $G$, $H$ and $P\in\cfq[X]$ as in Algorithm~\ref{algo:verify}. One can test whether $H = (FG)\bmod P$ using
  Algorithm~\ref{algo:verify}, with
  $\gO(\log\frac{1}{\epsilon}(\log_q \frac{n}{\epsilon})^2\mul(\log_q \frac{n}{\epsilon})(\log q+\log\log_q \frac{n}{\epsilon}))$
  operations in $\cfq$ to get an irreducible polynomial of degree $\gO(\log_q\frac{n}{\epsilon})$ , plus
\begin{itemize}
\item $\gO(n\#P+n\mul(\log_q\frac{n}{\epsilon}))$ operations in $\cfq$ if $F$, $G$ and $H$ are dense, or
\item $\gO(T\#P^{\lceil\frac{1}{\gamma}-1\rceil}\log n\mul(\log_q \frac{n}{\epsilon}))$ operations in $\cfq$ if $F$, $G$ and $H$ are sparse with at most $T$ nonzero monomials.
\end{itemize}
\end{cor}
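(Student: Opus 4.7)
The plan is to apply Corollary~\ref{cor:verifmodext} with the extension ring $\grext = \cfqs$ for $s = \lceil \log_q \frac{n-1}{\epsilon} \rceil = \gO(\log_q \frac{n}{\epsilon})$, chosen so that $q^s \geq \frac{n-1}{\epsilon}$. When $q$ is already at least $\frac{n-1}{\epsilon}$ we take $s = 1$, no extension is required, and the estimates below remain valid since $\mul(1) = \gO(1)$. Otherwise, I realize $\cfqs$ concretely by invoking Proposition~\ref{prop:irrpoly} with error parameter $\epsilon/2$ to construct a degree-$s$ irreducible polynomial in $\cfq[X]$; its cost is $\gO(\log\frac{1}{\epsilon}\,s^2\,\mul(s)(\log q+\log\log s))$ operations in $\cfq$. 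Substituting $s = \gO(\log_q \frac{n}{\epsilon})$ and bounding $\log\log s \leq \log\log_q \frac{n}{\epsilon}$ reproduces the first term in the claimed bound.

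Next, I run Algorithm~\ref{algo:verify} with $\grext = \cfqs$ and error budget $\epsilon/2$. By Corollary~\ref{cor:verifmodext}, the dense version uses $\gO(n\#P)$ operations in $\cfq$ together with $\gO(n)$ operations in $\cfqs$, while the sparse version uses $\gO((\#F\#P^{\lceil 1/\gamma-1\rceil}+\#G+\#H)\log n)$ operations in $\cfqs$ (the operations in $\cfq$ from \scoef{} being already absorbed into the $\cfqs$ bound). The final ingredient is to translate each arithmetic operation in $\cfqs$ into $\gO(\mul(s))$ operations in $\cfq$: elements of $\cfqs$ are represented as degree-$<s$ polynomials over $\cfq$, so a multiplication in $\cfqs$ reduces to a degree-$s$ product followed by reduction modulo the fixed irreducible polynomial. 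After this substitution the dense bound becomes $\gO(n\#P + n\,\mul(\log_q \frac{n}{\epsilon}))$; in the sparse case, $\#G + \#H \leq 2T$ is absorbed into the dominating $T\#P^{\lceil 1/\gamma-1\rceil}$ term (using $\#P \geq 2$), yielding $\gO(T\#P^{\lceil 1/\gamma-1\rceil}\log n\,\mul(\log_q \frac{n}{\epsilon}))$ as stated.

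A union bound over the two failure modes --- the constructed polynomial being reducible and Algorithm~\ref{algo:verify} returning a false positive --- keeps the total error probability at most $\epsilon$. I do not foresee any substantive obstacle: the proof is a routine splicing of Proposition~\ref{prop:irrpoly} onto the generic modular product verification of Corollary~\ref{cor:verifmodext}, and the only care needed is to pick $s$ so that $q^s$ exceeds the Schwartz--Zippel threshold $\frac{n-1}{\epsilon}$ and to split the error budget cleanly between the two probabilistic subroutines.
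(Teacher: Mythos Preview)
Your proposal is correct and follows essentially the same approach as the paper: split the error budget in half, invoke Proposition~\ref{prop:irrpoly} to build a degree-$d$ extension with $d = \gO(\log_q\frac{n}{\epsilon})$, apply Corollary~\ref{cor:verifmodext}, and convert each $\cfqs$-operation into $\gO(\mul(d))$ operations in $\cfq$. The only cosmetic slip is that for error budget $\epsilon/2$ in the verification step you need $q^s \ge \frac{2}{\epsilon}(n-1)$ rather than $\frac{n-1}{\epsilon}$, but this does not affect the asymptotics.
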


\begin{proof}
  Let us assume that $q < \frac{1}{\epsilon}(n-1)$ otherwise Theorem~\ref{thm:verifmod} applies straightforwardly. In that
  case, Corollary~\ref{cor:verifmodext} requires to choose a random point in an extension $\cf_{q^d}$ of $\cfq$ with at least
  $\frac{1}{\epsilon}(n-1)$ elements. More precisely, we use Proposition~\ref{prop:irrpoly} to produce with probability
  $1-\frac{\epsilon}{2}$ an irreducible polynomial of degree $d$ over $\cfq$, where $d$ is the smallest integer such that
  $q^d\geq \frac{2}{\epsilon}(n-1)$. The algorithm may be incorrect if either the polynomial used to defined $\cf_{q^d}$ fails
  to be irreducible, or if the Algorithm~\ref{algo:verify} fails. If we choose $\alpha$ at random in $\cf_{q^d}$, the error probability of
  Algorithm~\ref{algo:verify} is at most $\frac{\epsilon}{2}$. This gives a total probability of error of at most $\epsilon$.
  
  By definition, the degree of the extension is $d=\gO(\log_q \frac{n}{\epsilon})$. The cost of generating the irreducible
  polynomial of degree $d$ is $\gO(\log\frac{1}{\epsilon}d^2\mul(d)(\log q+\log d))$ by
  Proposition~\ref{prop:irrpoly}. 
  Using Corollary~\ref{cor:verifmodext} and the fact that an operation in $\cf_{q^d}$ costs $\mul(d)$ operations in $\cfq$, we
  obtain the claimed costs.
\end{proof}

\begin{rem}\label{rem:verifcfepsilonfixed}
  If $\epsilon=1/n^{\gO(1)}$, the cost of getting an irreducible polynomial is negligible in the dense case.
  Then the algorithm requires $\gO(\#Pn+n\mul(\log_q n))$ operations in $\cfq$.  If we add the degree constraint $\log n=o(T)$,
  the cost of getting an irreducible polynomials is also negligible in the sparse case and the algorithm requires
  $\gO(T\#P^{\lceil\frac{1}{\gamma}-1\rceil}\log n\mul(\log_q n))$ operations in $\cfq$.
\end{rem}

As $\#(FG\bmod P)$ is bounded by $T^2\#P^{\lceil\frac{1}{\gamma}\rceil}$ and computing $FG\bmod P$ requires at least one operation
in $\cfq$ for each monomial, this remark gives directly a case where the verification is faster than the modular product in
general.  Moreover, naive algorithms can be used to perform products in the extension of $\cfq$.

\begin{rem}\label{rem:verifCostBetterSparse} 
  When $\epsilon=1/n^{\gO(1)}$ and $n =T^{\gO(1)}$, Algorithm~\ref{algo:verify} in the sparse case is in general
  faster than the modular multiplication and requires $\gOt(T\#P^{\lceil\frac{1}{\gamma}-1\rceil})$ bit operations if 
 $q<\frac{n}{\epsilon}$. 
\end{rem}

In the dense case, we can see from Remark~\ref{rem:verifcfepsilonfixed} that the verification complexity might in fact be larger than the cost of computing the modular product
$FG \bmod P$. Indeed, assuming $\mul_q(n)=\gO(n \log q(\log n+ \log \log q))$ bit operations \cite{harvey:polmul-nlogn}, we have
$\mul_q(\log_q\frac{n}{\epsilon})= \gO(\log \frac{n}{\epsilon} (\log \log \frac{n}{\epsilon} + \log \log q))$.  While the cost of
computing $FG \bmod P$ uses $\mul_q(n)$ bit operations, our verification requires
$\gO(\#Pn\log q\log\log q+ n \log \frac{n}{\epsilon}( \log \log \frac{n}{\epsilon}+ \log \log q))$ bit operations.  When $\#P$ is
not a constant, the latter is always larger.  The following remark precise when we can expect a positive result.

\begin{rem}\label{rem:verifCostBetter}
  Assuming $\#P$ to be constant and $\epsilon=1/n^{\gO(1)}$, Algorithm~\ref{algo:verify} in the dense case with $\gr=\cfq$ is
  asymptotically faster than the modular multiplication when 
  \begin{itemize}
  \item[(i)] $\log \frac{n}{\epsilon}< q < \frac{n}{\epsilon}$, since modular multiplication costs $\gO(n\log q \log n)$ bit operations
    while verification, which does need an extension, is $\gO(n\log \frac{n}{\epsilon} \log \log \frac{n}{\epsilon})$.
  \item[(ii)] $ {\frac{n}{\epsilon}}< q < 2^{\frac{n}{\epsilon}}$, since modular multiplication costs $\gO(n\log q \log n)$ bit operations while
    verification, which does not use an extension, is $\gO(n\log q \log \log q)$;
  \end{itemize}
\end{rem}

If the field is very large $q>2^{\frac{n}{\epsilon}}$, Algorithm~\ref{algo:verify} is asymptotically as fast as the modular
multiplication.
The dominant factor in both complexity is $\gO(n\log q \log\log q)$.\medskip

We shall mention that the verification cost in $(i)$ of Remark~\ref{rem:verifCostBetter} assumes the use of fast multiplication of
polynomials in order to check fast multiplication of polynomial of degree $\gO(n)$.  Even though this dependency is not a
problem in theory it might not be satisfactory in practice.  One solution would be to use a naive polynomial multiplication for
the extension field arithmetic but this further tightens the superiority of the verification.

\begin{rem}\label{rem:extension+naive}
  Assuming that extension field arithmetic is done naively using quadratic polynomial multiplication, Algorithm~\ref{algo:verify} remains
  faster than modular multiplication only when $ n^{1/3}< q $ in the dense case.
\end{rem}

We now propose an novel method that enables us to improve all the dense cases where an extension field is necessary, while not relying on
any polynomial arithmetic. More precisely, we show that fast verification does exist when $q<\log n$, which is of great interest
for the field $\cf_2$.  It is based on the evaluation of polynomials on matrices rather than scalars, combined with Freivalds
algorithm for verifying matrix multiplication \cite{Freivalds1979}.

Indeed, choosing $\alpha$ from an extension field inherently leads to depend on polynomial multiplication.  Instead of picking a
random point that is probably not a root of $\Delta=H-(FG)\bmod P$ when $\Delta\neq 0$, we pick a polynomial $R\in \cfq[X]$ of
degree $k<n$ that is probably not a divisor of $\Delta\neq 0$. To test whether $R$ divides $\Delta$, we evaluate $\Delta$ on the
companion matrix $C_R$ of $R$, defined by
\[C_R = \begin{pmatrix}
0 & 0 & \dotsb & 0 & -r_0\\
1 & 0 & \dotsb & 0 & -r_1\\
0 & 1 & \dotsb & 0 & -r_2\\
\vdots&\vdots&\ddots&\vdots&\vdots\\
0 & 0 & \dotsb & 1 & -r_k\\
\end{pmatrix}
\]
where $R = \sum_{i=0}^{k} r_i X^i$.  This strategy relies on the fact that $R$ is the minimal polynomial of its companion
matrix. Therefore, $R(C_R) = 0$ and any polynomial $\Delta$ such that $\Delta(C_R) = 0$ must be a multiple of $R$. In other words,
$R$ divides $\Delta$ if and only if $\Delta(C_R) = 0$. We will show that taking $R$ irreducible over $\cfq$ of degree
$k=\gO(\log n)$ makes this approach faster then the one using extension field when $\epsilon$ is constant. Furthermore, it will
extend the possibility to have fast verification for any fields, whatever the size of the polynomials.\medskip

To check whether $\Delta(C_R) = 0$, we need to evaluate $H$ and $(FG)\bmod P$ on $C_R$, and to verify that the evaluations match.
Of course, one cannot directly evaluate those polynomials on $C_R$ as it would cost $\gO(nk^{\omega})$ operations in $\cfq$ for
the dense case, where $\omega<2.3729$ is the best exponent for matrix multiplication \cite{LeGall14}. Since $k=\gO(\log n)$, this
would not give any improvement to Remark~\ref{rem:verifCostBetter}.

Instead, we rely on the so-called Freivald's technique to verify matrix multiplication \cite{Freivalds1979}. The idea is that the
matrix product $C=A\times B \in \gr^{k\times k}$ can be verified by asserting that $uC=(uA)\times B$ for a random vector
$u\in \{0,1\}^n$ with a probability of error of $1/2$. To assert that two polynomials evaluations on the matrix $C_R$ match, it is
sufficient to verify that their projection by the vector $u$ are equal. Given a degree-$n$ polynomials $H\in\cfq[X]$, one can
compute $uH(C_R)$ in $\gO(nk)$ operation in $\cfq$  using Horner evaluation: 
\begin{equation}\label{eq:HornerMatrix}
  uH(C_R) = u \sum_{i=0}^n h_i {C_R}^i =  \left(\sum_{i=1}^n h_i u{C_R}^{i-1}\right)C_R + u h_0.
\end{equation}
Since matrix-vector product with $C_R$ only costs $\gO(k)$ operations in $\cfq$, and Horner procedure only uses $n$ of those
matrix-vector products, the cost is clear.

\begin{rem}\label{rem:evalMatrix}
  It is sufficient to replace the evaluation of $F(\alpha)$ and $P(\alpha)$ by $uF(C_R)$ and $uP(C_R)$ in Algorithm~\ref{algo:eval} (\eval) to reach
	a complexity of $\gO(n(\#P+\deg(R)))$ operations in $\cfq$ for computing $u(FG \bmod P)(C_R)$ in the dense case.  More informally, it
  is sufficient to say that any of the operations in $\grext$ have now the cost of one matrix-vector product with $C_R$.
\end{rem}

\begin{thm}\label{thm:verifsmallfield}
Let $F$, $G$, $H$ and $P\in\cfq[X]$, as in Algorithm~\ref{algo:verify}. 
We can check whether $H = FG\bmod P$ in $\gO(\#Pn+n\log_q n\log\frac{1}{\epsilon})$ operations in $\cfq$ with a probability of
error at most $\epsilon$ if $H\neq FG$.
\end{thm}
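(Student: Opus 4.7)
Let $\Delta = H - (FG)\bmod P$. The algorithm always outputs True when $\Delta = 0$, so the task is to bound by $\epsilon$ the probability of outputting True when $\Delta \neq 0$.

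The plan is to combine the companion-matrix evaluation strategy sketched before the theorem with a Freivalds-style projection, and to share the only expensive $\cfq$-work across repetitions. Fix $k = \lceil \log_q(12n)\rceil = \gO(\log_q n)$. For a uniformly random monic \emph{irreducible} polynomial $R \in \cfq[X]$ of degree $k$, Proposition~\ref{prop:nbirrpoly} ensures there are at least $q^k/(2k)$ candidates, whereas $\Delta$ (of degree less than $n$) has at most $n/k$ distinct irreducible factors of degree $k$; thus $\Pr[R \mid \Delta] \leq 2n/q^k \leq 1/6$. Regardless of whether $R$ is irreducible, $R$ is the minimal polynomial of its companion matrix $C_R$, so $\Delta(C_R)=0$ iff $R \mid \Delta$.

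The algorithm I would run proceeds as follows. First, precompute once the vector $V = \coef(P, F)$ at cost $\gO(n\#P)$ operations in $\cfq$ by Lemma~\ref{lem:lastcoefdense}; since $V$ depends only on $P$ and $F$, it is reusable. Then repeat $t = \Theta(\log(1/\epsilon))$ independent rounds: generate a fresh random monic $R$ of degree $k$ using Proposition~\ref{prop:irrpoly} with constant failure probability (say $1/6$); pick $u \in \{0,1\}^k$ uniformly; compute $\beta_1 = u(FG\bmod P)(C_R)$ by running the adaptation of \eval described in Remark~\ref{rem:evalMatrix}, but with the precomputed $V$; compute $\beta_2 = uH(C_R)$ via the Horner-style procedure of Equation~\eqref{eq:HornerMatrix}; return False as soon as $\beta_1 \neq \beta_2$ in some round. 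Otherwise return True after all rounds.

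For a single round, by a union bound the probability that $R$ is either reducible or divides $\Delta$ is at most $1/6 + 1/6 = 1/3$, and conditioned on $R$ being irreducible with $R \nmid \Delta$, $\Delta(C_R)$ is a nonzero $k\times k$ matrix and Freivalds' bound gives $\Pr[u\Delta(C_R)=0] \leq 1/2$. Hence the per-round error is at most $1/3 + (2/3)(1/2) = 2/3$, and $t = \Theta(\log(1/\epsilon))$ independent rounds bring the compound error below $\epsilon$. The cost of each round is dominated by $\gO(nk)$: by Remark~\ref{rem:evalMatrix}, once $V$ is available every operation in the ``extension'' inside \eval becomes a matrix-vector product with $C_R$ of cost $\gO(k)$, yielding $\gO(nk)$ for $\beta_1$, and Equation~\eqref{eq:HornerMatrix} yields $\gO(nk)$ for $\beta_2$; generating $R$ via Proposition~\ref{prop:irrpoly} costs $\gO(k^2\mul(k)(\log q + \log\log k))$, which is absorbed in $nk$ for $k = \gO(\log_q n)$. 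Summing yields the claimed $\gO(n\#P + n\log_q n \log(1/\epsilon))$ operations in $\cfq$.

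The main subtlety will be verifying that the modified \eval of Remark~\ref{rem:evalMatrix}, in which $\beta$ and $F_\alpha$ become row vectors of length $k$ and every multiplication by $\alpha$ becomes a right-multiplication by $C_R$, still correctly computes $u(FG\bmod P)(C_R)$ when $V$ is precomputed separately, and that the three independent error sources (reducibility failure of Proposition~\ref{prop:irrpoly}, the unlucky event $R \mid \Delta$, and the Freivalds failure) combine to give a per-round error bounded away from~$1$ so that the $\log(1/\epsilon)$ amplification is valid.
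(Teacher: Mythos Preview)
Your proposal is correct and follows essentially the same approach as the paper: generate a probably-irreducible $R$ of degree $\gO(\log_q n)$, evaluate $u\Delta(C_R)$ via the companion-matrix variant of \eval with a Freivalds projection, precompute $V=\coef(P,F)$ once and amortize it over $\gO(\log\frac{1}{\epsilon})$ independent rounds. Your per-round error analysis is slightly more careful (explicit conditioning to get $2/3$) than the paper's $\tfrac{1}{2}+2\epsilon_1$ bound, but the argument and the resulting complexity are the same.
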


\begin{proof}
  Let $0<\epsilon_1<\frac{1}{4}$ be a fixed probability. The algorithm needs two steps. First it computes with probability at
  least $1-\frac{1}{\epsilon_1}$ an irreducible polynomial $R$ of degree $d=\lceil\log_q\frac{2n}{\epsilon_1}\rceil$ using
  Proposition~\ref{prop:irrpoly}. Second, it computes $uH(C_R)$ and $u(FG\bmod P)(C_R)$ for some random vector $u\in\{0,1\}^d$. If
  both evaluations are distinct, the algorithm returns False. Otherwise, it repeats $\gO(\log\frac{1}{\epsilon})$ these two steps
  until one of the repetition fails. If this is the case it return false, otherwise the algorithm return true.

  If $H = (FG)\bmod P$, the algorithm always returns True. Let us assume that $H\neq (FG)\bmod P$, and let
  $\Delta = H-(FG)\bmod P$. For the algorithm to return True, each repetition must ensure that $uH(C_R) = u(FG\bmod P)(C_R)$. This
  may happen if either $R$ divides $\Delta$, whence $H(C_R) = (FG\bmod P)(C_R)$, or $R$ does not divide $\Delta$ but
  $uH(C_R) = u(FG\bmod P)(C_R)$. Since there are at least $q^d/2d$ irreducible polynomials of degree $d$ in $\cfq$ by
  Proposition~\ref{prop:nbirrpoly} and at most $n/d$ of them divide $\Delta$, the probability that $R$ divides $\Delta$ is at most
  $2n/q^d\le\epsilon_1$ provided $R$ is irreducible. Taking into account the probability that $R$ is not irreducible, the
  probability that $R$ divides $\Delta$ is at most $2\epsilon_1$. Then, using Freivalds standard argument, if $R$ does not divide
  $\Delta$, the probability $u\Delta(C_R) = 0$ is at most $\frac{1}{2}$. Altogether, the probability that one iteration returns
  True is at most $\frac{1}{2}+2\epsilon_1 < 1$. Therefore, the probability that
  $\log\frac{1}{\epsilon} / \log(\frac{1}{2}+2\epsilon_1)$ independent iterations all return True is at most $\epsilon$.

  Let us now analyse the complexity of the algorithm.  Since $\epsilon_1$ is a constant, the second step uses
  $\gO(\#Pn+n\log_q n)$ operations in $\cfq$ using Remark~\ref{rem:evalMatrix}.  The first step is negligible, even if naive
  polynomial arithmetic is used. Note that in this complexity, $\#Pn$ is the cost of Algorithm~\ref{algo:coef} (\coef). Since it is deterministic
  and only depends on $P$ and $F$, it can be called only once rather than at each iteration.

  We then get the complexity $\gO(\#Pn+n\log_q n\log\frac{1}{\epsilon})$.
\end{proof}

\begin{rem}
  Note that compared to using evaluation at $\alpha$ in an extension field, no operation depends on $\epsilon$.  If $\epsilon$ is
  fixed, the new method replaces a factor $\mul(\log_q n)$ in the complexity by $\log_q n$. Moreover our new approach requires
  only simple computations: additions of vectors, multiplication of a vector by a scalar and matrix-vector product with a
  companion matrix.  Furthermore, when $\#P$ and $\epsilon$ are constants, the verification is always faster than the
  modular multiplication, whatever the size of $q$.
\end{rem}

This new method still requires some polynomial arithmetic even if only naive polynomial multiplication is used.  This is because
the algorithm called to provide the degree-$d$ polynomial $R$ relies on polynomial products and GCDs to ensure that $R$ is
probably irreducible.  In order to remove the dependency to polynomial arithmetic, we can just choose a random monic degree-$d$
polynomial $R$ and compute the evaluation on $C_R$ even if $R$ is not irreducible. This implies to take several random polynomials $R$ to reach the target probability $\epsilon$.

\begin{cor}\label{cor:verifwithoutproduct}
  Let $F$, $G$, $H$ and $P\in\cfq[X]$, as in Algorithm~\ref{algo:verify}.  Without using any polynomial multiplication we can
  check whether $H = FG\bmod P$ in $\gO(\#Pn+n(\log_q n)^2\log\frac{1}{\epsilon})$ operations in $\cfq$ with a probability of
  error at most $\epsilon$ if $H\neq FG$.
\end{cor}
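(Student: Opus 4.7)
The plan is to mimic the approach of Theorem~\ref{thm:verifsmallfield}: pick a monic polynomial $R$ of degree $d = \lceil\log_q(4n)\rceil$ and a random vector $u\in\{0,1\}^d$, then test whether $uH(C_R) = u(FG\bmod P)(C_R)$ using the matrix-polynomial evaluation described in Remark~\ref{rem:evalMatrix}. The only change compared to Theorem~\ref{thm:verifsmallfield} is that we do not try to ensure irreducibility of $R$ (which would require polynomial GCDs and products), and instead we simply draw $R$ uniformly at random among all monic degree-$d$ polynomials. To compensate for the weaker guarantee on $R$, we will run $k = \Theta(d\log(1/\epsilon))$ independent iterations and return True only if all of them agree.

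The core analysis step is a lower bound on $\Pr[R\nmid\Delta]$, where $\Delta = H - (FG\bmod P)\neq 0$ has degree $<n$. I would argue as follows: by Proposition~\ref{prop:nbirrpoly}, the number of monic irreducible polynomials of degree $d$ is at least $q^d/(2d)$, and at most $n/d$ of them divide $\Delta$ since the product of more than $n/d$ distinct degree-$d$ irreducibles would exceed $\deg\Delta$. Hence among the $q^d$ candidates there are at least $q^d/(2d) - n/d$ irreducible polynomials that do not divide $\Delta$; using $q^d\ge 4n$, this is $\ge q^d/(4d)$, so $\Pr[R\nmid\Delta]\ge 1/(4d)$. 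Combining with the Freivalds bound $\Pr[u\Delta(C_R)\ne 0\mid\Delta(C_R)\ne 0]\ge 1/2$ (using $u\in\{0,1\}^d$, which works uniformly over $q$), one iteration returns False with probability at least $1/(8d)$ whenever $\Delta\ne 0$. Therefore $k$ independent iterations fail to detect $\Delta\ne 0$ with probability at most $(1-1/(8d))^k\le\epsilon$ provided $k = \gO(d\log(1/\epsilon))$.

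For the complexity, Algorithm~\ref{algo:coef} (\coef) is deterministic, depends only on $F$ and $P$, and costs $\gO(n\#P)$ operations in $\cfq$; it is executed only once. Each iteration then requires computing $uF(C_R)$, $uP(C_R)$, updating $uF_\alpha\leftarrow (uF_\alpha)C_R - V[i-1](uP_\alpha)$ at each of the $n$ Horner steps, and accumulating the inner product with the $g_i$'s. Since each right-multiplication by the companion matrix $C_R$ on a row vector costs $\gO(d)$ operations in $\cfq$, one iteration totals $\gO(nd)$ operations. Multiplying by $k$ iterations, the final complexity is $\gO(n\#P + nd\cdot d\log(1/\epsilon)) = \gO(n\#P + n(\log_q n)^2\log(1/\epsilon))$, matching the statement. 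Crucially, every operation in the algorithm is either a scalar operation in $\cfq$, a vector-scalar operation, or a vector-by-companion-matrix product; no polynomial multiplication is invoked, neither for producing $R$ nor for the evaluation.

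The main obstacle is precisely the one addressed in the second paragraph: without testing irreducibility we cannot bound $\Pr[R\mid\Delta]$ by $2n/q^d$ as in Theorem~\ref{thm:verifsmallfield}, because a reducible $R$ may share its (many) factors with $\Delta$; a worst-case $\Delta$ splitting completely can have exponentially many degree-$d$ divisors. The trick is that we do not need $\Pr[R\mid\Delta]$ to be small, only that $\Pr[R\nmid\Delta]$ is at least $1/\mathrm{poly}(d)$, and this is guaranteed by restricting attention to the irreducible $R$'s (even though we do not know which ones are irreducible) since Proposition~\ref{prop:nbirrpoly} gives enough of them. Paying a factor $d$ in the number of iterations is exactly what turns the $\log_q n$ factor of Theorem~\ref{thm:verifsmallfield} into the $(\log_q n)^2$ factor stated here.
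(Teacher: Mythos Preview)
Your proposal is correct and follows essentially the same approach as the paper: replace the probably-irreducible $R$ of Theorem~\ref{thm:verifsmallfield} by a uniformly random monic degree-$d$ polynomial, and pay an extra factor $d=\gO(\log_q n)$ in the number of iterations because a random monic polynomial is irreducible only with probability $\ge 1/(2d)$. Your analysis is in fact slightly more direct than the paper's, which phrases it as ``generate $\gO(d\log\frac{1}{\epsilon_1})$ random $R$'s so that at least one is irreducible with probability $1-\epsilon_1$''; you instead bound $\Pr[R\nmid\Delta]\ge 1/(4d)$ in one shot by counting irreducible non-divisors among all monic degree-$d$ polynomials, but the arithmetic and the resulting $\gO(d\log(1/\epsilon))$ iteration count are identical.
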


\begin{proof}
  In the proof of Theorem~\ref{thm:verifsmallfield}, we replace one evaluation on $C_R$ with $R$ irreducible with probability at
  least $1-\epsilon_1$ by few evaluations on several $C_R$ with $R$ random monic polynomial of degree $d$.  As the probability of
  a random monic polynomial $R$ to be irreducible is at least $\frac{1}{2d}$ by Proposition~\ref{prop:nbirrpoly}, we need to
  generate $\gO(d\log\frac{1}{\epsilon_1})$ random polynomials $R$ to reach a probability at least $1-\epsilon_1$ that at least
  one of them is irreducible.  Thus the evaluation part of the algorithm is repeated $\gO(d)=\gO(\log_q n)$ times since
  $\epsilon_1$ is constant.
\end{proof}

Even if this new approach does not improve the complexity from evaluation at $\alpha$ in an extension field using naive polynomial
multiplication, it allows to remove completely the dependency to any polynomial multiplication algorithm While this result might
not being seen useful as first sight, it will be used in Section~\ref{ssec:kaminski} to provide efficient verification for
polynomial multiplication. Indeed, in that case we will need to use verification with $P$ being of degree smaller than the input
degree.\medskip

This new method using companion matrix also works in the sparse case.  Indeed, any power $\alpha^t$ in Algorithm~\ref{algo:seval}
(\seval) are now replaced with $C_R^t$. However, only few powers $C_R^t$ with $1<t<n$ are relevant and we cannot compute all of
them as in the dense case.  This implies that computing $u(FG\bmod P)(C_R)$ instead of $(FG\bmod P)(C_R)$ is useless in that
case. Indeed, using fast exponentiation together with the structure of the powers of companion matrices~\cite{Gries1980} already
yield a complexity of $\gO((\#F\#P^{\lceil\frac{1}{\gamma}-1\rceil}+\#G)\log n\log_q^2 n)$ operations in $\cfq$, and we cannot
hope to lower this down by some random vector projection. In that case, using Freivald technique is useless and we have a better
probability of success. Choosing $\gO(\log\frac{n}{\epsilon}\log\frac{1}{\epsilon})$ polynomials $R$ at random, at least one of them is irreducible and does not divide $\Delta = H-FG\bmod P$ with probability $1-\epsilon$.  
This leads to an algorithm which does not use any polynomial product in the sparse case too. Even though it is asymptotically not as fast as the verification in an extension field where naive polynomial
arithmetic is used, it is still quasi-linear. 

\begin{cor}
  Let $P\in\cfq[X]$ be monic of degree $n$ and $F$, $G$, $H\in\cfq[X]$ of degree less than $n$ and sparsity at most $T$, and
  $0<\epsilon<1$.  Using a direct evaluation on a companion matrix, we can check whether $H = FG\bmod P$ with a probability of error
  at most $\epsilon$ if $H\neq FG$ in $\gO(T\#P ^{\lceil\frac{1}{\gamma}-1\rceil}\log n(\log_q \frac{n}{\epsilon})^3\log\frac{1}{\epsilon})$ operations
  in $\cfq$, without performing any polynomial product. 
\end{cor}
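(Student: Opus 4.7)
The plan is to adapt Algorithm~\ref{algo:seval} by replacing the evaluation point $\alpha\in\grext$ with the companion matrix $C_R$ of a uniformly random monic polynomial $R\in\cfq[X]$ of degree $d=\lceil\log_q(2n/\epsilon)\rceil$. Concretely, at each trial we draw such an $R$, run the \seval recurrences with $C_R$ in place of $\alpha$ to obtain $(FG\bmod P)(C_R)$, evaluate $H(C_R)$ by the same mechanism, and compare. We output \textbf{False} as soon as one trial yields a mismatch, and \textbf{True} if all $N=\gO(d\log(1/\epsilon))$ trials agree. The powers $C_R^t$ demanded by Equation~\eqref{eq:srec1} are computed through the structured fast-exponentiation of~\cite{Gries1980} for companion matrices, so no polynomial multiplication subroutine is ever invoked.

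Correctness rests on the fact that $R$ is the minimal polynomial of $C_R$, hence $\Delta(C_R)=0$ iff $R\mid\Delta$, where $\Delta=H-FG\bmod P$. Assume $\Delta\neq 0$ and $\deg(\Delta)<n$. By Proposition~\ref{prop:nbirrpoly} a uniformly random monic degree-$d$ polynomial is irreducible with probability at least $1/(2d)$; and since $\Delta$ has at most $n/d$ distinct irreducible factors of degree $d$, the probability that $R$ is simultaneously irreducible and divides $\Delta$ is at most $(n/d)/q^d\le\epsilon/(2d)$ for our choice of $d$. Thus with probability at least $(1-\epsilon)/(2d)\ge 1/(4d)$ (assuming $\epsilon\le 1/2$, which is without loss of generality) the drawn $R$ is irreducible and does not divide $\Delta$, which forces $\Delta(C_R)\neq 0$ and correct detection. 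Taking $N=\lceil 4d\ln(1/\epsilon)\rceil$ independent trials pushes the overall miss probability below $(1-1/(4d))^N\le\epsilon$.

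For the complexity, Theorem~\ref{thm:evalmodPsparse} guarantees that one evaluation on $C_R$ uses $\gO((T\#P^{\lceil\frac{1}{\gamma}-1\rceil}+T)\log n)$ operations in $\grext=\cfq[C_R]$, and computing $H(C_R)$ contributes only $\gO(T\log n)$ more. Using the companion-matrix representation of~\cite{Gries1980}, every such $\grext$-operation, including the fast-exponentiation steps producing the required powers $C_R^t$ for $t<n$, costs $\gO(d^2)$ operations in $\cfq$ without any polynomial multiplication routine, exactly as in the bound $\gO((\#F\#P^{\lceil\frac{1}{\gamma}-1\rceil}+\#G)\log n\log_q^2 n)$ mentioned before the statement. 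Hence one trial costs $\gO(T\#P^{\lceil\frac{1}{\gamma}-1\rceil}\log n\cdot d^2)$ operations in $\cfq$, and summing over the $N=\gO(d\log(1/\epsilon))$ trials yields the announced bound $\gO(T\#P^{\lceil\frac{1}{\gamma}-1\rceil}\log n\cdot(\log_q(n/\epsilon))^3\log(1/\epsilon))$.

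The main subtlety is the probabilistic analysis: since a random monic $R$ of degree $d$ need not be irreducible, we cannot bound $\Pr[R\mid\Delta]$ directly (a densely factored $\Delta$ could have many degree-$d$ divisors that are reducible), so we must restrict attention to the event ``$R$ is irreducible''. This is precisely what makes the clean bound $(n/d)/q^d$ available, at the price of a $1/(2d)$ factor in the single-trial success probability, which is recovered by the extra factor $d\log(1/\epsilon)$ in the number of trials. Everything else is a routine lift of Algorithm~\ref{algo:seval} to the matrix setting.
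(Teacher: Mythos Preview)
Your proposal is correct and follows essentially the same approach as the paper: evaluate on the companion matrix $C_R$ of a uniformly random monic degree-$d$ polynomial with $d=\gO(\log_q\frac{n}{\epsilon})$, use the structured fast exponentiation of~\cite{Gries1980} to obtain each power $C_R^t$ in $\gO(d^2)$ operations, and repeat $\gO(d\log\frac{1}{\epsilon})$ times so that with probability $\ge 1-\epsilon$ at least one $R$ is irreducible and does not divide $\Delta$. Your probabilistic analysis is slightly more explicit than the paper's one-line justification, but the argument and the resulting $\gO(T\#P^{\lceil 1/\gamma-1\rceil}\log n\cdot d^3\log\frac{1}{\epsilon})$ bound coincide.
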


\section{Polynomial product verification}\label{sec:product}

In this section we study the simpler problem of verifying a classical polynomial multiplication.  Given three 
polynomials $F$, $G$ and $H\in\gr[X]$ of respective degrees $n$, $n$ and $2n$, the classical idea to verify $H = FG$ simply falls down to testing
$H(\alpha) = F(\alpha)G(\alpha)$ for some random $\alpha$ in a large enough set $\mathcal{S}$. As mentioned in the introduction,
this strategy may or may not have an optimal bit complexity, depending on the context. Here we are concerned with two difficulties
that arise in either the dense or the sparse cases.

If the polynomials are dense, the verification through evaluation requires a number of operations in $\mathcal{S}$ that is linear
in the input polynomials degree $n$. When $\gr$ has more than $n$ elements, taking $\mathcal{S}\subset\gr$ is sufficient to use
evaluation. However, multiplication in $\gr$ has not a linear bit complexity and best known results remain quasi-linear
\cite{cantor1991, harvey:polmul-nlogn,harvey2019}. The evaluation therefore leads to a quasi-linear bit complexity of
$\gO(n\imul(\log n)) = \gO(n\log n \log\log n)$.  When $\gr$ is too small, for instance with a small finite field, $\mathcal{S}$
is classically taken as a field extension of $\gr$, large enough to make it unlikely that $\alpha$ is a root of $H-FG$.
Therefore, each operation in $\mathcal{S}$ corresponds to an operation over $\gr[X]$ with non-negligible degree, meaning that the
number of operations in $\gr$ is no more linear in the inputs degree $n$.  As mentioned in the introduction, Kaminski's approach
\cite{Kaminski89} circumvents the later problem by replacing the evaluation with a computation in $\gr[X]/(X^i-1)$ for random
integer $i$ in a prescribed range. There, his algorithm is able to verify dense polynomial products with a linear number of
operations in $\gr$ whatever the ring size. However, the same difficulty as for large rings may arise. Since operations in $\gr$ do not have an linear bit complexity, unless say
$\gr=\cf_2$, this is not always sufficient to reach an optimal bit complexity for the verification.  In Section~\ref{ssec:kaminski}, we present
Kaminski's approach \cite{Kaminski89} and we provide a thorough analysis in the bit complexity model. In particular, we show that it
is possible to get optimal verification in the bit complexity model for any polynomial in $\gz[X]$ and for some polynomials in
$\cfq[X]$, depending on the relation between $q$ and $n$.

If the polynomials $F$, $G$ and $H$ are sparse with at most $T$ nonzero coefficients, the evaluation requires a number of
operations in $\mathcal{S}$ that is $\gO(T \log n)$. However the input bit size is given by the size of the exponents \emph{plus}
the size of the coefficients, that is $\gO(T+\log n)$ bits. Since $\mathcal S$ has to be of size at least $\gO(\log n)$, the bit
complexity of evaluation would be $\gO(T\log^2n)$ which is not even quasi-linear.  In Section~\ref{ssec:sparsprod} we develop a
novel method, already appearing in \cite{giorgi2020:sparsemul}, to verify sparse polynomial multiplication with a quasi-linear bit
complexity of $\gOt(T+\log n)$.

\subsection{Dense polynomial product verification}\label{ssec:kaminski}

In~\cite{Kaminski89}, Kaminski describes an algorithm to verify a polynomial product $H = FG\in\gr[X]$ using a linear number of
operations in $\gr$, regardless of its size.  His method chooses at random a polynomial $P$ that probably do not divides
$\Delta=H-FG$ if $\Delta\neq 0$. Then he verifies $H = FG\in\gr[X]/P$ using fast polynomial multiplication. Surprisingly, taking
$P$ of degree $o(n)$ in his algorithm enables to reach a linear number of operations in $\gr$. In the following we will often use
$\delta > 1.78107$ to be some constant value related to Euler's constant.

\subsubsection{Kaminski's algorithm}

The first step is to randomly select a polynomial from a fixed set, such that it most probably does not divide $\Delta=H-FG$ if
$\Delta\neq 0$. A standard approach could be to consider irreducible polynomials. This would be the direct generalization of the
evaluation method. However, Kaminski considers polynomials that are instead of the form $X^i-1$, for some integers $i>0$.  These
polynomials have two advantages: Reduction modulo $X^i-1$ has a linear cost and all their divisors are cyclotomic polynomials so
their least common multiple (lcm) has specific properties. 

\begin{prop}[\cite{Kaminski89}]
  For any integer set $I\subset\mathbb{N}$, $\prod_{i\in I}\Phi_i$ divides $\lcm \{X^i-1 : i\in I\}$, where $\Phi_i$ is the $i$-th
  cyclotomic polynomial in $\gr[X]$.
\end{prop}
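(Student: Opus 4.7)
The plan is to reduce the statement to the classical cyclotomic factorization identity and then exploit pairwise coprimality of distinct cyclotomic polynomials.

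First I would recall the basic identity $X^n - 1 = \prod_{d \mid n} \Phi_d(X)$, which holds in $\mathbb{Z}[X]$ and hence in $\gr[X]$ for any commutative ring $\gr$ (all factors are monic integer polynomials, so the identity transfers by reduction of coefficients). Reading off the factorization, for each $i \in I$ we immediately get $\Phi_i \mid X^i - 1$ in $\gr[X]$, and therefore $\Phi_i$ divides $L := \lcm\{X^j - 1 : j \in I\}$.

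Next I would argue that the individual divisibilities $\Phi_i \mid L$, for $i$ ranging over $I$, can be combined into a single divisibility $\prod_{i \in I} \Phi_i \mid L$. This step uses the fact that the $\Phi_i$'s are pairwise coprime: over $\mathbb{Q}[X]$ they are distinct monic irreducibles, so $\gcd(\Phi_i, \Phi_j) = 1$ in $\mathbb{Q}[X]$, and since both are monic with integer coefficients a Bézout relation can be cleared of denominators to exhibit $A\Phi_i + B\Phi_j = c$ in $\mathbb{Z}[X]$ for some nonzero integer $c$ (in fact one can check $c = \pm 1$ using resultants of distinct cyclotomic polynomials, up to a prime factor that is a nuisance only in small characteristic; for an integral domain, which is our standing hypothesis, this is enough). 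Given pairwise coprimality and each $\Phi_i \mid L$, a standard induction yields $\prod_{i \in I} \Phi_i \mid L$.

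The main obstacle I foresee is precisely this pairwise coprimality step in an arbitrary integral domain $\gr$: over $\mathbb{Z}$ or $\mathbb{Q}$ the argument is textbook, but over $\gr$ one has to be sure that the cyclotomic polynomials remain coprime after reduction of coefficients. The cleanest way around this is to perform the whole divisibility argument in $\mathbb{Z}[X]$ and then push the resulting identity $L = \bigl(\prod_{i \in I} \Phi_i\bigr) \cdot Q$ forward to $\gr[X]$; since $\prod_{i \in I} \Phi_i$ is monic, the cofactor $Q \in \mathbb{Z}[X]$ is uniquely determined and the identity remains valid after coefficient reduction, giving the divisibility in $\gr[X]$ as required.
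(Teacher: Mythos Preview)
The paper does not prove this proposition; it is quoted from Kaminski~\cite{Kaminski89} without argument, so there is no in-paper proof to compare against.

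Your outline is correct over $\mathbb{Z}[X]$: the factorization $X^i-1=\prod_{d\mid i}\Phi_d$ and the pairwise coprimality of the $\Phi_d$ in $\mathbb{Q}[X]$ give $\prod_{i\in I}\Phi_i\mid\lcm\{X^i-1:i\in I\}$ immediately. The transfer to a general integral domain $\gr$, however, has a genuine gap. First, the parenthetical claim that the resultant of two distinct cyclotomic polynomials is $\pm1$ is false: $\operatorname{Res}(\Phi_m,\Phi_n)$ is a prime power whenever $m/n$ is a prime power, so over $\mathbb{F}_p$ the polynomials $\Phi_m$ and $\Phi_{pm}$ genuinely share a common factor (e.g.\ $\Phi_1=X-1$ and $\Phi_p=(X-1)^{p-1}$). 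Being an integral domain does not rescue this. Second, and more seriously, your push-forward argument establishes that $\prod_i\Phi_i$ divides the \emph{image in $\gr[X]$} of the lcm computed in $\mathbb{Z}[X]$, whereas the statement concerns the lcm computed \emph{in} $\gr[X]$; a priori the latter could be a proper divisor of the former, and then your conclusion would not follow.

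In fact the two lcms coincide here, but this needs an extra argument. Over a field of characteristic $p$, write each $i\in I$ as $i=p^{a_i}m_i$ with $p\nmid m_i$, so that $X^i-1=(X^{m_i}-1)^{p^{a_i}}=\prod_{e\mid m_i}\Phi_e^{p^{a_i}}$ with the $\Phi_e$ (for $e$ coprime to $p$) pairwise coprime and separable. Comparing multiplicities of each $\Phi_e$ on both sides then shows both that the reduced $\mathbb{Z}$-lcm is the $\gr$-lcm and that $\prod_{i\in I}\Phi_i$ divides it (using $\sum_{b\le a}\phi(p^b)=p^a$). For a general integral domain one passes to the fraction field. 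With this multiplicity check added, your approach goes through.
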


Kaminski also gives a lower bound on the degree of $\lcm\{X^i-1 : i\in I\}$, depending in $I$. 
In particular the proposition implies that a nonzero polynomial, divisible by $k$ polynomials, of the form $X^i-1$, cannot have a
too small degree.
In the converse direction, a nonzero polynomial $\Delta$ of degree at most $2n$ cannot have too many divisors of the form $X^i-1$.
This is the content of Kaminski's main theorem. 

\begin{thm}[\cite{Kaminski89}]\label{kaminski1}
  Let $\Delta$ be a nonzero polynomial in $\gr[X]$ of degree $\leq 2n$ and $0<e<\frac{1}{2}$.  Let
  $k=\lceil 2\delta n^e\ln\ln (n^{1-e})\rceil$.  At most $k-1$ polynomials in the set $\{X^i-1|n^{1-e}\leq i<2n^{1-e}\}$ divide 
  $\Delta$.
\end{thm}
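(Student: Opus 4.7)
The plan is to argue by contradiction. Suppose that $k$ polynomials $X^{i_1}-1, \dotsc, X^{i_k}-1$ with $n^{1-e} \le i_j < 2n^{1-e}$ all divide $\Delta$. Then their lcm also divides $\Delta$, so $\deg(\Delta) \ge \deg(\lcm\{X^{i_j}-1 : 1\le j\le k\})$, which must therefore be at most $2n$. The goal is to push this lower bound past $2n$ by exploiting the cyclotomic structure of $X^i-1$.

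By the proposition stated just before the theorem, $\prod_{j=1}^k \Phi_{i_j}$ divides the lcm, because every $i_j$ itself belongs to the index set $\{i_1,\dotsc,i_k\}$. Since the cyclotomic polynomials $\Phi_{i_j}$ are pairwise distinct and $\deg(\Phi_{i_j}) = \varphi(i_j)$, I obtain
\[
  \deg\bigl(\lcm\{X^{i_j}-1\}\bigr) \;\ge\; \sum_{j=1}^k \varphi(i_j).
\]
Each $i_j$ satisfies $i_j \ge n^{1-e}$, so it remains to lower bound $\varphi(i)/i$ uniformly for $i$ in the range $[n^{1-e}, 2n^{1-e})$.

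For that I would invoke the classical Rosser--Schoenfeld bound on Euler's totient, which states that for every sufficiently large integer $i$,
\[
  \varphi(i) \;>\; \frac{i}{e^{\gamma}\ln\ln i + \tfrac{3}{\ln\ln i}},
\]
where $\gamma$ is Euler's constant and $e^\gamma \approx 1.78107$. Since the theorem allows any constant $\delta > e^\gamma$, the correction term $3/\ln\ln i$ is absorbed for $i \ge n^{1-e}$ with $n$ large enough, and one gets $\varphi(i) \ge i/(\delta \ln\ln i) \ge n^{1-e}/(\delta\ln\ln(2n^{1-e}))$. Noting that $\ln\ln(2n^{1-e})$ and $\ln\ln(n^{1-e})$ differ only by a lower-order term that the slack in $\delta$ swallows, this gives
\[
  \sum_{j=1}^k \varphi(i_j) \;\ge\; \frac{k\,n^{1-e}}{\delta \ln\ln(n^{1-e})}.
\]

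Plugging in $k = \lceil 2\delta n^e \ln\ln(n^{1-e})\rceil$ yields $\sum_j \varphi(i_j) \ge 2n$, and in fact strictly larger once the strict inequality in the Rosser--Schoenfeld bound and the strict inequality $\delta > e^\gamma$ are combined. This contradicts $\deg(\Delta) \le 2n$, so at most $k-1$ of the polynomials $X^i-1$ in the prescribed range can divide $\Delta$. The main subtlety, and where I would be most careful, is verifying that the chosen $\delta$ is large enough to absorb the gap between $\ln\ln(n^{1-e})$ and $\ln\ln(2n^{1-e})$ as well as the $3/\ln\ln i$ correction term uniformly over the interval; this is a routine but delicate constant chase rather than a conceptual obstacle.
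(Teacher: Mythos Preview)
The paper does not actually prove this theorem; it is stated with attribution to Kaminski~\cite{Kaminski89} and used as a black box. So there is no in-paper proof to compare against.

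That said, your argument is correct and is essentially Kaminski's original proof: use the proposition to lower-bound the degree of the lcm by $\sum_j \varphi(i_j)$, then invoke the Rosser--Schoenfeld lower bound on $\varphi$ to force the degree past $2n$. One small simplification: you do not need to worry about the discrepancy between $\ln\ln(n^{1-e})$ and $\ln\ln(2n^{1-e})$. The function $x/\ln\ln x$ is increasing for $x$ large enough, so from $\varphi(i_j) \ge i_j/(\delta\ln\ln i_j)$ and $i_j \ge n^{1-e}$ you get directly $\varphi(i_j) \ge n^{1-e}/(\delta\ln\ln(n^{1-e}))$, with no need to pass through the right endpoint. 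The only genuine constant chase is absorbing the $3/\ln\ln i$ correction term into the slack $\delta - e^{\gamma}$, which as you note requires $n$ large enough depending on $\delta$.
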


Kaminski's approach is then to choose a random integer $i\in[n^{1-e},2n^{1-e}[$, to reduce the input polynomials modulo
$X^i-1$ and to assert the equality in $\gr[X]/(X^i-1)$. We provide in Algorithm~\kaminski a more precise description of this 
approach.

    \begin{algorithm}
\caption{\kaminski}
\label{algo:kaminski}
\begin{algorithmic}[1]
  \Input $F,G,H \in \gr[X]$ of degree $n,n$ and $2n$; and $0<e<\frac{1}{2}$. 
 
  \Output True if $H = FG$, False with probability at least $1- (\lceil 2\delta n^e\ln\ln (n^{1-e})\rceil-1)/n^{1-e}$ otherwise. 

\State $i\gets$ random integer in $[n^{1-e}, 2n^{1-e}[$ 
\State $F_i,G_i,H_i\gets F\bmod X^i-1,G\bmod X^i-1,H\bmod X^i-1$ \label{kaminski:reduction} 
\State $M\gets F_iG_i$ \Comment Using a fast multiplication algorithm \label{kaminski:mul}
\State $M_i\gets M\bmod X^i-1$
\State \Return $M_i=H_i$ 
\end{algorithmic}
\end{algorithm}

\begin{thm}[\cite{Kaminski89}]\label{kaminski2}
  Let $F$, $G$ and $H\in\gr[X]$ of degree at most $n$, $n$ and $2n$, $0<e<\frac{1}{2}$ and an integer $k$ as in 
  Theorem~\ref{kaminski1}. Algorithm~\ref{algo:kaminski} uses $\gO(n)$ operations in $\gr$, and its failure probability is at most
  $(k-1)/n^{1-e}$ if $H\neq FG$. 
\end{thm}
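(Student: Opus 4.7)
The plan is to split the proof into two independent parts: the cost analysis of Algorithm~\ref{algo:kaminski} and the failure probability bound. Neither part requires any new combinatorial ingredient beyond what is already available in Theorem~\ref{kaminski1}; the latter does essentially all the heavy lifting for the probability estimate, so the main task is merely to account for the costs of each step carefully.

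For the complexity, I would first observe that reducing a polynomial of degree at most $2n$ modulo $X^i-1$ simply means adding together coefficients whose indices are congruent modulo $i$. With $i \le 2n^{1-e}$, this amounts to $\gO(n)$ additions in $\gr$, so Step~\ref{kaminski:reduction} contributes $\gO(n)$ operations. At Step~\ref{kaminski:mul}, the two operands $F_i$ and $G_i$ have degree less than $i = \gO(n^{1-e})$, and a fast multiplication algorithm therefore requires $\mul(i) = \gO(n^{1-e}\log n\log\log n)$ operations. Because $e>0$ is fixed, $n^e$ eventually dominates $\log n\log\log n$, so this cost is $o(n)$. The polynomial $M$ has degree less than $2i$, so the final reduction costs $\gO(i) = o(n)$ operations. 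Summing the three contributions yields the claimed $\gO(n)$ operations in $\gr$.

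For the probability analysis, the algorithm is clearly True-biased: if $H = FG$ then $H_i = (FG)\bmod(X^i-1) = M_i$ for every $i$, so it always returns True. Assume now $H\neq FG$ and set $\Delta = H - FG$. Then $\Delta$ is a nonzero polynomial of degree at most $2n$, and the algorithm fails (returns True) exactly when $H_i = M_i$, that is, when $\Delta \equiv 0 \pmod{X^i - 1}$, or equivalently when $X^i - 1$ divides $\Delta$. Theorem~\ref{kaminski1} asserts that, among the $n^{1-e}$ polynomials $\{X^i-1 : n^{1-e}\le i < 2n^{1-e}\}$, at most $k-1$ of them divide $\Delta$. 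Since $i$ is drawn uniformly at random from the $n^{1-e}$ integers in $[n^{1-e}, 2n^{1-e})$, the probability of picking a ``bad'' $i$ is at most $(k-1)/n^{1-e}$, which is exactly the claimed bound.

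There is no serious obstacle: once Theorem~\ref{kaminski1} is available, the only mildly delicate point is making sure that the reduction step genuinely costs $\gO(n)$ rather than $\gO(n\log n)$ (which it does, since reducing modulo $X^i-1$ is just coefficient folding, not a full polynomial division) and that $\mul(n^{1-e})$ is $o(n)$ (which follows from any sub-quadratic bound on $\mul$ together with $e>0$). I would make both of these observations explicit, and cite Theorem~\ref{kaminski1} for the probability bound.
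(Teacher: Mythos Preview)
Your proposal is correct and matches the paper's treatment: the paper does not give a standalone proof but records exactly your cost decomposition in Remark~\ref{rem:kaminskidetail} (the $\gO(n)$ additions for the reductions, $\mul(n^{1-e})$ for the product) and derives the failure bound directly from Theorem~\ref{kaminski1}. The only nuance is that the paper phrases the multiplication requirement more weakly than you do---it observes that any subquadratic $\mul$ yields $\mul(n^{1-e})=\gO(n)$ since $e<\frac12$, whereas you invoke the quasi-linear Cantor--Kaltofen bound; both are valid, and the paper's version is slightly more robust to the choice of multiplication algorithm.
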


\begin{rem}\label{rem:kaminskidetail}  
  To be more precise, Algorithm~\ref{algo:kaminski} requires $\gO(n)$ additions in $\gr$ at Step~\ref{kaminski:reduction} to compute the
  first three reductions modulo $X^i-1$, $\mul(n^{1-e})$ operations in $\gr$ to compute the product at Step~\ref{kaminski:mul},
  and $\gO(n^{1-e})$ additions in $\gr$ to compute the last reduction.
\end{rem}

One shall remark that the product in Step~3 must be computed with a subquadratic algorithm such that $\mul(n^{1-e})=\gO(n)$ since
$e < 1/2$.  If the parameter $e$ is taken close enough to $1/2$, Karatsuba's algorithm suffices to reach a linear number of
operations. 
The failure probability is $\gO(\log\log n/n^{1-2e})$, whence the need to have $e < 1/2$. We can bound this probability by $\gO(\frac{1}{n^{e'}})$ for any positive integer $e' < 1-2e$. In order to reach a probability $\epsilon$ of error, the algorithm should be repeated $\gO(\log_n\frac{1}{\epsilon})$ times. Note that this number of rounds is constant if $\epsilon$ is taken as $1/n^{\gO(1)}$.

The drawback of such approach is to crucially rely on a somewhat fast multiplication algorithm, and to perform multiplications of polynomials 
of degrees more than $ \sqrt n$. This means that optimal verification of the product of two degree-$n$ polynomials uses a product   
of polynomials of degrees close to $n$. In some contexts, such as verifying an implementation, relying on the same problem is
definitively problematic.\medskip

We note that all steps starting from Step~\ref{kaminski:mul} aim to verify $H_i = F_iG_i \bmod X^i-1$ deterministically.  It is 
easy to see that those steps can be replaced by our probabilistic modular product verification developed in 
Section~\ref{sec:verif}.  For polynomials over the integers or finite fields, this method does not require any polynomial
multiplications at all.

\begin{cor}\label{cor:kaminski+verif}
  If $\gr = \gz$ or a finite field, and $F$, $G$ and $H\in\gr[X]$ of degrees $n$, $n$ and $2n$. We can check whether $H = FG$ with
  a probability of failure at most $\epsilon$ if $H\neq FG$. This requires $\gO(n\log_n\frac{1}{\epsilon})$ additions in $\gr$
  plus $o(n\log_n\frac{1}{\epsilon})$ operations in $\gr$, without reverting to any polynomial multiplication.  In particular, the
  algorithm uses an optimal number of operations in $\gr$ when $\epsilon = 1/n^{\gO(1)}$.
\end{cor}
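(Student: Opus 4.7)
My plan is to reuse Kaminski's dimension-reduction from Algorithm~\ref{algo:kaminski} but to splice the probabilistic modular product verification of Section~\ref{sec:verif} in place of the deterministic polynomial product currently performed at Step~\ref{kaminski:mul}. Fix a constant $0 < e < \frac{1}{2}$, draw a uniformly random $i \in [n^{1-e}, 2n^{1-e})$, and reduce $F$, $G$ and $H$ modulo $X^i - 1$. By Remark~\ref{rem:kaminskidetail}, this preprocessing uses $\gO(n)$ additions in $\gr$ and no multiplication whatsoever.

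Instead of computing the product $F_iG_i$, I would call a modular product verification on the instance $(F_i, G_i, H_i, X^i - 1)$. Since $X^i - 1$ is a binomial, $\#P = 2$ and the gap parameter is $\gamma = 1$. For $\gr = \gz$, I would invoke Algorithm~\ref{algo:verifyZ}; for $\gr = \cfq$, the companion-matrix variant of Corollary~\ref{cor:verifwithoutproduct}. Both variants are explicitly free of polynomial multiplication and act on polynomials of degree $\gO(n^{1-e})$, so their cost amounts to $o(n)$ operations in $\gr$, since the polylogarithmic factors appearing in those statements are absorbed in $n^e$ as $e$ is bounded away from $0$.

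For the error analysis, one round fails with probability at most $\gO(\log\log n / n^{1-2e})$ by Theorem~\ref{kaminski1}, plus a constant sub-verification failure that can be fixed at, say, $\frac{1}{4}$ without impacting the asymptotic cost. Repeating the whole procedure $r = \gO(\log_n \frac{1}{\epsilon})$ independent times drives the compound failure probability below $\epsilon$. Summing over rounds yields $\gO(n \log_n \frac{1}{\epsilon})$ additions in $\gr$ from the reductions, and $o(n \log_n \frac{1}{\epsilon})$ further operations in $\gr$ from the sub-verifications and from the random-prime or random-polynomial generation. When $\epsilon = 1/n^{\gO(1)}$ this collapses to $\gO(n)$ additions, which is optimal since any verifier must at least read the $\gO(n)$ input coefficients.

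The main obstacle I anticipate is the bit-level bookkeeping for $\gr = \gz$: after reduction modulo $X^i - 1$, the coefficients of $F_i$, $G_i$ and $H_i$ can inflate by a factor up to $n/i = \gO(n^e)$, so the bound $\Delta_\infty$ feeding into Algorithm~\ref{algo:verifyZ} must be recomputed from the reduced instance in order to guarantee that the random prime still has bit length $\gO(\log(n \log C))$ and that the sub-verification cost stays in $o(n)$ rather than degrading to $\Theta(n)$. Over finite fields the companion-matrix verification sidesteps this issue entirely, so that case is essentially routine.
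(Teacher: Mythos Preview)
Your overall plan mirrors the paper's: reduce modulo $X^i-1$ for random $i$, then replace the deterministic product by a modular product verification free of polynomial multiplication. That part is fine. However, your error analysis contains a genuine gap.

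You fix the sub-verification failure at a constant $\frac{1}{4}$ and then assert that $r = \gO(\log_n\frac{1}{\epsilon})$ rounds suffice. This is incorrect: with a constant sub-verification failure, the per-round failure probability is bounded away from zero (roughly $\frac{1}{4}$, since the Kaminski term $\gO(\log\log n/n^{1-2e})$ is negligible). To drive a constant per-round failure below $\epsilon$ you need $r = \Theta(\log\frac{1}{\epsilon})$ rounds, not $\Theta(\log_n\frac{1}{\epsilon})$. In the regime $\epsilon = 1/n^{\gO(1)}$ this costs you a factor $\log n$ and yields $\gO(n\log n)$ additions rather than the claimed $\gO(n)$. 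The paper avoids this by setting the sub-verification failure to $1/n$: since the verified instance has degree $\gO(n^{1-e})$, the resulting $\gO(\log n)$ blow-up inside Theorem~\ref{thm:verifmod} or Corollary~\ref{cor:verifwithoutproduct} is absorbed by the $n^e$ slack and the sub-verification still costs $o(n)$. The per-round failure is then $\gO(1/n^{e'})$, and now $\gO(\log_n\frac{1}{\epsilon})$ rounds genuinely suffice.

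A secondary remark: the corollary counts operations in $\gr$, not bit operations. Over $\gz$ you may simply invoke Theorem~\ref{thm:verifmod} directly, since $\gz$ has arbitrarily large evaluation sets; there is no need for Algorithm~\ref{algo:verifyZ} and the coefficient-growth worry you raise is irrelevant at this level of the argument (it only matters for the later bit-complexity theorems).
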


\begin{proof}
  We replace the last three steps of Algorithm~\ref{algo:kaminski} by a modular product verification, with a probability of failure at most $1/n$. Over $\gz$ or large finite fields, the complexity of this part is given by the dense version of Theorem~\ref{thm:verifmod} with $\#P=2$ and degree $i=\gO(n^{1-e})$. Over small finite fields, we rely instead on Corollary~\ref{cor:verifwithoutproduct}. In both cases, one can achieve a failure probability at most $1/n$ with at most $\gO(\log n)$ repetitions of the algorithm, for a total number of operations in $\gr$ that remains $o(n)$. 

  The total probability of failure of the modified algorithm is then  $1/n+\gO(1/n^{e'}) = \gO(1/n^{e'})$ for some $e'>0$. We can repeat this modified algorithm for $\gO(\log_n\frac{1}{\epsilon})$ rounds to get the announced failure probability and complexity. 
\end{proof}

\subsubsection{Analysis in the bit complexity model}

In \cite{Kaminski89}, Kaminski only details the algebraic complexity of its polynomial product verification, and no further
insights on the bit complexity are given. We now perform this analysis for polynomials over finite fields and over $\gz$. We
surprisingly prove that his algorithm remains linear in number of bit operations in many cases. For polynomials over $\cfq$, the algorithm fails to be linear only when $q$ is doubly exponentially larger than the degree. For polynomials over $\gz$, a similar condition applies. However, we are able to describe a variant of the algorithm that has linear bit complexity for polynomials with large coefficients. Hence we prove that polynomial product verification over $\gz$ has linear bit complexity in all cases.
Our variant is based on integer product verification, for which
Kaminski actually gives also in \cite{Kaminski89} a linear-time algorithm. Of course all those algorithms are therefore optimal.\medskip

The next theorem provides the bit complexity analysis of Kaminski's algorithm over finite fields.

\begin{thm}\label{thm:kaminskifields}
  Let $F$, $G$ and $H\in \cfq[X]$ of degrees $n$, $n$ and $2n$, and $0 < e < \frac12$. Algorithm~\ref{algo:kaminski} requires
  $\gO(n\log q +n^{1-e}\log q\log\log q)$ bit operations.  When $\log\log q = \gO(n^e)$, one can verify if $H = FG$ with failure
  probability at most $\epsilon$ if $H\neq FG$, using $\gO(n\log q\log_n\frac{1}{\epsilon})$ bit operations which is optimal when
  $\epsilon=1/n^{\gO(1)}$.
\end{thm}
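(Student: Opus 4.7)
The plan is to work through Algorithm~\ref{algo:kaminski} step by step in the bit complexity model, as was outlined in Remark~\ref{rem:kaminskidetail} in the algebraic setting. The selection of $i$ is negligible. The three reductions modulo $X^i-1$ at Step~\ref{kaminski:reduction} consist of $\gO(n)$ additions in $\cfq$, each costing $\gO(\log q)$ bit operations, for a total of $\gO(n\log q)$ bit operations. The final reduction, together with the equality test, involves $\gO(n^{1-e})$ additions/comparisons in $\cfq$, contributing only $\gO(n^{1-e}\log q)$ bit operations. The dominant term comes from the polynomial product at Step~\ref{kaminski:mul}, which multiplies two polynomials of degree less than $n^{1-e}$ over $\cfq$ and therefore costs $\mul_q(n^{1-e})$ bit operations.

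I would then plug in the Linnik-conditional bound $\mul_q(m) = \gO(m\log q\log(m\log q))$ recalled in the preliminaries. With $m = n^{1-e}$, this yields
\[
\mul_q(n^{1-e}) = \gO\bigl(n^{1-e}\log q\cdot ((1-e)\log n + \log\log q)\bigr)
                = \gO(n^{1-e}\log n\log q + n^{1-e}\log q\log\log q).
\]
The key observation is that $\log n \le n^e$ for $n$ sufficiently large, so the first summand is absorbed into the $\gO(n\log q)$ cost of Step~\ref{kaminski:reduction}. Summing all contributions gives the announced bound $\gO(n\log q + n^{1-e}\log q\log\log q)$ for a single run.

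For the second claim, the hypothesis $\log\log q = \gO(n^e)$ turns $n^{1-e}\log q\log\log q$ into $\gO(n\log q)$, so one run costs $\gO(n\log q)$. Theorem~\ref{kaminski2} guarantees a per-round failure probability of $\gO(\log\log n/n^{1-2e}) = \gO(1/n^{e'})$ for some constant $e' > 0$ (obtained by choosing $e$ slightly below $1/2$). To reach target error $\epsilon$ we repeat the procedure $\gO(\log_n(1/\epsilon))$ times, by independence of the trials, giving the claimed $\gO(n\log q\log_n(1/\epsilon))$ bit complexity. Optimality follows because the input $H$ alone has bit size $\Theta(n\log q)$ and the number of rounds is $\gO(1)$ whenever $\epsilon = 1/n^{\gO(1)}$.

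The main obstacle is bookkeeping on the multiplication cost: one has to recognize that the $\log n$ factor appearing in $\mul_q(n^{1-e})$ does not inflate the complexity, precisely because $n^{1-e}\log n$ is dominated by $n$, and that the remaining $\log\log q$ overhead is tame exactly in the regime stated. Every other piece is a direct translation of Kaminski's algebraic analysis into the bit model, with the standard cost of ring operations over $\cfq$ being $\gO(\log q)$ for additions.
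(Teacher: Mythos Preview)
Your proposal is correct and follows essentially the same approach as the paper: both invoke Remark~\ref{rem:kaminskidetail}, cost the $\gO(n)$ additions as $\gO(n\log q)$, expand $\mul_q(n^{1-e})=\gO(n^{1-e}\log q\log(n\log q))$ into the two summands, absorb the $n^{1-e}\log n\log q$ term into $\gO(n\log q)$, and conclude via repetition for the second part. Your write-up is slightly more explicit about the absorption step and the optimality remark, but the argument is the same.
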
 
 
\begin{proof}
	We apply the count of operations given in Remark~\ref{rem:kaminskidetail}.
	The additions give the term $\gO(n\log q)$.
	The bit complexity of the product of degree-$\gO(n^{1-e})$ polynomials over $\cfq$ is $\mul_q(n^{1-e})=\gO(n^{1-e}\log q\log(n\log q))$, which is $\gO(n\log q+n^{1-e}\log q\log\log q)$. We obtain the claimed complexity.

        The second part directly follows from the observation that $\gO(\log_n\frac{1}{\epsilon})$ rounds of the algorithm yield a failure probability at most $\epsilon$.
\end{proof}

Note that the bound $\log\log q = \gO(n^e)$ to get a linear number of bit operations in $n \log q$ is only valid when using
the fastest known multiplication algorithm.
If we replace by a slower algorithm, the bound becomes smaller.  For instance, using Karatsuba's algorithm the product of
degree-$\gO(n^{1-e})$ polynomials uses $\gO(n^{(1-e)\log 3}\log q\log\log q)$ ring operations. For the algorithm to still have an
optimal complexity, we need that $n^{(1-e)\log 3}\log\log q = \gO(n)$. This implies $e\ge 1-1/\log 3\simeq 0.367$, and the bound
becomes $\log\log q = \gO(n^{1-(1-e)\log 3})$. If we take $e$ close to $1/2$, say $0.45$, the bound reads
$\log\log q = \gO(n^{0.13})$ while it is $\log\log q=\gO(n^{0.45})$ using the fastest multiplication algorithm.

Further, as mentioned previously, using a fast multiplication algorithm for the verification of a polynomial product is
problematic. We now analyse the bit complexity of our variant that does not use any polynomial product, that is of
Corollary~\ref{cor:kaminski+verif}. We show that the same complexity and the same bound on $q$ can be obtained without any
polynomial product.

\begin{rem}
    Let $F$, $G$, $H\in\cfq[X]$ of degrees $n$, $n$ and $2n$, and $0<e<\frac12$. Algorithm~\ref{algo:kaminski} can be implemented using a modular product verification and without any polynomial product. This variant has bit complexity
  $\gO(n\log q+n^{1-e}\log q\log\log q)$.  
  When $\log\log q = \gO(n^e)$, one can verify if $H = FG$ with failure probability at most $\epsilon$ if $H\neq FG$, using
  $\gO(n\log q\log_n\frac{1}{\epsilon})$ bit operations, which is optimal when $\epsilon=1/n^{\gO(1)}$, and without reverting to
  any polynomial product.
\end{rem}

\begin{proof} 
The proof simply consists in using Corollary~\ref{cor:verifwithoutproduct} in place of Remark~\ref{rem:kaminskidetail} in the previous proof.
\end{proof} 

\medskip

Now we consider $F$, $G$ and $H\in\gz[X]$ with $\norm{F},\norm{G},\norm{H}\leq C$. We first analyse the bit complexity of
Algorithm~\ref{algo:kaminski} and provide conditions for the algorithm to use a linear number of bit operations. Later we propose
a variant to be able to verify $H = FG$ with a linear number of bit operations for any integer polynomials..

\begin{thm}
  Let $F$, $G$ and $H\in\gz[X]$ of degrees $n$, $n$ and $2n$, and norms at most $C$, and
  $0<e<\frac12$. Algorithm~\ref{algo:kaminski} requires $\gO(n\log C+n^{1-e}\log C\log\log C)$ bit operations.  When
  $\log\log C = \gO(n^e)$, one can verify if $H = FG$ with failure probability at most $\epsilon$ if $H\neq FG$, using
  $\gO(n\log C\log_n\frac{1}{\epsilon})$ bit operations which is optimal when $\epsilon=1/n^{\gO(1)}$.
\end{thm}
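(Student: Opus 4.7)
The plan is to follow the same three-step template as the proof of Theorem~\ref{thm:kaminskifields}, but now accounting carefully for integer coefficient growth instead of operating modulo a fixed prime. By Remark~\ref{rem:kaminskidetail}, Algorithm~\ref{algo:kaminski} consists of (i) three reductions modulo $X^i-1$ with $i = \Theta(n^{1-e})$, (ii) one product $F_iG_i$, and (iii) one final reduction of that product modulo $X^i-1$.

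For step~(i), each coefficient of $F_i$, $G_i$, $H_i$ is a sum of at most $\lceil n/i\rceil$ input coefficients, hence has bit length $\gO(\log C + \log n)$. The total number of integer additions is $\gO(n)$, each on operands of that bit length, so the cost is $\gO(n(\log C+\log n))$. For step~(ii), we must multiply two polynomials of degree $<i = \gO(n^{1-e})$ whose coefficients are bounded by $\gO((n/i)C)$. Using Kronecker substitution to a power of $2$ larger than $i\cdot((n/i)C)^2 = n^2C^2/i$, this becomes one integer multiplication on operands of bit length $\gO(n^{1-e}(\log n+\log C))$. With $\imul(m) = \gO(m\log m)$, this is
\[
\imul\bigl(n^{1-e}(\log n+\log C)\bigr)
= \gO\bigl(n^{1-e}(\log n+\log C)\log(n^{1-e}(\log n+\log C))\bigr),
\]
and expanding the outer logarithm this is $\gO(n^{1-e}\log C\log n + n^{1-e}\log C\log\log C)$, where the first term is $\gO(n\log C)$ since $n^{1-e}\log n = \gO(n)$ for any fixed $e>0$. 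Step~(iii) costs only $\gO(n^{1-e}\log(nC))$ bit operations and is absorbed. Summing gives the claimed bound $\gO(n\log C + n^{1-e}\log C\log\log C)$ for a single round.

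For the second statement, the assumption $\log\log C = \gO(n^e)$ makes the second term $\gO(n^{1-e}\log C\cdot n^e) = \gO(n\log C)$, so one round runs in $\gO(n\log C)$ bit operations. By Theorem~\ref{kaminski2} a single round fails with probability $\gO(1/n^{e'})$ for some $e'>0$, so repeating $\gO(\log_n\tfrac{1}{\epsilon})$ independent rounds drives the failure probability below $\epsilon$, for a total of $\gO(n\log C\log_n\tfrac{1}{\epsilon})$ bit operations; this is $\gO(n\log C)$ and hence optimal whenever $\epsilon = 1/n^{\gO(1)}$, since the input alone has $\Theta(n\log C)$ bits.

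The only delicate point is the coefficient growth in step~(ii): the factor $n/i$ in the bound on $\norm{F_i}$ and $\norm{G_i}$ is what determines the product bit length, and it is crucial that $i = \Theta(n^{1-e})$ keeps the product size bounded by $\gO(\log n + \log C)$, which in turn keeps the Kronecker operand of size $\gO(n^{1-e}\log(nC))$. Given this, the log-log factor in the complexity of integer multiplication is exactly what forces the regime $\log\log C = \gO(n^e)$ in order to obtain a truly linear bit complexity; outside this regime Algorithm~\ref{algo:kaminski} does not yet give optimality, which motivates the variant introduced next in the paper.
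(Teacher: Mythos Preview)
Your argument follows the same template as the paper's proof and is essentially correct, but there is one loose spot in step~(i). You bound the cost of the three reductions modulo $X^i-1$ by $\gO(n(\log C+\log n))$, reasoning that there are $\gO(n)$ additions on operands of bit length $\gO(\log C+\log n)$. This introduces an extra $\gO(n\log n)$ term that is \emph{not} absorbed by the stated bound $\gO(n\log C+n^{1-e}\log C\log\log C)$ when $\log C=o(\log n)$. The paper avoids this by organizing the additions in each residue class as a balanced binary tree: summing $m=\gO(n^e)$ integers of bit length $\log C$ this way costs $\sum_{j\ge 1}\frac{m}{2^j}(\log C+j)=\gO(m\log C)$, and over all $\gO(n^{1-e})$ classes this gives $\gO(n\log C)$ rather than $\gO(n(\log C+\log n))$.

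Apart from this, your analysis of step~(ii) via Kronecker substitution matches the paper's (you drop the cross terms $n^{1-e}\log^2 n$ and $n^{1-e}\log n\log\log C$ in the expansion, but both are $\gO(n\log C)$ since $n^{1-e}\log n=\gO(n)$ for fixed $e>0$), and your treatment of the repetition to reach failure probability $\epsilon$ is identical to the paper's.
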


\begin{proof}
  The first three reductions 
  require $\gO(n)$ additions in $\gz$ to compute $F_i$, $G_i$ and $H_i$, whose norms are at most $n^eC$. A careful computation of these additions
  using a binary tree uses $\gO(\sum_{i=1}^{\log n}\frac{n}{2^i}\log (iC)=\gO(n\log C)$ bit operations.  Then the polynomial
  product is performed with inputs of degree $n^{1-e}$ and norm $n^eC$.  As discussed in the introduction, it requires
  $\imul(n^{1-e}(\log (n^eC)+\log n^{1-e}))$ bit operations, that is
  $\gO(n^{1-e}(\log n+\log C)(\log n+\log\log C))=\gO(n\log C+n^{1-e}\log C\log\log C)$.  Finally the last reduction is performed
  with degree $2n^{1-e}$ and norm $n(n^eC)^2$ in $\gO(n\log C)$ bit operations.

  Repeating $\gO(\log_n\frac{1}{\epsilon})$ times the algorithm provides the second part of the theorem.
\end{proof}

As for polynomials over finite fields, the final computations can be replaced by a modular product verification. Here this yields a slightly better complexity. This improvement translates into an exponentially smaller
constraint on the norm $C$ for the algorithm to be optimal.

\begin{rem}
  Let $F$, $G$ and $H\in\gz[X]$, of degrees $n$, $n$ and $2n$ and norms at most $C$, and
  $0<e<\frac12$. Algorithm~\ref{algo:kaminski} can be implemented using a modular product verification and without any polynomial
  product. This variant has
  bit complexity $\gO(n\log C+n^{1-e}\log (C)\log\log\log (C))$. 
  When $\log\log\log C=\gO(n^e)$, one can verify if $H = FG$ with failure probability at most $\epsilon$ if $H = FG$, using
  $\gO(n\log C\log_n\frac{1}{\epsilon})$ bit operations, which is optimal when $\epsilon=1/n^{\gO(1)}$, and without reverting to
  any polynomial product.
\end{rem}

\begin{proof}
  The proof is once again similar, using the dense part of Theorem~\ref{thm:sparsemodverifZ} for the modular product
  verification. This verification is performed on polynomials of degrees $\gO(n^{1-e})$ and norm at most $n^eC$. Its bit
  complexity is then $\gO(n^{1-e}(\imul(\log(n\log C)+\log(C)\log\log(n\log C))))$ which is
  $\gO(n\log C+n^{1-e}\log(C)\log\log\log(C))$. This proves the first part of the remark. The second part relies on repetition of
  Algorithm~\ref{algo:kaminski}.
\end{proof}

As long as the coefficients are not insanely huge compared to the degree, the previous remark applies and the polynomial product
verification is linear. More precisely, this corresponds to $C$ ranging from $\gO(1)$ to $2^{2^{2^{\gO(n)}}}$.  To deal with this
extreme case of huge coefficients, we develop another approach that is valid as soon as $\log n = \gO(\log C)$.  This means that
all cases are covered with an optimal bit complexity. We shall mention that both methods are applicable when $C$ is ranging from
$n^{\gO(1)}$ to $2^{2^{2^{\gO(n)}}}$, which could be interesting when designing the most efficient implementation.

To treat the huge coefficient case, we rely on a result of Kaminski about the verification of the product of two integers.  His
technique is similar to the polynomial case: He reduces $s$-bit integers modulo $2^i-1$ for some $i$ between $s^{1-e}$ and
$2s^{1-e}$, and then performed the product with reduced integers.
 
\begin{thm}[\cite{Kaminski89}]\label{thm:kaminskiinteger}
  Let $a$, $b$, $c$ be integers of at most $s$, $s$ and $2s$ bits, $0<e<\frac{1}{2}$ and
  $k=\lceil 2\delta s^e\ln\ln (s^{1-e})\rceil$ where $\delta >1.78107$.  We can check whether $ab=c$ in $\gO(s)$ bit operations
  with a probability of error at most $(k-1)/s^{1-e}$ if $ab\neq c$.
\end{thm}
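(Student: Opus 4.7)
The plan is to transport Algorithm~\ref{algo:kaminski} from polynomials in $X$ to integers in base $2$, replacing the moduli $X^i-1$ by $2^i-1$. Concretely, pick a uniformly random integer $i \in [s^{1-e}, 2s^{1-e})$, compute the three residues $a_i = a \bmod (2^i-1)$, $b_i = b \bmod (2^i-1)$, and $c_i = c \bmod (2^i-1)$, then evaluate $a_i b_i \bmod (2^i-1)$ and return True iff this matches $c_i$. If $ab = c$ the test always succeeds; otherwise it succeeds only when $(2^i-1) \mid (ab-c)$, so all the work is in bounding how often that happens.

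For the bit complexity, reducing an $s$-bit integer modulo $2^i-1$ takes $\gO(s)$ bit operations via blockwise folding: split the binary representation into blocks of $i$ bits and accumulate them, exploiting $2^i \equiv 1$, iterating until the residue has $\gO(i)$ bits. Since $e$ can be chosen close enough to $\frac12$ that Karatsuba alone yields $\gO(i^{\log_2 3}) = \gO(s^{(1-e)\log_2 3}) = \gO(s)$ on $\gO(s^{1-e})$-bit operands, the product step also fits into $\gO(s)$ bit operations. The final comparison and last reduction are $\gO(s^{1-e})$ bit operations, hence negligible, so the whole procedure runs in $\gO(s)$ bit operations.

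The main obstacle is the error probability bound, which is the number-theoretic analog of Theorem~\ref{kaminski1}. One must show that the nonzero integer $N = ab - c$ of bit length at most $2s+1$ is divisible by at most $k-1$ integers of the form $2^i-1$ with $i$ in the chosen range. The right tool is the factorization $2^i-1 = \prod_{d\mid i} \Phi_d(2)$ combined with Bang's theorem: for every $i > 6$, the cyclotomic factor $\Phi_i(2)$ admits a \emph{primitive} prime divisor that does not divide $2^j-1$ for any $j < i$. If $(2^{i_j}-1) \mid N$ for $k$ distinct indices $i_1,\dots,i_k$ in $[s^{1-e}, 2s^{1-e})$, then $\lcm_j(2^{i_j}-1) \mid N$ and near-coprimality of the $\Phi_{i_j}(2)$ forces
\[
   2s+1 \ \ge\ \log_2 |N| \ \ge\ \sum_{j=1}^{k} \log_2 \Phi_{i_j}(2) \ \ge\ \sum_{j=1}^{k} \phi(i_j) - \gO(k).
\]
Plugging in Mertens' lower bound $\phi(i) \ge c \cdot i/\ln\ln i$, where $1/c$ matches the constant $\delta > 1.78107$, and using $i_j \ge s^{1-e}$, this gives $k \le \lceil 2\delta s^e \ln\ln(s^{1-e})\rceil$.

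The delicate part is squeezing this specific constant $\delta$ cleanly, as opposed to an asymptotic $\gO$: one must track the lower bound on $\Phi_i(2)$ (versus $2^{\phi(i)}$) and the Mertens-style lower bound on $\phi$ with care rather than absorbing everything into constants. Once the bound on $k$ is established, the probability that a uniformly random $i$ among the $s^{1-e}$ candidates makes $(2^i-1) \mid N$ is at most $(k-1)/s^{1-e}$, which is exactly the claimed failure probability.
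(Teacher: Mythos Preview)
The paper does not prove this theorem; it is quoted from Kaminski~\cite{Kaminski89} and used as a black box. Your sketch is a faithful reconstruction of Kaminski's original argument and is correct in outline: random $i\in[s^{1-e},2s^{1-e})$, linear-time reduction modulo $2^i-1$ by block folding, a sub-linear multiplication on $\gO(s^{1-e})$-bit operands, and a divisor-count bound via the cyclotomic values $\Phi_i(2)$ together with the Mertens-type estimate $\phi(i)\ge i/(\delta\ln\ln i)$.

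One sharpening that simplifies your ``near-coprimality'' step: since every $i_j$ lies in $[s^{1-e},2s^{1-e})$, no ratio $i_j/i_l$ with $j\neq l$ can be an integer ${\ge}2$, hence never a prime power. The classical fact that a common prime of $\Phi_m(2)$ and $\Phi_n(2)$ forces $m/n$ or $n/m$ to be a prime power then yields that the $\Phi_{i_j}(2)$ are \emph{genuinely} pairwise coprime in this range. Thus $\prod_j\Phi_{i_j}(2)\mid N$ exactly, and Bang's theorem is not needed. From there your inequality $2s+1\ge\sum_j\log_2\Phi_{i_j}(2)\ge\sum_j\phi(i_j)-\gO(k)$ goes through, and the bound on $k$ follows.
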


To verify a polynomial product $H = FG$ over $\gz$, we use the same idea as for computing the product. We use Kronecker
substitution. If we evaluate each polynomial on $\beta$ that is some large power of two, the coefficients of $FG$ can directly be
read on the digits of the integer $F(\beta)G(\beta)$. These evaluations at $\beta$ require no operation. The polynomial product verification is
thus reduced to an integer product verification $H(\beta) = F(\beta)G(\beta)$.

\begin{thm}
  Let $F$, $G$, $H\in\gz[X]$ of respective degrees $n$, $n$ and $2n$, and norm at most $C$. If $\log n = \gO(\log C)$, we can
  check whether $H = FG$ with failure probability at most $\epsilon$ if $H\neq FG$, using
  $\gO(n\log C\log_{n\log C}\frac{1}{\epsilon})$ bit operations, which is optimal when $\epsilon=1/n^{\gO(1)}$.
\end{thm}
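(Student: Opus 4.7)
The plan is to reduce the polynomial product verification to an integer product verification via a Kronecker substitution, and then to invoke Kaminski's integer algorithm (Theorem~\ref{thm:kaminskiinteger}). Pick $\beta = 2^k$ to be the smallest power of two with $\beta > 2nC^2+1$, so that $\log\beta = \gO(\log(nC)) = \gO(\log C)$ under the hypothesis $\log n = \gO(\log C)$. Because $\norm{FG} \le nC^2$ by Lemma~\ref{lem:prodbounds} and $\norm H \le C$, every coefficient of the difference $H-FG$ lies in the interval $(-\beta/2,\beta/2)$. A standard uniqueness argument for signed base-$\beta$ representations then shows that $H(\beta) = F(\beta)G(\beta)$ as integers if and only if $H = FG$ as polynomials.

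The evaluations $F(\beta)$, $G(\beta)$, $H(\beta)$ come essentially for free: since $\beta$ is a power of two, $F(\beta)$ is obtained by laying out the coefficients of $F$ in fixed-width signed blocks of $k$ bits. The resulting integers have bit length $\gO(n\log\beta + \log C) = \gO(n\log C)$. Setting $s = \gO(n\log C)$, we apply Theorem~\ref{thm:kaminskiinteger} to $a = F(\beta)$, $b = G(\beta)$, $c = H(\beta)$ (with appropriate sign handling; one may replace $F$ and $G$ by the shifted polynomials $F + C\cdot(1+X+\dots+X^n)$ and analogously for $G$ to reduce to a non-negative Kronecker substitution if preferred). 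One round therefore costs $\gO(s) = \gO(n\log C)$ bit operations and fails with probability $\gO(s^{-(1-2e)}\ln\ln s)$ for the chosen parameter $e<1/2$, in particular with probability bounded away from $1$ by an $s^{-\Omega(1)}$ term.

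To reach target failure probability $\epsilon$, we run $\gO(\log_s (1/\epsilon)) = \gO(\log_{n\log C}(1/\epsilon))$ independent rounds, yielding the claimed total bit complexity
\[
\gO\!\left(n\log C\,\log_{n\log C}\tfrac{1}{\epsilon}\right).
\]
Optimality for $\epsilon = 1/n^{\gO(1)}$ follows because $\log_{n\log C}(1/\epsilon) = \gO(1)$ in that regime and the input has bit size $\Theta(n\log C)$, which is a lower bound on any verification algorithm that must at least read its input.

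The only real subtlety is the Kronecker reduction: one must check that the sign handling in the evaluation does not blow up the bit length beyond $\gO(n\log C)$ and that the equivalence $H(\beta) = F(\beta)G(\beta) \Leftrightarrow H = FG$ holds. Both are routine once $\beta$ exceeds $2\max(\norm H,\norm{FG})$, and crucially the hypothesis $\log n = \gO(\log C)$ is what allows the $\log(nC^2)$ factor coming from $\log\beta$ to be absorbed into $\log C$, keeping the final complexity linear in the input bit size.
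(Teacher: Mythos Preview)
Your proposal is correct and follows essentially the same approach as the paper: reduce to an integer product verification via Kronecker substitution at a power of two $\beta$ slightly larger than $nC^2$, invoke Theorem~\ref{thm:kaminskiinteger} on the resulting $\gO(n\log C)$-bit integers, and repeat $\gO(\log_{n\log C}\tfrac{1}{\epsilon})$ rounds. You are a bit more explicit than the paper about the signed-coefficient subtlety and the equivalence $H(\beta)=F(\beta)G(\beta)\Leftrightarrow H=FG$, but the argument is otherwise the same.
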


\begin{proof}
  As $F$ and $G$ have norm $C$ and degree $n$, $FG$ has norm at most $nC^2$.  Let $\beta$ be the first power of $2$ greater than
  $nC^2$. Then $H = FG$ if and only if $H(\beta) = F(\beta)G(\beta)$.

  The integers $F(\beta)$, $G(\beta)$ and $H(\beta)$ have bit length $\gO(n\log\beta)=\gO(n\log (nC)) = \gO(n\log C)$ since
  $\log n = \gO(\log C)$. 
  As $\beta$ is a large enough power of $2$, the evaluation on $\beta$ does not require any operation.  Therefore all the cost
  comes from the verification of $F(\beta)G(\beta)=H(\beta)$. This is linear in the size of $F(\beta)$, $G(\beta)$ and $H(\beta)$
  by Theorem~\ref{thm:kaminskiinteger}, hence linear in $n\log C$.

  To get the appropriate probability bound, we use $\gO(\log_{n\log C}\frac{1}{\epsilon})$ round of this algorithm. This is
  supported by the fact that the probability bound in Theorem~\ref{thm:kaminskiinteger} is $1/s^{\gO(1)}$.
\end{proof}

\subsection{Quasi-linear sparse product verification}\label{ssec:sparsprod}

Given three sparse polynomials $F$, $G$ and $H$ in $\gr[X]$, we want to assert that $H = FG$. As already mentioned, evaluating the
polynomials at a random point $\alpha$ cannot yield a quasi-linear algorithm. Our approach is to take a random prime $p$ and to
verify the equality modulo $X^p-1$ through modular product verification.  This method is explicitly described in
Algorithm~\ref{algo:verif} that works over any large enough integral domain $\gr$.  We further extend the description and the
analysis of this algorithm for the specific cases $\gr=\gz$ and $\gr=\cfq$.

\begin{algorithm}
\caption{\verif}
\label{algo:verif}
\begin{algorithmic}[1]
  
  \Input $H,F,G\in\gr[X]$; $0<\epsilon<1$.
 
  \Output True if $H=FG$, False with probability at least $ 1-\epsilon$ otherwise.

\State Define $0<\epsilon_1<\frac{3}{10}$ and $0<\epsilon_2<1$ such that  $\frac{10\epsilon_1}{3} + (1-\frac{10\epsilon_1}{3})\epsilon_2\leq\epsilon$
\State $n\gets\deg(H)$
\If{$\#H>\#F\#G$ or $n\neq \deg(F)+\deg(G)$} \Return False\EndIf
	\State $\lambda\gets\max(21,\frac{1}{\epsilon_1}(\#F\#G+\#H)\ln n)$
\State $p\gets \rp(\lambda,\frac{5\epsilon_1}{3})$
\State $(F_p,G_p,H_p) \gets(F \bmod X^p-1,~G \bmod X^p-1,~H\bmod X^p-1) $
\State \Return True if $H_p = (F_pG_p)\bmod X^p-1$, False otherwise \Comment using Theorem \ref{thm:verifmod} with probability $\epsilon_2$
\end{algorithmic}
\end{algorithm}

\begin{thm}\label{thm:algogénéral}
  If $\gr$ is an integral domain of size $\geq \frac{2}{\epsilon_1\epsilon_2}(\#F\#G+\#H)\ln(n)$, Algorithm~\ref{algo:verif} works
  as specified. Assuming that $n = \deg(H)$ and $T = \max(\#F,\#G,\#H)$, it requires $\gO(T\log(\frac{1}{\epsilon}T\log n))$ operations in
  $\gr$, and $\gO(T\log n \log\log(\frac{1}{\epsilon}T\log n))$ bit operations plus
  $\gO(\log\frac{1}{\epsilon}\log^3(\frac{1}{\epsilon}T\log n)\log^2\log(\frac{1}{\epsilon}T\log n))$ bit operations to obtain a
  prime $p$.
\end{thm}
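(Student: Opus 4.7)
My plan is to separate the proof into a correctness argument and a complexity analysis, with the correctness argument further split along the two independent sources of failure in the algorithm.

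For correctness, I would assume $H\neq FG$ (the converse being trivial) and let $\Delta = H - FG$. This is a nonzero polynomial of degree at most $n$ and sparsity at most $\#H + \#F\#G$. I would then analyse the two events that lead the algorithm to return True:
\begin{itemize}
\item[(a)] $\Delta \bmod (X^p-1) = 0$;
\item[(b)] $\Delta \bmod (X^p-1) \neq 0$ but the call to \verify\ fails.
\end{itemize}
For (a), I apply Proposition~\ref{prop:choixdep} to $\Delta$ with $T$-parameter $\#F\#G+\#H$. Since we take $\lambda = \max(21,\tfrac{1}{\epsilon_1}(\#F\#G+\#H)\ln n) = \max(21,\tfrac{10}{3\cdot(10\epsilon_1/3)}(\#F\#G+\#H)\ln n)$, the proposition gives $\Pr[(a)] \le \tfrac{10\epsilon_1}{3}$. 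For (b), conditioning on $\neg(a)$ we apply Theorem~\ref{thm:verifmod} (sparse case with $P=X^p-1$, so $\#P=2$ and $\gamma=1$) with error parameter $\epsilon_2$, which requires $|\gr| \ge \tfrac{1}{\epsilon_2}(p-1)$. A union bound gives total error at most $\tfrac{10\epsilon_1}{3} + (1-\tfrac{10\epsilon_1}{3})\epsilon_2 \le \epsilon$ by choice of $\epsilon_1,\epsilon_2$. Since $p \le 2\lambda$, the requirement on $|\gr|$ becomes $|\gr| \ge \tfrac{2}{\epsilon_1\epsilon_2}(\#F\#G+\#H)\ln n$, matching the hypothesis.

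For the complexity, I would substitute $\#F\#G + \#H = \gO(T^2)$ so that $\lambda = \gO(\tfrac{1}{\epsilon_1}T^2\ln n)$ and $\log\lambda = \gO(\log(\tfrac{1}{\epsilon}T\log n))$. Proposition~\ref{prop:rdprime} gives the cost of \rp\ as $\gO(\log\tfrac{1}{\epsilon}\log^2\lambda\,\imul(\log\lambda)\log\log\lambda)$, and applying $\imul(m)=\gO(m\log m)$ yields the claimed bound for prime generation. For the three modular reductions, each of the $\gO(T)$ terms requires dividing a $\log n$-bit exponent by the $\gO(\log\lambda)$-bit prime $p$, costing $\gO(\imul(\log n)) = \gO(\log n\log\log n)$ bit operations per term; summing gives $\gO(T\log n\log\log(\tfrac{1}{\epsilon}T\log n))$ bit operations for all reductions (this is where the $\log\log$ factor in the claim comes from). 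The final call to \verify\ is applied to sparse polynomials of degree $<p$ and sparsity at most $T$ with $\#P=2$ and $\gamma=1$, so Theorem~\ref{thm:verifmod} gives $\gO(T\log p) = \gO(T\log(\tfrac{1}{\epsilon}T\log n))$ operations in $\gr$, which dominates the ring-operation count.

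The main obstacle is simply bookkeeping: ensuring that the choice $\lambda = \tfrac{1}{\epsilon_1}(\#F\#G+\#H)\ln n$ is calibrated so that the prime returned by $\rp(\lambda,\tfrac{5\epsilon_1}{3})$ satisfies both the sparsity-preservation requirement (via Proposition~\ref{prop:choixdep}, which uses error parameter $\tfrac{10\epsilon_1}{3}$) and the size requirement needed to feed \verify\ with a probability-$\epsilon_2$ guarantee, all while combining probabilities into $\epsilon$. Once these parameter matchings are verified, the complexity is a direct substitution of $\log p$ and $\log\lambda$ into the costs supplied by Propositions~\ref{prop:rdprime}, \ref{prop:choixdep} and Theorem~\ref{thm:verifmod}.
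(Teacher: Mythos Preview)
Your proposal is correct and follows essentially the same approach as the paper: split the failure probability via Proposition~\ref{prop:choixdep} and Theorem~\ref{thm:verifmod}, then read off the complexities of \rp, the exponent reductions, and the sparse modular verification. One small bookkeeping slip: your per-term cost for reducing an exponent is $\gO(\imul(\log n)) = \gO(\log n\log\log n)$, which sums to $\gO(T\log n\log\log n)$ rather than the claimed $\gO(T\log n\log\log(\tfrac{1}{\epsilon}T\log n))$, and these are not comparable in general (e.g.\ when $T$ and $1/\epsilon$ are polylogarithmic in $n$). The paper instead uses the sharper blocked-division bound $\gO(\tfrac{\log n}{\log p}\imul(\log p)) = \gO(\log n\log\log p)$ per term, which after substituting $\log p = \gO(\log(\tfrac{1}{\epsilon}T\log n))$ gives exactly the stated bit complexity.
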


\begin{proof}
Step~3 dismisses two trivial mistakes and ensures that $n$ is a bound on the degree of each polynomial.

If $H = FG$, the algorithm always returns True.  Otherwise, there are two sources of failure.  Either $X^p-1$ divides $H-FG$.
Since this polynomial has at most $\#H + \#F\#G$ terms, this failure occurs with probability at most $\frac{10\epsilon_1}{3}$ by
Proposition~\ref{prop:choixdep}.  Or $X^p-1$ does not divide $H-FG$ but the modular product verification fails. This occurs with
probability at most $\epsilon_2$.  Altogether, the failure probability is at most
$\frac{10\epsilon_1}{3} + (1-\frac{10\epsilon_1}{3})\epsilon_2\leq\epsilon$.

To analyse the complexity, we consider $\epsilon_1,\epsilon_2\sim\epsilon$ (for example $\epsilon_1=\frac{3\epsilon}{20}$ and
$\epsilon_2=\frac{\epsilon}{2}$).  Let us remark that $p=\gO(\frac{1}{\epsilon}T^2\log n)$.  To get the prime $p$, Step~5 requires
only $\gO(\log\frac{1}{\epsilon}\log^3p\log^2\log p)$
bit operations by Proposition~\ref{prop:rdprime}. This gives the announced complexity once $\log
p$ is replaced by $\gO(\log(\frac{1}{\epsilon} T\log n))$. 

The operations in Step~6 are $T$ divisions by $p$ on integers bounded by $n$.  Their cost is $\gO(T\frac{\log n}{\log p}\imul(\log
p))=\gO(T\log n\log\log p)$ bit operations, that is $\gO(T \log n\log\log (\frac{1}{\epsilon}T\log n)))$, plus
$T$ additions in $\gr$.

In Step~7, $F_p$, $G_p$ and $H_p$ have degree $p=\gO(\frac{1}{\epsilon}T^2\log n)$ and at most $T$ monomials.  They are still
sparse and we can use the sparse version of Theorem~\ref{thm:verifmod} with $P=X^p-1$.  The verification of
$H_p = F_pG_p\bmod X^p-1$ thus requires $\gO(T\log p)=\gO(T\log(\frac{1}{\epsilon}T\log n))$ operations in $\gr$.  Other steps
have negligible cost.
\end{proof}

To clarify the complexity, we will use the notation $\gOeps(f(n))$ as a shortcut for $\gO(f(n)\log^k\frac{1}{\epsilon})$ for some
$k$. 
Using this notation, the complexity of Algorithm~\ref{algo:verif} becomes $\gOeps(T\log(T\log n))$ operations in $\gr$ plus
$\gOeps(T\log n\log\log (T\log n)))$ bit operations as getting the prime $p$ is logarithmic in $T$ and $\log n$.\medskip

The rest of the section is dedicated to the bit complexity analysis of this algorithm over integers or finite fields.  Our goal is
to have bit complexities that are as close as possible to linear. To ease the comparison with truly linear complexity, we express
these bit complexities in terms of the total bit size $s$ of the input. A degree-$n$ polynomial with $T$ monomials has bit size
$s = \gO(T(\log n + \log q))$ if it has coefficients in $\cfq$, and $s = \gO(T(\log n + \log C))$ if it has coefficients in $\gz$
of absolute value at most $C$.

We first note that reducing the input polynomials modulo $X^p-1$ at Step~6 is already non-linear. Indeed, we proved that this step has bit complexity $\gOeps(T\log n\log\log(T\log n))$, which is $\gOeps(s\log\log s)$. We shall prove that in some cases, this step is actually the dominant term in the complexity.

We begin with the analysis over the integers.

\begin{cor} \label{cor:algosurz} Let $F$, $G$ and $H\in\gz[X]$ of degree at most $n$, with norm at most $C$ and sparsity at
  most $T$.
  Then Algorithm~\ref{algo:verif} 
	
        has bit complexity
	$\gOeps(s\log s\log\log s)$, where
  $s = T(\log n+\log C)$ is the input size. 
\end{cor}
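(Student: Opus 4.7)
The plan is to apply Algorithm~\ref{algo:verif} with $\gr=\gz$, replacing the generic \verify call in Step~7 by \verifyZ so as to handle coefficient blow-up via a prime reduction rather than evaluating at a huge integer $\alpha$. Writing $s = T(\log n + \log C)$ for the input size, I split the bit-cost analysis into three parts: prime generation for $p$, the polynomial reductions modulo $X^p-1$ in Step~6, and the call to \verifyZ in Step~7.

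Prime generation returns $p = \gOeps(T^2\log n)$ at cost $\gOeps(\log^3(T\log n)\log^2\log(T\log n))$ bit operations by Proposition~\ref{prop:rdprime}, which is polylogarithmic in $s$ and hence absorbed. The reduction of $F,G,H$ modulo $X^p-1$ amounts to reducing each of the $T$ exponents (of bit size $\log n$) modulo $p$ and merging colliding coefficients; this costs $\gO(T\imul(\log n))$ bits for the $T$ integer divisions plus $\gO(T\log(TC))$ bits to sum collisions, giving a total of $\gOeps(s\log\log s)$. Note also that each coefficient of $F_p, G_p, H_p$ is bounded by $TC$, since at most $T$ original coefficients of magnitude at most $C$ can collapse into one.

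The main contribution comes from the \verifyZ call, applied to $F_p,G_p,H_p$ of degree at most $p$, sparsity at most $T$, norm at most $TC$, with modulus $P = X^p-1$. Here $\#P = 2$ and $\gamma = 1$, so $\lceil 1/\gamma - 1\rceil = 0$, and the sparse part of Theorem~\ref{thm:sparsemodverifZ} yields
\[
\gOeps\!\Bigl(\bigl(T\log p + T\tfrac{\log(TC)}{\log(\tfrac{p}{\epsilon}\log(TC))}\bigr)\imul(\log(\tfrac{p}{\epsilon}\log(TC)))\Bigr).
\]
Using $\log p = \gOeps(\log T + \log\log n)$, $\log(\tfrac{p}{\epsilon}\log(TC)) = \gOeps(\log s)$, and $\imul(m) = \gO(m\log m)$, this becomes
\[
\gOeps\bigl(T(\log T+\log\log n)\,\log s\,\log\log s \;+\; T\log(TC)\,\log\log s\bigr).
\]
Both $T(\log T+\log\log n)$ and $T\log(TC)$ are $\gO(s)$ (since $\log T \le \log n$ and $\log(TC) \le \log n + \log C$), so this simplifies to $\gOeps(s\log s\log\log s)$.

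Summing the three contributions gives the announced bound. The only delicate point is checking that the worst-case norm $TC$ of the reduced polynomials does not inflate the logarithmic factors past $\log s$; this holds because $\log(TC) = \gO(\log s)$, keeping the second term in the \verifyZ bound strictly below $\gOeps(s\log s\log\log s)$. Everything else is routine bookkeeping of logarithms.
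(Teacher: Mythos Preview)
Your proof is correct and follows essentially the same route as the paper: you replace Step~7 by \verifyZ with modulus $X^p-1$ (so $\#P=2$, $\gamma=1$), bound the norm of $F_p,G_p,H_p$ by $TC$, and then invoke the sparse estimate of Theorem~\ref{thm:sparsemodverifZ}. Your bookkeeping of the logarithms matches the paper's, and your explicit treatment of Step~6 and of the prime-generation cost is slightly more detailed than what the paper writes out.
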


\begin{proof}
        The modification only concerns Step~7, where we use Theorem~\ref{thm:sparsemodverifZ} for the modular product verification with

        $P=X^p-1$ 
        and $F_p$, $G_p$, $H_p$ that have sparsity $T$ and norm $TC$.
	So this step costs 
        \[\gOeps(T\log p\imul(\log(p\log C)) + T\log(TC)\log\log(p\log TC)).\]
        Since $T\le n$, $T\log p = \gOeps(T\log n) = \gOeps(s)$. And $\log(p\log C) = \gOeps(\log(T\log n\log C)) = \gOeps(\log(T\log n) + \log\log C) = \gOeps(\log s)$. Thus the first term is $\gOeps(s\log s\log\log s)$.
        Also, $T\log(TC) = \gO(T\log nC) = \gO(s)$.

        As $\log(p\log TC) = \gOeps(\log(T\log n)+\log\log C) = \gOeps(\log S)$, the second term is 
        $\gOeps(s\log\log s)$.

        Since Step~6 is unchanged and has bit complexity $\gOeps(s\log\log s)$, the result follows.
\end{proof}

The complexity is actually better for very sparse polynomials.

\begin{rem}\label{rem:verifrèscreusedansz}
  If $F$, $G$, $H\in\gz[X]$ of bit size $s$ have sparsity at most $T = \Theta(\log^k n)$ for some $k$, Algorithm~\ref{algo:verif}
  has bit complexity $\gOeps(s\log\log s)$.
\end{rem}

\begin{proof}
The input size is $s=\Theta(\log^{k+1}n+\log^kn\log C)$. In this case, $\log p=\gOeps(\log\log n)$. In the previous proof, there is one dominant term of order $\gOeps(s\log s\log\log s)$, while the other terms are already of order $\gOeps(s\log\log s)$. It is sufficient to prove that with the new assumption, the dominant term is also $\gOeps(s\log\log s)$.

The dominant term $\gOeps(s\log s\log\log s)$ in the complexity comes from the term $\gOeps(T\log p\imul(\log(p\log C)))$. Since $\log(p\log C) = \gOeps(\log\log n+\log\log C)$, this dominant term becomes 
\[\gOeps(\log^k n\log\log n(\log\log n+\log\log C)\log(\log\log n+\log\log C)).\]
Note that $\log\log n$ and $\log\log C$ are both $\gO(\log s)$, therefore this can be rewritten $\gOeps(\log^k n\log^2s \log\log s)$. Since $\log^k n = \gO(s^{k/(k+1)})$, this yields $\gOeps(s\log\log s)$.
\end{proof}

We now switch to polynomials over finite fields. There are more cases to consider, depending on the size of the field with respect
to the degree and sparsity of the inputs.  The first easy case is the case of large finite fields: If there are enough points for
the evaluation, the generic algorithm keeps its guarantee of success while offering a quasi-linear bit complexity.

\begin{cor}\label{cor:grandcf} 
  Let $F$, $G$ and $H\in\cfq[X]$ of degree at most $n$ and sparsity at most $T$ where
  $q>\frac{2}{\epsilon_1\epsilon_2}(\#F\#G+\#H)\ln n$.  Then Algorithm~\ref{algo:verif} has bit complexity $\gOeps(s\log^2(s))$
  where $s = T(\log n+\log q)$ is the input size.
\end{cor}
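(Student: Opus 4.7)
The plan is to reuse the correctness and algebraic cost analysis already carried out in Theorem~\ref{thm:algogénéral} and only tighten the bit-complexity bookkeeping for the finite-field setting. The hypothesis $q > \frac{2}{\epsilon_1\epsilon_2}(\#F\#G+\#H)\ln n$ is chosen precisely so that $q \ge \frac{p-1}{\epsilon_2}$ whenever $p\in[\lambda,2\lambda]$ is the prime returned at Step~5; consequently the modular product verification at Step~7 may be run inside $\cfq$ itself, without resorting to an extension field. The modulus $P = X^p-1$ is a binomial, hence $\#P = 2$ and its gap parameter is $\gamma = 1$, so $\lceil 1/\gamma - 1\rceil = 0$.

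First I would bound the prime $p$: Proposition~\ref{prop:rdprime} yields $\log p = \gOeps(\log(T\log n)) = \gOeps(\log s)$ and the cost of Step~5 is polylogarithmic in $s$ and in $1/\epsilon$. Step~6 requires $T$ reductions of exponents (at most $n$) modulo $p$, i.e.\ $\gOeps(T\log n\log\log(T\log n))$ bit operations, together with at most $T$ additions in $\cfq$ to merge colliding monomials, costing $\gO(T\log q)$ more bit operations. Summing, Step~6 runs in $\gOeps(s\log\log s)$ bit operations, since both $T\log n$ and $T\log q$ are bounded by $s$.

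For Step~7, I apply the sparse version of Theorem~\ref{thm:verifmod} to $F_p$, $G_p$, $H_p$ modulo $X^p-1$. With $\#P = 2$ and $\gamma = 1$, the theorem asserts $\gO((\#F+\#G+\#H)\log p) = \gO(T\log p)$ operations in $\cfq$. Each operation in $\cfq$ is $\gO(\imul(\log q)) = \gO(\log q\log\log q)$ bit operations, so Step~7 costs
\[
\gOeps\bigl(T\log s \cdot \log q \log\log q\bigr)
\]
bit operations. Using $T\log q \le s$ and $\log\log q \le \log s$, this simplifies to $\gOeps(s\log^2 s)$, which dominates the contributions of Steps~5 and~6. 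This yields the announced bound.

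The only delicate point is verifying that the hypothesis on $q$ transfers correctly into the requirement of Theorem~\ref{thm:verifmod} for Step~7, so that no field extension is used; everything else is straightforward bookkeeping with the identities $T\log q\le s$ and $T\log n\le s$. I expect no serious obstacle beyond keeping track of which terms collapse into $\gOeps(\cdot)$ and verifying that none of the lower-order contributions (prime generation, exponent reduction, coefficient merging) exceed $\gOeps(s\log^2 s)$.
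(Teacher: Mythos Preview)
Your proposal is correct and follows essentially the same approach as the paper's own proof. The paper's argument is simply a terser version of yours: it notes that it suffices to analyse Step~7 (the earlier steps having already been bounded in Theorem~\ref{thm:algogénéral}), counts Step~7 as $\gOeps(T\log(T\log n)\log q\log\log q)$ bit operations, and then simplifies via $T\log q = \gO(s)$, $\log(T\log n) = \gO(\log s)$, and $\log\log q = \gO(\log s)$ to reach $\gOeps(s\log^2 s)$---exactly your chain of inequalities.
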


\begin{proof}
    It is still enough to analyse Step~7. 
  Each ring operation in $\cfq$ costs
  $\gO(\log(q)\log\log(q))$ bit operations which implies that the bit complexity of Step~7 is
  $\gO_\epsilon(T\log(T\log n)\log(q)\log\log(q))$. Since both $T\log q$ and $T\log n$ are $\gO(s)$ and
  $\log\log q = \gO(\log s)$, the result follows.
\end{proof}

If the field is not large enough, we need to use some extension field. This slightly modifies the algorithm but actually yields a
better complexity bound than for large finite fields. This is due to the fact that in that case, we choose an extension of the
exact appropriate size. Note that the probability of success remains unchanged.

\begin{cor}
  Let $F$, $G$ and $H\in\cfq[X]$ of degree at most $n$ and sparsity at most $T$ where
  $q<\frac{2}{\epsilon_1\epsilon_2}(\#F\#G+\#H)\ln n$.
  Algorithm~\ref{algo:verif} 
   has bit complexity $\gOeps(s\log s\log\log s)$, where $s = T(\log n+\log q)$ is the input size.
\end{cor}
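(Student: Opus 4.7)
The strategy is to modify Algorithm~\ref{algo:verif} so that Step~7 invokes Corollary~\ref{cor:verifmodext} in place of Theorem~\ref{thm:verifmod}, working over an extension $\cf_{q^d}$ of degree $d=\lceil\log_q(p/\epsilon_2)\rceil$. Since $q^d=\gOeps(T^2\log n)$, the extension contains enough evaluation points and the probability analysis from the proof of Theorem~\ref{thm:algogénéral} transfers verbatim. Proposition~\ref{prop:irrpoly} will supply an irreducible polynomial of degree $d$ over $\cfq$; because $d\log q=\gOeps(\log(T\log n))=\gOeps(\log s)$, this construction is polylogarithmic in the input size and therefore negligible.

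The core of the proof is the accounting for Step~7. With $P=X^p-1$ we have $\#P=2$ and gap parameter $\gamma=1$, so $\lceil 1/\gamma-1\rceil=0$, and the sparse version of Corollary~\ref{cor:verifmodext} gives $\gOeps(T\log p)$ operations in $\cf_{q^d}$, plus a negligible $\gO(T)$ operations in $\cfq$ coming from the call to \scoef. Each operation in $\cf_{q^d}$ costs $\gO(\mul_q(d))=\gO(d\log q\log(d\log q))$ bit operations, which I would simplify to $\gOeps(\log s\log\log s)$ using $d\log q=\gOeps(\log s)$.

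To finish I would use $T\leq n$ to obtain $T\log p=\gOeps(T\log T+T\log\log n)=\gOeps(T\log n)=\gOeps(s)$, so that multiplying the two bounds gives the dominant contribution $(T\log p)\cdot\mul_q(d)=\gOeps(s\log s\log\log s)$. The remaining steps of Algorithm~\ref{algo:verif} are handled as in Theorem~\ref{thm:algogénéral}: Step~5 contributes only polylog in $s$ via Proposition~\ref{prop:rdprime}, and Step~6 (the $T$ modular reductions of $\log n$-bit exponents modulo $p$) contributes $\gO(T\log n\log\log p)=\gOeps(s\log\log s)$ bit operations, both negligible.

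The main obstacle I anticipate is exactly this bookkeeping: lining up the three bounds $T\log p=\gOeps(s)$, $d\log q=\gOeps(\log s)$ and $\mul_q(d)=\gOeps(\log s\log\log s)$ so that their composition collapses to the target bound, while absorbing every $\log(1/\epsilon)$ factor into $\gOeps$ and ensuring that no hidden dependence on $\log q$ or $\log\log q$ survives that would spoil the simplification in the regime where $q$ may be as small as a constant.
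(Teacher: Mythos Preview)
Your proposal is correct and follows essentially the same approach as the paper. The only cosmetic difference is that the paper invokes the pre-packaged Corollary~\ref{cor:smallfieldbyextension} (which already bundles the choice of extension degree $d=\gO(\log_q\tfrac{p}{\epsilon})$, the call to Proposition~\ref{prop:irrpoly}, and the sparse case of Corollary~\ref{cor:verifmodext}) and then simplifies the resulting expression $\gOeps(T\log p\,\mul_q(\log_q p))$ exactly as you do, whereas you unpack those ingredients explicitly; the key estimates $T\log p=\gOeps(s)$ and $\mul_q(d)=\gOeps(\log s\log\log s)$ and the treatment of Steps~5 and~6 are identical.
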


\begin{proof} \sloppy
  By Corollary~\ref{cor:smallfieldbyextension} Step~7 requires
  $\gOeps(T\log p\mul_q(\log_q p)+(\log_q p)^2\mul_q(\log_q p)(\log q+\log\log_q p))$ operations in $\cfq$. Since
  $\log p=\gOeps(\log(T\log n))=\gOeps(\log s)$ and $\log q=\gOeps(\log s)$ too, the second term is polylogarithmic in $s$.  As
  $\log p=\gOeps(\log (T\log n))$ the first term is $\gOeps(T\log (T\log n) \mul_q(\log_q(T\log n)))$. Since
  $\log(T\log n) = \gO(\log n)$, $T\log(T\log n) = \gO(s)$. Furthermore, $\log(T\log n) = \gO(\log s)$ and the first term
  simplifies to $\gOeps(s\mul_q(\log_q s))$. Now $\mul_q(\log_q s) = \gO(\log s\log\log s)$. Altogether
  $T\log p\mul_q(\log_q p) = \gOeps(s\log s\log\log s)$.
The result follows.
\end{proof}

Again, we note that for very sparse polynomials over some fields, the complexity is even better.

\begin{rem}
  Let $F$, $G$ and $H\in\cfq[X]$ of degree at most $n$ and sparsity at most $T$, where
  $q<\frac{2}{\epsilon_1\epsilon_2}(\#F\#G+\#H)\ln n$.  The bit complexity of Algorithm~\ref{algo:verif} is
\begin{itemize}
	\item[(i)] $\gOeps(s\log s)$ if $\log_q (T\log n)=\gO(1)$,
	\item[(ii)] $\gOeps(s\log\log s)$ if $T = \Theta(\log^k n)$ for some constant $k$.
\end{itemize}
\end{rem}

\begin{proof} \sloppy
  The most significant term in the complexity is $\gOeps(T\log (T\log n) \mul_q(\log_q(T\log n))$. In the first case, it becomes
  $\gOeps(T\log(T\log n)\mul_q(1)) = \gOeps(T\log(T\log n)\log q\log\log q)$. As $\log q = \gOeps(\log\log n)$,
  $T\log q\log\log q = \gOeps(s)$ and the complexity becomes $\gOeps(s\log s)$.
  In the second case, 
  the most significant term can be bounded by $\gOeps(T\log^3(T\log n))$. But $T = \gO(s^{k/(k+1)})$, and this most significant
  term becomes $\gOeps(s)$ only. The global bit complexity is then dominated by Step~6 and is $\gOeps(s\log\log s)$.
\end{proof}

To conclude, the bit complexity of Algorithm~\ref{algo:verif} over integers or finite fields range from $\gOeps(s\log\log s)$ in
the most favorable cases, to $\gOeps(s\log^2 s)$ in more complicated situations. We note that in the best cases, the complexity is
actually dominated by the cost of the modular reduction of the exponents of the input polynomials.

\begin{rem}
  Verification of a sparse product is always faster than computing the sparse product over $\gz$ or $\cfq$.
\end{rem}

\begin{proof}
  Assuming $s = T(\log n+\log \zeta)$ to be the input size of the sparse polynomial $F$,$G$ and $H$. Over $\gz$ we have
  $\log \zeta = \log C$ where C is the norm of the coefficients, while $\log \zeta= \log q$ when in $\cfq$. The best know result
  for computing the product $FG$ needs $\gOeps(s\log^2(s)\log^2(T)(\log T + \log\log s))$ bit operations
  \cite{giorgi2020:sparsemul}. Taking the worst case complexity for our verification yields a cost of $\gOeps(s\log^2 s)$. This
  means that we are always faster by a factor $\gO(\log^2(T)(\log T + \log\log s))$. Of course, for some small finite fields we
  are even beyond this value.
\end{proof}

\end{document}